\documentclass[aps,prx,twocolumn]{revtex4-2}

\usepackage{amsmath}
\usepackage{amssymb}
\usepackage{amsthm}
\usepackage{graphicx}
\usepackage{hyperref}
\usepackage{algpseudocode}
\usepackage{placeins}
\usepackage{xcolor}
\usepackage{CJKutf8}

\theoremstyle{plain}
\newtheorem{theorem}{Theorem}
\newtheorem{lemma}[theorem]{Lemma}
\newtheorem{corollary}[theorem]{Corollary}
\theoremstyle{definition}
\newtheorem{definition}[theorem]{Definition}
\theoremstyle{remark}
\newtheorem{remark}[theorem]{Remark}
\newcommand{\Opt}{\operatorname{Opt}}

\makeatletter
\def\skiplistcomma{%
  \@ifundefined{@listcomma}{}{%
    \global\let\saved@listcomma\@listcomma
    \gdef\@listcomma{\global\let\@listcomma\saved@listcomma}
  }%
}
\makeatother

\begin{document}

\title{Encoding Circuit Satisfiability in Rydberg Atom Arrays}

\author{Haotian Ji}
\thanks{These authors contributed equally to this work.}
\author{Zhangjie Qin,$^{*,}$\skiplistcomma}
\email[Contact author: ]{qinzhangjie@quantumsc.cn}
\author{Zheng An}
\author{Bowen Yan}
\author{Daoheng Niu}
\author{Kunzhe Dai}
\author{Jingkai Fang}
\author{Dongyang Cao}
\author{Jiangyu Cui}
\email[Contact author: ]{cuijy@quantumsc.cn}
\affiliation{Quantum Science Center of Guangdong-Hong Kong-Macao Greater Bay Area, Shenzhen 518045, China}

\begin{abstract}
   Rydberg atom arrays natively encode the maximum-weight independent set (MWIS) problem through the blockade
mechanism, so the Boolean circuit satisfiability problem (Circuit-SAT) can be
brought onto the platform once it is reduced to MWIS. The conventional
encoding of Circuit-SAT in the Rydberg atom array proceeds
through conjunctive normal
form (CNF) and incurs a
substantial atom overhead. We introduce
CAMERA (Circuit-SAT
Atom-efficient MWIS Encoding for Rydberg Arrays), a method that
provides MWIS encodings of Circuit-SAT instances on the
king subgraph geometry of the array. CAMERA represents each logic gate as
a compact weighted gadget and assembles the gadgets with a placement and
routing compiler inspired by very large scale integration (VLSI) design. On random multi-gate benchmarks, the
direct encoding route lowers the atom cost relative to the CNF
route by an average factor of $22.4 \pm 1.8$. To
demonstrate that the encoding extends from individual weighted gadgets to
multi-gate arithmetic blocks, we compile a full adder and a multiplier,
verifying each against its complete truth table by exact classical
ground state calculations. We further showcase solving a representative Circuit-SAT
instance end-to-end, from gate level compilation through a closed system
tensor-network simulation of a hardware compatible annealing protocol on the
encoded $30$-atom instance to readout of a satisfying assignment. These
results establish a complete encoding and simulation workflow as a proof
of principle, and a concrete route toward solving a broader family of
combinatorial problems on Rydberg atom arrays.
\end{abstract}

\maketitle

\section{Introduction}

The practical reach of quantum optimization will be set not only by the scale
and coherence of quantum processors, but also by how efficiently problems of
practical interest can be encoded in their native Hamiltonians. Before a
device can search for a low-energy state, the objective and constraints of the
original problem must be expressed using the interactions and controls
available on that platform. On analog hardware, this encoding is also a
physical construction. A reduction may be efficient in complexity-theoretic
terms and still be expensive in hardware: an auxiliary variable can require
an additional atom, a nonlocal constraint can require a chain of mediators,
and nonuniform weights can require site-resolved control. These costs are
explicit in Rydberg quantum-wire and arbitrary-connectivity
constructions~\cite{Kim2022RydbergWires,Nguyen2023ArbitraryConnectivity}, and
in experiments that implement weighted optimization through local light
shifts~\cite{deOliveira2025WeightedMWIS}. Application-oriented mappings and
studies of Rydberg encoding strategies further show that the chosen
representation strongly affects the size and structure of the instance that
can be implemented~\cite{Wurtz2024IndustryISet,Bombieri2025Encoding}. A
problem that is compact in its original description can therefore become much
larger after it is mapped to the device. Preserving the structure of the source
problem is consequently a central task of quantum encoding, rather than a
secondary software concern.

Rydberg atom arrays provide a particularly direct setting in which to address
this encoding problem. In the
strong-blockade regime, nearby atoms cannot be excited simultaneously, and
the low-energy configurations of the array encode independent sets of its
interaction
graph~\cite{Pichler2018QOMIS,Saffman2010RydbergQI,Browaeys2020RydbergReview}.
Site-dependent detunings extend this correspondence to the
MWIS problem by assigning an energy
bias to each vertex~\cite{deOliveira2025WeightedMWIS}. The platform has
progressed from one-dimensional chains of $51$ atoms to programmable
two-dimensional arrays and tweezer registers containing more than six
thousand atoms~\cite{Bernien2017ManyBody51,Ebadi2021Nature256,
	Scholl2021AFM2D,Bluvstein2022CoherentTransport,
	Bluvstein2024LogicalProcessor,Manetsch2025_6100Atoms}. Experiments have
realised maximum independent set (MIS) on unit-disk, Platonic, and king-subgraph (KSG)~\cite{Ebadi2022ScienceMIS,Byun2022Platonic,Kim2024MISDataset},
demonstrated quantum wires between separated
vertices~\cite{Kim2022RydbergWires}, and implemented weighted optimization
with local light shifts~\cite{deOliveira2025WeightedMWIS}. On selected hard
instances, Rydberg optimization has also shown superlinear scaling of the
exact-solution probability relative to a specified simulated-annealing
baseline~\cite{Ebadi2022ScienceMIS}. Application-oriented encodings,
local-detuning schemes, and hardness analyses have further developed this
hardware-native optimization model~\cite{Wurtz2024IndustryISet,
	Farouk2024MWIS,Yeo2024LocalDetuning,Cain2023FlatLandscapes,
	Andrist2023Hardness,Bombieri2025Encoding}. MWIS is therefore a useful target
model for encoding structured discrete problems to neutral atom hardware.

Circuit-SAT is a stringent test of such an encoder. Given a Boolean
circuit with a designated output, Circuit-SAT asks whether some input
assignment sets that output to one. It is
NP-complete~\cite{Cook1971TheoremProving,Karp1972Reducibility} and appears
throughout electronic-design automation, including combinational equivalence
checking and formal hardware
verification~\cite{Li2005FormalVerification}. A gate-level circuit is more
than a Boolean formula. It is a directed acyclic graph that records gate
boundaries, fan-out, and the reuse of intermediate signals. These features
make the representation compact and give a physical compiler useful
information about where logic should be placed and how signals should be
routed.

The standard route from Circuit-SAT to a Rydberg MWIS instance does not retain
this structure throughout the reduction. It first converts the circuit to an
equisatisfiable CNF formula through the Tseitin
transformation~\cite{Tseitin1983Complexity}, maps the clauses to an
independent-set graph, and embeds the graph's nonlocal edges with auxiliary
quantum wires~\cite{Kim2022RydbergWires,Nguyen2023ArbitraryConnectivity}.
The construction is polynomial and general, but it pays for this generality
before the geometry of the original circuit can be used. Tseitin variables,
literal vertices, consistency edges, and wire chains all become atoms in the
final layout. The physical size of the instance can therefore be dominated by
the intermediate representation rather than by the circuit itself.
Gadget-based encodings of $3$-SAT and graph coloring, automated gadget search,
and weighted copy-and-crossing constructions show that logical constraints
can instead be built from local modules~\cite{Jeong2023RydbergSAT,
	Angkhanawin2026GraphColoring,Pan2025TriangularGadgetSearch,
	Nguyen2023ArbitraryConnectivity}. What is missing is an automatic route from
a gate-level netlist to a blockade-valid atom layout that keeps the circuit
topology available throughout placement and routing.

Here we present CAMERA, a gate-level encoding of Boolean circuits into
weighted king-subgraph instances for Rydberg atom arrays. The overall
construction is shown in Fig.~\ref{fig:overview}. Each supported logic gate is
represented by a compact weighted gadget. Fan-out and crossing gadgets provide
the required wiring operations, and weighted atom chains carry signals between
separated ports. A feasibility-first, VLSI-inspired placement-and-routing
procedure assembles these components on the king grid and rejects layouts with
unintended blockade interactions. By bypassing the CNF representation, the
compiler lets physical design act directly on the original circuit topology.
For every validated layout, the MWIS manifold projects onto the input-output
relation of the circuit, so the ground states retain the logical meaning of
the source netlist.

The compiled graph also provides an exact readout of the Circuit-SAT answer. Branching the output port to one decides the instance. The optimal weight of the branched graph is compared with a reference weight, the optimal weight of the full graph. If the branched optimum, with the output port weight added back, equals the reference weight, the instance is satisfiable and the input port occupations spell a witness, whereas a strict deficit certifies unsatisfiability(Sec.~\ref{subsec:method-solving}). Branching the input ports instead
turns the same layout into a forward evaluator of the circuit
(Sec.~\ref{subsec:method-forward}).

We verify the construction from elementary gates to composed arithmetic
circuits. Exact ground-state calculations establish the complete truth-table
readout of a $30$-atom three-gate instance, an $85$-atom full adder, and a
$165$-atom two-bit multiplier. In the random multi-gate benchmark
reported here, the direct route uses roughly twenty-fold fewer atoms, a
reduction of about $95\%$, than the CNF-mediated baseline; for the full adder
and multiplier, the reduction is about thirty-fold. A closed-system
tensor-network simulation of the $30$-atom instance reaches the target MWIS
manifold in $58\%$ of projective samples under a $4\,\mu\mathrm{s}$
neutral-atom annealing schedule. Representative compiled layouts are further
tested numerically, linking the exact graph construction to physical state
preparation and readout. The calculations, compiler benchmarks, and
simulations together show that the gate-level representation survives the
full path from a Boolean netlist to a Rydberg array while reducing the physical
footprint by more than an order of magnitude.

The rest of the paper is organised as follows. Section~\ref{sec:background}
reviews Circuit-SAT, the Rydberg MWIS mapping, and the CNF-mediated baseline.
Section~\ref{sec:method} presents CAMERA, the branching
procedures, and the placement and routing compiler.
Section~\ref{sec:results} reports the resource comparison, the compiled
examples, and the dynamical and experimental results, and
Sec.~\ref{sec:conclusion} concludes with an outlook.

\section{Background}
\label{sec:background}
This section provides the background knowledge for the encoding
developed in this work and fixes the vocabulary used throughout the
paper. Section~\ref{subsec:csat-problem} defines the Circuit-SAT
problem. Section~\ref{subsec:rydberg-arrays} reviews the MWIS problem
and its native realization on the Rydberg atom array, and
Sec.~\ref{subsec:csat-to-mis} recalls the conventional CNF-mediated
reduction that serves as the baseline for comparison.

\begin{figure*}[t]
\centering
\includegraphics[width=\textwidth]{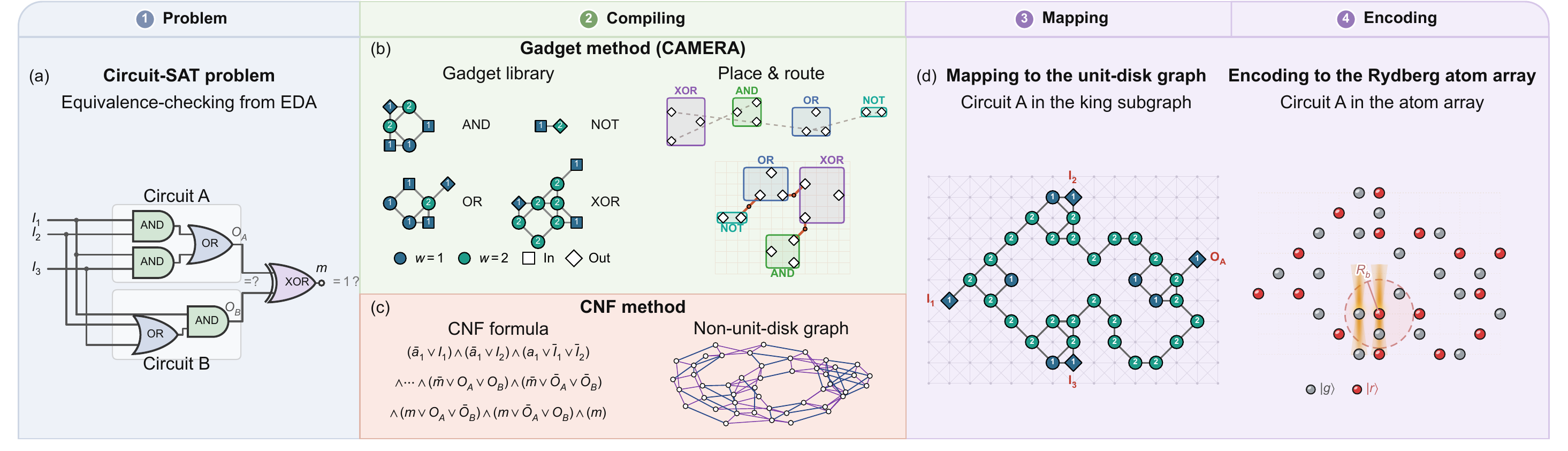}
\caption{Overview of the encoding framework, from an electronic-design
consistency check to a Rydberg native MWIS problem. The gadget-based route of this work (CAMERA) runs
(a)\,$\to$\,(b)\,$\to$\,(d); the conventional CNF route,
(a)\,$\to$\,(c).
\textbf{(a)}~The input: combinational equivalence checking by a
\emph{miter}. Two structurally different circuits $A$ and $B$ share the inputs $I_{1},I_{2},I_{3}$ and have their
outputs compared by an XOR gate, whose output $m=O_{A}\oplus O_{B}$
satisfies $m=1\Leftrightarrow A\neq B$.
\textbf{(b)}~The gadget method of this work. Left, the weighted KSG gadget library: one gadget per Boolean primitive, whose MWIS configurations
reproduce the gate truth table. Color encodes the on-site weight
$w\in\{1,2\}$, realized as a local detuning. Right, VLSI style
placement and routing of a generic four-gate netlist, from the unplaced
netlist with dashed airline connections (top) to the placed layout with
routed wire atoms (bottom).
\textbf{(c)}~The conventional route
bypassed by this work: the circuit is flattened into a Tseitin CNF formula (left) and mapped to an independent set problem
on a conflict graph (right). 
\textbf{(d)}~Mapping and encoding, shown for circuit $A$ of~(a). Left,
the compiled weighted KSG on the faint king lattice fabric. Right, its
Rydberg atom array realization: each atom sits in an optical tweezer, the
blockade disk of radius $R_{b}$ enforces the KSG edges, and red
$|r\rangle$ atoms mark one exact MWIS optimum.}
\label{fig:overview}
\end{figure*}

\subsection{The Circuit-SAT Problem}
\label{subsec:csat-problem}
A standard Boolean circuit $C$ is represented by a finite labeled directed
acyclic graph (DAG) $D_C=(V_C,A_C)$, whose vertices are inputs or gates and whose directed arcs are wires. The conventional De Morgan gate basis is
$\mathcal B_0=\{\land,\lor,\neg\}$, with binary AND and OR and unary NOT. This construction also admits binary XOR and therefore uses $\mathcal B=\mathcal B_0\cup\{\oplus\}$. For a given input assignment $x$, evaluating the gates of its DAG $D_C$ in topological order yields the output $F_C(x)$, thereby realizing the Boolean mapping $F_C:\{0,1\}^n\to\{0,1\}^m$ implemented by the circuit.

Circuit-SAT is an NP-complete decision problem with
important applications in electronic design automation (EDA). Given a Boolean
circuit $Q:\{0,1\}^{n_Q}\to\{0,1\}$, it asks whether there exists an input
$x\in\{0,1\}^{n_Q}$ such that $Q(x)=1$\cite{Cook1971TheoremProving,Karp1972Reducibility}. One important EDA
application of Circuit-SAT is combinational equivalence checking (CEC), which,
given an implementation circuit $C$ and a reference circuit $C_{\mathrm{ref}}$
with corresponding inputs and outputs, asks whether $F_C(x)=F_{C_{\mathrm{ref}}}(x)$ for every input $x\in\{0,1\}^n$.

\subsection{MWIS on Rydberg Atom Arrays}
\label{subsec:rydberg-arrays}
The second ingredient of the construction is the MWIS
problem. Given an undirected
graph $G=(V,E)$, distinct from
the circuit graph of Sec.~\ref{subsec:csat-problem}, with a positive weight
$w_{v}$ attached to each vertex $v\in V$, a subset
$S\subseteq V$ is \emph{independent} if no two of its
vertices are joined by an edge of $E$; the MWIS problem asks
for an independent set with largest total weight,
\begin{equation}
W^{\star}(G)\;=\;\max_{\substack{S\subseteq V\\ S\ \text{independent}}} W(S),
\qquad
W(S)\;=\;\sum_{v\in S} w_{v},
\label{eq:mwis}
\end{equation}
and we write $S^{\star}$ for an optimal set attaining $W^{\star}(G)$.

This problem can be solved natively on a Rydberg atom array, in which
each atom is held in an optical tweezer and driven
coherently between its ground state and a Rydberg state $|r\rangle$.
The driven array realizes an effective Ising type Hamiltonian
$H/{\hbar}=\sum_{i}\tfrac{\Omega_{i}}{2}\sigma_{i}^{x}-\sum_{i}\Delta_{i}n_{i}
+\sum_{i<j}V_{ij}n_{i}n_{j}$, where $\Omega_{i}$ is the Rabi frequency
applied to atom $i$, $\Delta_{i}$ is its local detuning, $n_{i}$ is its
Rydberg state occupation operator, and $\sigma_{i}^{x}$ is the Pauli-$x$
operator on its ground--Rydberg two-level
system~\cite{Jaksch2000FastGates,Saffman2010RydbergQI,Browaeys2020RydbergReview}.
The van der Waals interaction $V_{ij}=C_{6}/d_{ij}^{6}$ falls off
steeply with the interatomic distance $d_{ij}$, so for atoms closer than a
characteristic blockade radius $R_{b}$ the doubly excited configuration is
shifted far out of resonance and simultaneous excitation is suppressed. This
is the Rydberg blockade~\cite{Jaksch2000FastGates,Lukin2001DipoleBlockade}
[Fig.~\ref{fig:overview}(d)], which has been controlled with high fidelity
and scaled to programmable arrays of hundreds of
atoms~\cite{Bernien2017ManyBody51,Levine2018HighFidelity,Ebadi2021Nature256,Bluvstein2022CoherentTransport}.
Because two adjacent atoms cannot be simultaneously excited in this limit, every
blockade compatible excitation pattern corresponds to an independent set
of the blockade graph [Fig.~\ref{fig:overview}(d)]. Driving the system
into its low energy manifold at positive detuning favors configurations
with as many excitations as the blockade permits, so the ground state of
the array encodes a MIS of the programmed
geometry~\cite{Pichler2018QOMIS,Ebadi2022ScienceMIS,Nguyen2023ArbitraryConnectivity}.
The same correspondence extends directly to MWIS: choosing the local
detuning as $\Delta_{i}=\lambda w_{i}$, with $\lambda>0$, makes the energy
gain from exciting atom $i$ proportional to its vertex weight $w_{i}$. The
ground state occupation pattern then represents an MWIS $S^{\star}$ of
Eq.~\eqref{eq:mwis}, as demonstrated experimentally using local light
shifts~\cite{deOliveira2025WeightedMWIS}.

In the hard-blockade limit, a Rydberg atom array provides a direct
physical encoding of MIS graphs~\cite{Clark1991UnitDiskGraphs,Pichler2018QOMIS,Ebadi2022ScienceMIS}.
Specifically, the atomic arrangement realizes $G_{\mathrm{MIS}}$
geometrically: each atom
corresponds to a vertex $i\in V$, and an edge
$(i,j)\in E$ is present whenever the interatomic distance
$d_{ij}$ is smaller than the blockade radius $R_{b}$. A graph constructed
in this way---vertices placed in the plane, adjacent exactly when their
separation falls below a fixed radius---is a \emph{unit disk
graph}~\cite{Clark1991UnitDiskGraphs}. Its adjacency is inherited
entirely from the layout, so a graph carrying arbitrary long-range
connectivity admits no such realization. Finding an MIS is NP-hard on
general graphs~\cite{Karp1972Reducibility} and remains NP-hard on unit
disk graphs~\cite{Clark1991UnitDiskGraphs}, so this geometric
restriction does not diminish the expressive power of the target
problem. One example of a unit disk graph is the KSG,
induced when the atoms occupy sites of the integer grid: $V$ is a
finite subset of $\mathbb{Z}^{2}$, and two sites are adjacent exactly
when their Chebyshev distance is one,
\begin{equation}
E=\bigl\{\{u,v\}\subset V:\,
0<\lVert u-v\rVert_{\infty}\le 1\bigr\}.
\label{eq:ksg}
\end{equation}
This lattice form is the standard geometry of current tweezer arrays,
and it is adopted throughout this paper: Circuit-SAT is compiled onto
exactly this hardware native MWIS encoding.

\subsection{The CNF-Mediated Mapping from Circuit-SAT to Maximum
Independent Set}
\label{subsec:csat-to-mis}
The two ingredients above are conventionally connected by a three-stage
chain of reductions through CNF
[Fig.~\ref{fig:overview}(c)]. The circuit is first rewritten as an
equisatisfiable CNF formula via the Tseitin
encoding~\cite{Tseitin1983Complexity}, which introduces one auxiliary
variable per gate output and a fixed set of clauses enforcing each
gate's input--output consistency [Fig.~\ref{fig:overview}(c), left]; the
formula is then mapped to MIS by the classical clause-clique
construction, in which each clause becomes a clique of literal vertices
and an edge joins every literal to each occurrence of its negation, so
that the formula is satisfiable if and only if the graph admits an
independent set containing one vertex from every
clique~\cite{deOliveira2025WeightedMWIS}. Because the inter-clause
consistency edges give this conflict graph arbitrary connectivity, in
violation of the unit-disk restriction of the blockade
[Fig.~\ref{fig:overview}(c), right], a final embedding stage must
flatten it onto the plane, mediating every non-local edge by an
auxiliary quantum-wire atom
chain~\cite{Kim2022RydbergWires,Nguyen2023ArbitraryConnectivity}.
Tseitin variables, clause cliques, and wire chains each inflate the atom
count, so the compiled instance climbs into the thousands of atoms
already for circuits of a few gates (quantified in
Sec.~\ref{sec:resources}); this CNF-mediated chain is the baseline
against which our construction is measured, whereas the route developed
in this work [Fig.~\ref{fig:overview}(b)] encodes the circuit directly
into a weighted king subgraph and is the subject of the next section.

\section{Method}
\label{sec:method}
Our approach consists of four components, presented in turn below.
Section~\ref{subsec:method-branching} introduces the graph level
branching primitive that pins a single bit of a compiled instance.
Section~\ref{subsec:method-solving} builds on it to encode a Boolean
circuit into a weighted king subgraph, with the theorems and the output
branching procedure that decide and solve the instance.
Section~\ref{subsec:method-compilation} describes the placement and
routing compiler that produces the physical layouts, and
Sec.~\ref{subsec:method-forward} closes with the input branching
variant that evaluates the compiled circuit in the forward direction.

\subsection{The Branching Primitive}
\label{subsec:method-branching}
All readout procedures in this work rest on a single graph operation,
which we call \emph{branching}: forcing a designated vertex $v$ into,
or out of, the independent set being optimized in
Eq.~\eqref{eq:mwis}. We write
$N[v]$ for the closed neighborhood of $v$, the vertex together with its
neighbors, and $G-U$ for the graph induced after deleting a vertex
subset $U$.

Either constraint reduces to an unconstrained optimum on a smaller
graph. Forcing $v$ into the independent set excludes every neighbor of
$v$, so the largest weight of an independent set containing $v$ is
$w_{v}+W^{\star}(G-N[v])$. Forcing $v$ out merely removes it as a
candidate, so the largest weight of an independent set avoiding $v$ is
$W^{\star}(G-v)$. Since every independent set either
contains $v$ or does not, the two branches together recover the
unconstrained optimum,
\begin{equation}
W^{\star}(G) \;=\; \max\bigl\{\, w_{v} + W^{\star}(G - N[v]),\;
W^{\star}(G - v) \,\bigr\}.
\label{eq:branch}
\end{equation}
This case distinction is the classical branching step at the core of
exact algorithms for the maximum independent set
problem~\cite{TarjanTrojanowski1977,FominKratsch2010}. Here branching pins one bit: deleting $N[v]$,
with the weight of $v$ credited, asserts the bit to be $1$, and
deleting $v$ alone asserts it to be $0$. Applied to the output port,
this primitive decides and solves the Circuit-SAT instance
(Sec.~\ref{subsec:method-solving}); applied to the input ports, it
evaluates the compiled circuit on a specified input assignment
(Sec.~\ref{subsec:method-forward}).

\subsection{Solving Circuit-SAT via the MWIS Solution}
\label{subsec:method-solving}
The circuit is compiled into a weighted king subgraph
[Sec.~\ref{subsec:rydberg-arrays}, Eq.~\eqref{eq:ksg}] in two steps.
First, each Boolean primitive of the basis $\mathcal{B}$ is encoded by a
compact weighted KSG \emph{gadget}: a cluster of atoms with integer
weights $w\in\{1,2\}$, among which designated \emph{port} atoms carry the
logical signals, whose maximum weight independent sets reproduce exactly
the truth table of the gate [Fig.~\ref{fig:overview}(b)]. Second, the
gadgets are composed following classical Boolean wiring: when a gate
output feeds a downstream input the two port atoms are identified into a
single atom of summed weight, or are joined by a wire
chain~\cite{Nguyen2023ArbitraryConnectivity}. In either case the
composition is subject to the \emph{clean edge condition}: the composite
carries no edge beyond those internal to the gadgets and wire chains.
The result is a weighted graph $G_{C}$ with port atoms
$P_{C}=(I_{1},\ldots,I_{n},O_{1},\ldots,O_{m})$. Every independent set $S$
assigns a bit to each port, $1$ if the port atom belongs to $S$ and $0$
otherwise; these bits read across the ports form the \emph{port
projection} of $S$. We write $T(G_{C},P_{C})$ for the port projections
attained by the maximum weight independent sets of $G_{C}$,
$\mathcal{T}(C)=\{(x,F_{C}(x)):x\in\{0,1\}^{n}\}$ for the truth relation
of the circuit (Sec.~\ref{subsec:csat-problem}), and abbreviate
$W^{\star}\equiv W^{\star}(G_{C})$. The gadget library, obtained with the
automated search of Ref.~\cite{Pan2025TriangularGadgetSearch}, is
catalogued with its optimality certificates in the Supplemental Material
(Sec.~\ref{sm:gadget-design}); the placement and routing compiler that
produces the layouts is described in
Sec.~\ref{subsec:method-compilation}.

\begin{theorem}[Circuit realization]
\label{thm:realization}
For every Boolean circuit $C$ over the basis $\mathcal{B}$ compiled
as above, the port projections of the maximum weight independent sets
of $G_{C}$ are exactly the rows of the circuit truth table,
\begin{equation}
T(G_{C},P_{C})=\mathcal{T}(C).
\label{eq:realization}
\end{equation}
Moreover, the optimal weight $W^{\star}$ is the sum of the optimal
weights of the constituent gadgets and wire chains.
\end{theorem}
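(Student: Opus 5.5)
The argument is a decomposition over components followed by a matching upper and lower bound. I will write $G_C$ as a union of component graphs $H_1,\dots,H_k$, namely the gate gadgets, fan-out and crossing gadgets, and wire chains. Components overlap only at shared port atoms. When two ports are identified, I split the summed weight of the shared atom back between the two components according to their original port weights. The clean edge condition then gives a key fact: a subset $S\subseteq V(G_C)$ is independent in $G_C$ if and only if each restriction $S\cap V(H_j)$ is independent in $H_j$. The reason is that every edge of $G_C$ lies inside some $H_j$. With the weights split this way, $W(S)=\sum_j W_{H_j}(S\cap V(H_j))$.

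\emph{Upper bound.} From this decomposition, $W^{\star}(G_C)\le\sum_j W^{\star}(H_j)$. Equality holds exactly when some independent $S$ restricts to an MWIS of every component simultaneously. The gadget certificates (Sec.~\ref{sm:gadget-design}) supply the needed local facts. For each gadget $H_j$, the MWIS port projections are exactly the truth table of its primitive. For each wire chain, the MWIS are exactly the two alternating configurations, and these force the two end ports to carry equal bits.

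\emph{Lower bound.} Fix an input $x$ and evaluate $C$ in topological order. For each $H_j$, choose an MWIS whose port projection is the gadget's truth-table row determined by the evaluated signals; the certificate guarantees such an MWIS exists. Wherever two components share a port atom, both assign it the same signal bit, so the local choices agree on shared atoms. Their union is therefore a well-defined set $S_x$. By the decomposition it is independent, and its weight is $\sum_j W^{\star}(H_j)$. This proves the weight statement $W^{\star}=\sum_j W^{\star}(H_j)$. It also shows $(x,F_C(x))\in T(G_C,P_C)$, giving $\mathcal T(C)\subseteq T$.

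\emph{Converse inclusion.} Let $S$ be any MWIS of $G_C$. Since the upper bound is attained, every restriction $S\cap V(H_j)$ must be an MWIS of $H_j$. Hence each gadget's ports satisfy its truth table and each wire chain transmits its bit unchanged. Inducting in topological order from the input ports, every internal signal equals the value computed by $C$ on the input bits $x$ read from $S$. In particular the output ports read $F_C(x)$, so $T\subseteq\mathcal T(C)$.

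The main obstacle is making the weight splitting at identified atoms and wire endpoints rigorous. Each gadget's truth-table property has to be certified with its port weights as they appear after the split, so that local optimality of each component is exactly the right notion. I would first state a gadget specification in terms of these split weights, and treat a wire chain as a gadget with two ports whose endpoints are shared. With that specification, the additivity argument above goes through uniformly for both kinds of component.
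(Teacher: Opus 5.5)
Your proof is correct. It rests on the same two mechanisms as the paper's proof. First, under the clean edge condition, independent sets of the composite factor into independent sets of the pieces, and the split weights add. Second, local completeness lets you choose local optima that agree on every shared atom, which forces every global optimum to restrict to local optima.

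The difference is in how the composition is organised. The paper proves a binary composition theorem for two arbitrary realizations (Theorem~\ref{thm:multipin-composition}). It branches on the full interface vector $a$ and hides that vector with the truth relation composition lemma (Lemma~\ref{lem:circuit-truth-composition}). It then inducts on the number of primitives using a topological cut (Lemma~\ref{lem:topological-circuit-decomposition}), handles wire chains by a separate corollary, and obtains the additivity clause by applying binary additivity at each induction step.

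You instead decompose $G_C$ once into all primitive components and use the circuit's evaluation directly. A single topological evaluation supplies a consistent family of local optima, which gives the lower bound and completeness together. A topological sweep then gives soundness, using the fact that every primitive, including crossing and wire, is functional in its signal direction. Your route is shorter, needs neither the relational-composition nor the cut lemma, and delivers the additivity clause literally as a sum over gadgets and wire chains. The paper's route is modular: any two verified realizations, not only primitives, can be amalgamated, so a verified block could be reused as a black box.

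The obstacle you flag at the end is not a real one. The composite weight of an identified atom is by definition the sum of its local weights, Eq.~\eqref{eq:amalgamated-weight}. Splitting it back therefore recovers exactly the certified template weights, $1$ for gadget ports and for chain endpoints, so the gadget contracts apply verbatim.

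What you do implicitly need is that no two atoms of the same component are ever identified, so that each $H_j$ embeds in $G_C$ and $S\cap V(H_j)$ makes sense. The paper's standing assumptions of acyclicity and point-to-point wiring guarantee this.
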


\noindent Theorem~\ref{thm:realization} states the soundness and
completeness of the encoding: every optimum is \emph{gate consistent},
its port occupations spelling a valid input assignment together with
the value the circuit computes on it, and conversely every truth
table row is realized by at least one optimum. The proof is
given in the Supplemental Material (Sec.~\ref{sm:composition-theory}).

Deciding satisfiability now reduces to one application of the
branching primitive. Let $v_{O}$ be the atom of the designated output
port, of weight
$w_{O}$, and let
$W_{\mathrm{br}}=W^{\star}(G_{C}-N[v_{O}])$ be the branched optimum
appearing in the first argument of Eq.~\eqref{eq:branch}.

\begin{theorem}[Output branch SAT criterion]
\label{thm:branch-criterion}
The first branch of Eq.~\eqref{eq:branch} bounds the branched optimum by
$W^{\star}-W_{\mathrm{br}}\ge w_{O}$, and the instance is satisfiable if
and only if this bound is saturated,
\begin{equation}
W^{\star}-W_{\mathrm{br}}\;=\;w_{O}.
\label{eq:branch-criterion}
\end{equation}
A strict deficit $W^{\star}-W_{\mathrm{br}}>w_{O}$ is therefore the only
alternative, and certifies that no gate consistent configuration attains
$O=1$.
\end{theorem}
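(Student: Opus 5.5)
The plan is to combine the branching identity Eq.~\eqref{eq:branch} at $v=v_{O}$ with the realization result, Theorem~\ref{thm:realization}. The identity gives the inequality at once. Theorem~\ref{thm:realization} turns the question of whether the maximum in Eq.~\eqref{eq:branch} is attained by the first branch into the question of whether some truth-table row has $O=1$.

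\textbf{Step 1 (the bound).} By Eq.~\eqref{eq:branch}, $W^{\star}\ge w_{O}+W^{\star}(G_{C}-N[v_{O}])=w_{O}+W_{\mathrm{br}}$, so $W^{\star}-W_{\mathrm{br}}\ge w_{O}$. For later use we record the underlying fact. The quantity $w_{O}+W_{\mathrm{br}}$ is exactly the largest weight of an independent set of $G_{C}$ containing $v_{O}$: for any independent $S'\subseteq G_{C}-N[v_{O}]$, the set $S'\cup\{v_{O}\}$ is independent in $G_{C}$, and conversely.

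\textbf{Step 2 (satisfiable $\Rightarrow$ saturation).} Suppose $Q(x)=1$ for some input $x$. Then $(x,1)\in\mathcal{T}(C)$, so by Theorem~\ref{thm:realization} there is a maximum weight independent set $S^{\star}$ of $G_{C}$ whose port projection is $(x,1)$; in particular $v_{O}\in S^{\star}$. Since $S^{\star}$ contains $v_{O}$, no neighbour of $v_{O}$ lies in it, so $S^{\star}\setminus\{v_{O}\}$ is independent in $G_{C}-N[v_{O}]$. Its weight is $W^{\star}-w_{O}$, hence $W_{\mathrm{br}}\ge W^{\star}-w_{O}$. Combined with Step 1, this gives equality.

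\textbf{Step 3 (saturation $\Rightarrow$ satisfiable).} Suppose $W^{\star}-W_{\mathrm{br}}=w_{O}$. Take an optimum $S'$ of $G_{C}-N[v_{O}]$. Then $S=S'\cup\{v_{O}\}$ is independent in $G_{C}$ with weight $W_{\mathrm{br}}+w_{O}=W^{\star}$, so $S$ is a maximum weight independent set of $G_{C}$. By Theorem~\ref{thm:realization}, its port projection is a truth-table row $(x,F_{C}(x))$, and the output bit is $1$ because $v_{O}\in S$. Hence $Q(x)=1$, and the input port occupations of $S$ spell a witness.

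\textbf{Step 4 (the dichotomy).} Since the inequality of Step 1 always holds, failure of saturation means a strict deficit. By Step 3's contrapositive combined with Step 2, a strict deficit means no maximum weight independent set has $O=1$. Equivalently, no gate-consistent configuration attains $O=1$.

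The only delicate point is Step 3. It requires that the \emph{set} $S$ built from the branch be a genuine global optimum, so that Theorem~\ref{thm:realization} applies to it. This holds because its weight equals $W^{\star}$. It also relies on the port projection reading the output bit from the same atom $v_{O}$ that was branched on. I would check that the compiled output port is a single atom and not something identified with a wire chain. If the output port is merged with an upstream atom, the argument still holds provided $v_{O}$ is taken to be the merged atom carrying the summed weight $w_{O}$.
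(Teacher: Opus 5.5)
Your proposal is correct and follows essentially the same route as the paper: the bijection $S_{\mathrm{br}}\mapsto S_{\mathrm{br}}\cup\{v_{O}\}$ gives the bound and shows that saturation holds exactly when some global MWIS contains $v_{O}$. Theorem~\ref{thm:realization} then converts this into satisfiability, using completeness in one direction and soundness in the other, and the input ports of the branched optimum supply the witness; the paper simply merges your Steps 2 and 3 into the single equivalence ``saturation $\Leftrightarrow$ some MWIS contains $v_{O}$''.
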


\noindent The criterion follows by combining this bound with
Theorem~\ref{thm:realization}
(Supplemental Material, Sec.~\ref{sm:output-branch}).

The reference weight $W^{\star}$ requires no optimization. By the
additivity clause of Theorem~\ref{thm:realization} it is the sum of the
optimal weights of the gadgets and wire chains that the compiler places,
all of which are fixed once placement and routing terminate
(Sec.~\ref{subsec:method-compilation}). The decision therefore rests on a
single MWIS computation. The satisfying condition $O=1$ is asserted by
the branching primitive of Sec.~\ref{subsec:method-branching}, the MWIS
problem is solved on the branched graph alone, and the resulting
$W_{\mathrm{br}}$ is compared with $W^{\star}$ through
Eq.~\eqref{eq:branch-criterion}. A satisfiable instance returns its
witness from the same computation, since the input port occupations of
the branched optimum spell a satisfying assignment; an unsatisfiable one
is certified by the strict deficit. Solving the unconditioned graph
remains available as a separate readout---with all ports free its optima
range over the complete truth table---but the decision does not require
it. The procedure is summarized in Algorithm~S1 of the Supplemental
Material (Sec.~\ref{sm:output-branch}) and demonstrated on a minimal
unsatisfiable instance in Sec.~\ref{subsec:unsat}
(Fig.~\ref{fig:unsat}).

\subsection{Compilation Method}
\label{subsec:method-compilation}
The clean edge condition of Sec.\ref{def:pin-amalgamation}, is also the central compilation constraint: every intended adjacency must be placed within the king distance of Eq.~\eqref{eq:ksg}, and every other pair of atoms farther apart. A general circuit also has nonlocal connections that cannot all be realized by overlapping pins; these are carried by wire chains and, in the plane, by explicit crossing gadgets. The compiler therefore optimizes the coarse topology and crossings first, then expands and refines the atom level layout, and finally assigns weights to emit the weighted graph $(G_{C},w)$.

\emph{Abstract stage.} The abstract stage fixes connectivity and
relative topology before atoms are placed. Each logic or fanout gadget
is assigned an \emph{anchor} on a coarse routing grid, and each pin is
an offset from its anchor with an allowed escape direction. Primary
inputs and outputs are contracted onto their consuming or producing
pins rather than placed as movable gadgets. For a fixed placement,
every source--destination pin pair is routed by A$^{\star}$ on the
eight-neighbor grid~\cite{Hart1968AStar} under the clean edge
constraint. A route leaves and enters pins along their escape
directions and makes no unintended contact; the sole exception is a
legal crossing, which is materialized as a crossing gadget. Because
sequential routing is order dependent, conflicts are resolved by
PathFinder-style history-based rerouting~\cite{McMurchie1995PathFinder}.
The layout is then optimized in two steps (Algorithm S2): an ensemble of position-only simulated-annealing runs\cite{Kirkpatrick1983Annealing,Metropolis1953EquationState} first collects complete-valid layouts, and greedy descent which compacts each collected layout.

\emph{Real stage.} The real stage turns each archived abstract layout
into an atom level layout (Algorithm~S3). Strictly increasing row and
column maps expand the coarse grid until every rigid gadget template
fits at its mapped anchor, and the templates are stamped with their
abstract orientations. Each split net is reconnected by A$^{\star}$
anchored to its abstract route, with consecutive route points realizing
intended KSG edges and all other pairs held outside blockade distance.
To realize the wire relation of the composition theory, every route
must contain an odd number of points; even-parity routes are repaired
by a local splice or by parity-aware rerouting. A validated realization
is then descended greedily to reduce the atom count .

\emph{Weight assignment.} Finally, gate, fanout, and crossing atoms
retain their template weights, interior wire atoms take weight $2$, and
each identified endpoint adds the wire's unit contribution to the
original pin weight (Sec.~\ref{sm:composition-theory}). Since the library
weights are $1$ and $2$ and every identified pin joins a weight-$1$
gadget port to a weight-$1$ chain endpoint, no rule raises an atom above
weight $2$: the binary weight alphabet $w\in\{1,2\}$ is therefore
preserved under composition, and the compiled array requires only two
local-detuning values however large the circuit. Odd wire parity and
clean pin identification realize exactly the hypotheses of the
composition theory: the amalgamation of
Definition~\ref{def:pin-amalgamation} and the wire insertion of
Corollary~\ref{cor:wire-insertion}. The Circuit-SAT to MWIS
equivalence of the emitted $(G_{C},w)$ therefore follows from
Theorem~\ref{thm:realization} by construction, with no enumeration of
the final MWIS manifold. The two stages are given in pseudocode as
Algorithms~S2 and~S3 of the Supplemental Material
(Sec.~\ref{sm:compilation-details}).

\subsection{Forward Circuit Evaluation via the MWIS Solution}
\label{subsec:method-forward}
Solving Circuit-SAT is intrinsically a \emph{backward} problem: the
output is asserted and the branched optimization propagates that
constraint against the direction of computation. The same compiled graph
also evaluates the circuit \emph{forward}, by applying the branching
primitive of Sec.~\ref{subsec:method-branching} to the input ports
rather than to the output. Given an input assignment
$x=(x_{1},\ldots,x_{n})$, each input port atom is branched to its bit
value and the MWIS problem is solved on the resulting reduced graph. By
Theorem~\ref{thm:realization} every optimum of that graph is gate
consistent and compatible with the pinned inputs, so the occupations of
the output port atoms read off $F_{C}(x)$. 
The forward reading is verified exhaustively for the compiled three-gate
circuit in the Supplemental Material, where each of the eight input
assignments reproduces its truth table row and any incorrect output
costs exactly one weight unit (Sec.~\ref{sm:cir1-verification},
Table~\ref{tab:sm-cir1-truth}).

Nothing in the branching primitive distinguishes inputs from outputs.
Equation~\eqref{eq:branch} applies to any port atom, so an arbitrary
subset $Q\subseteq P_{C}$ may be pinned to prescribed bits, asserting $1$
by deleting $N[v]$ and crediting $w_{v}$ and asserting $0$ by deleting
$v$ alone. Writing $W_{Q}$ for the optimum of the reduced graph together
with the credited weights, one has $W_{Q}\le W^{\star}$, with equality if
and only if some row of $\mathcal{T}(C)$ agrees with the prescribed bits;
the optima attaining it project onto exactly those rows. Pinning a mixed
subset of ports therefore answers a general constraint query on the
circuit, of which the satisfiability decision of
Sec.~\ref{subsec:method-solving}, with $Q$ the output port pinned to $1$,
and the forward evaluation above, with $Q$ the full set of input ports,
are the two extreme cases.

\begin{figure}[!tp]
\centering
\includegraphics[width=\columnwidth]{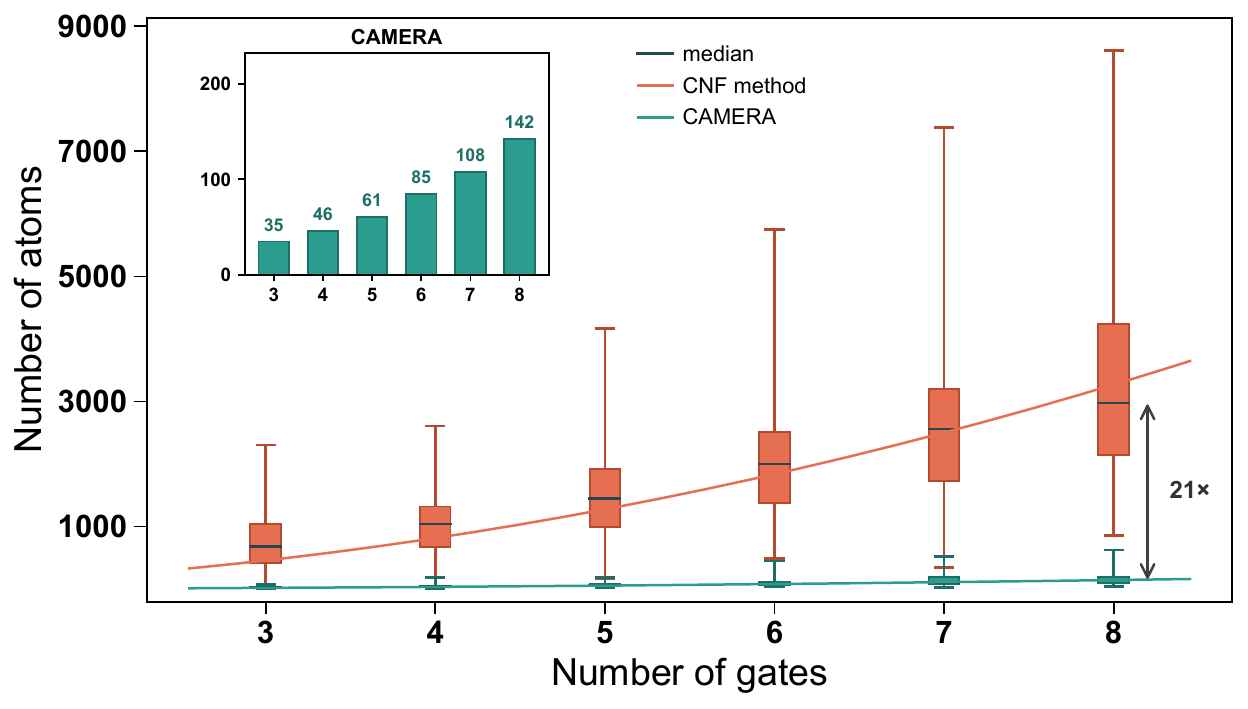}
\caption{Atom cost of the conventional CNF route (red) and the direct
gadget route (CAMERA, teal) on random benchmark circuits
of $g = 3$--$8$ gates ($100$ random circuits per gate count on each
route). For each gate count, the
thin vertical line spans the full min--max range, the box the
interquartile range, and the dark slate tick marks the median. The
curves are pure quadratics $\mathrm{atoms} = n\,g^{2}$ fitted through
the medians, and the double-headed arrow marks the $21\times$ median
suppression at $g = 8$. Inset: atom count of the median CAMERA instance
at each gate count.}
\label{fig:two-path}
\end{figure}

\section{Main Results}
\label{sec:results}

Section~\ref{sec:resources} opens with the headline payoff of the
approach, the atom count advantage of the direct gate level route
(CAMERA) over the conventional CNF-based reduction. The remaining
subsections build the constructive evidence behind this comparison,
from elementary correctness to multi gate scale.
Section~\ref{subsec:gate-mapping} validates the gate gadget library.
Section~\ref{subsec:small-circuit} demonstrates the complete workflow
on a hand-checkable three-gate instance, and
Sec.~\ref{subsec:annealing} simulates its annealing dynamics.
Section~\ref{subsec:unsat} certifies an unsatisfiable instance, and
Sec.~\ref{subsec:arith-blocks} verifies the compilations of two
arithmetic blocks.

\subsection{Resource Comparison}
\label{sec:resources}

The practical payoff of encoding at the level of logic gates is a
substantial reduction in atom count relative to the conventional CNF
route (Method~1 of Sec.~\ref{subsec:csat-to-mis}), whose Tseitin
variables, clause-clique padding, and quantum wire
chains~\cite{Tseitin1983Complexity,Kim2022RydbergWires,Nguyen2023ArbitraryConnectivity}
make a large fraction of the final array encode reduction structure
rather than circuit logic. The direct route eliminates the CNF
intermediate, turning each gate into a compact weighted KSG gadget whose
geometric flexibility the placement and routing compiler exploits.
Figure~\ref{fig:two-path} quantifies the gap on random benchmark
circuits of three to eight gates, $100$ circuits per gate count on each
route. The CNF medians follow the pure quadratic mapping $51.1\,g^{2}$
and climb to nearly $3000$ atoms by eight gates, whereas the
gadget
medians grow monotonically from $35$ to only $142$ atoms over the same
window. Forcing the same quadratic form through them yields a
coefficient of only $2.29$, and although this fit is as tight as the CNF
one ($R^{2} = 0.96$ versus $0.94$), over so narrow a window the gadget
medians remain nearly as
consistent with linear growth. The median CNF-to-gadget ratio ranges
from $19.5$ to $23.8$ across the window, averaging $22.4 \pm 1.8$
(one standard deviation across the six gate counts),
and the arrow
at eight gates marks the $21\times$ suppression there---a structural,
more than twenty-fold saving that brings practically sized Circuit-SAT
instances
within reach of current neutral atom hardware.

\begin{figure*}[!tp]
\centering
\includegraphics[width=\textwidth]{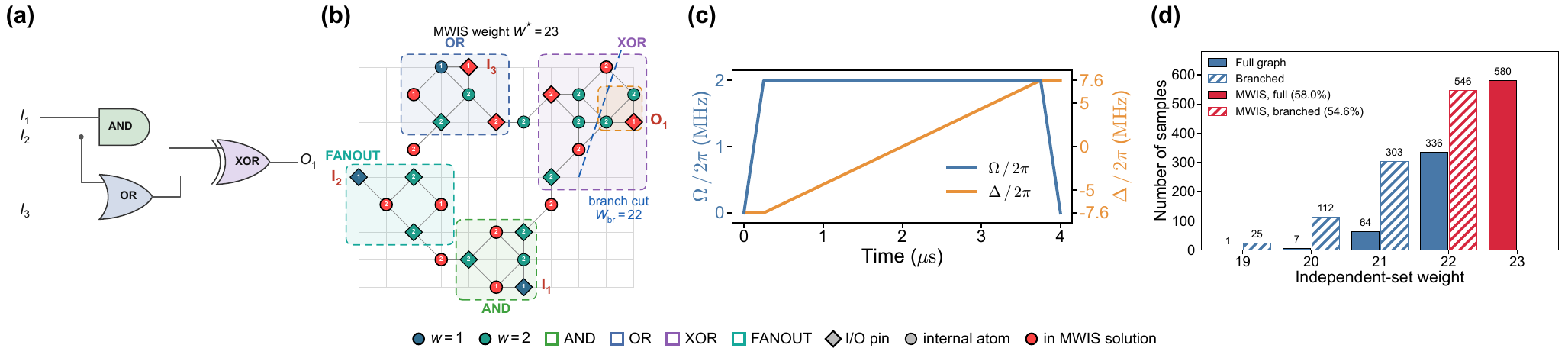}
\caption{End-to-end compilation and solution of the multi-gate Circuit-SAT
instance $C(I_{1}, I_{2}, I_{3}) = (I_{1}\wedge I_{2}) \oplus (I_{2}\vee I_{3})$.
\textbf{(a)}~A three-gate circuit combining AND, OR, and XOR, with output
$O_{1}$.
\textbf{(b)}~The compiled $30$-atom weighted king subgraph: colored
dashed boxes outline the four gadgets (AND, OR, XOR, FANOUT), node color
encodes the on-site weight $w\in\{1,2\}$, and diamonds mark the
pin atoms, with the ports $I_{1},I_{2},I_{3},O_{1}$ labeled in red. Red
fill marks one of the eight degenerate MWIS optima ($W^{\star}=23$),
encoding the input row $(I_{1},I_{2},I_{3})=(0,0,1)$ with $O_{1}=1$. The
orange frame and blue dashed cut mark the output branching that asserts
$O_{1}=1$: deleting the three framed atoms leaves the $27$-atom branched
graph with $W_{\mathrm{br}}=22$, and
$W^{\star}-W_{\mathrm{br}}=1=w_{O}$ meets the branch criterion of
Eq.~\eqref{eq:branch-criterion}, certifying satisfiability (Supplemental
Material, Sec.~\ref{sm:cir1-branched}).
\textbf{(c)}~Analog annealing pulse: the global Rabi amplitude
$\Omega(t)$ is ramped on and off while the site detunings $\Delta(t)$,
scaled by the integer weights, are swept linearly ($\Delta(t)$ shown
for the maximum-weight sites).
\textbf{(d)}~Independent set weight distributions over $1000$ projective
samples of the simulated final state, for the full graph (solid bars,
$988$ blockade consistent samples) and the branched graph (hatched bars,
$986$): each histogram peaks at the maximum weight of its graph (red
bars), $W^{\star}=23$ in $58.0\%$ and $W_{\mathrm{br}}=22$ in $54.6\%$
of the shots.}
\label{fig:small-circuit}
\end{figure*}

These scalings have a structural origin, separating the overhead
intrinsic to the problem from that introduced by the mapping. Any
reduction terminating on a two-dimensional unit disk geometry must
realize non-planar connectivity through explicit crossing structures: in
the crossing lattice construction underlying the CNF
route~\cite{Nguyen2023ArbitraryConnectivity}, each of the $N$ vertices
of the intermediate MIS graph becomes a copy line whose length is set by
the crossings it must traverse, so the embedded atom count scales as
$N \cdot pw$, with $pw$ the pathwidth of the vertex layout. Because
Circuit-SAT is NP-complete, no polynomial time reduction can bound the
connectivity of a general instance; a worst case circuit with
unstructured wiring forces $pw = \Theta(N)$ and hence a quadratic atom
cost $\Theta(N^{2})$---a ceiling that applies to \emph{both} routes,
since the gadget route must likewise route every wire crossing through
an explicit crossing gadget. The atom cost nevertheless remains
polynomial even for logically hard instances: the exponential price of
NP-hardness is paid in the ground state search, not in the size of the
array.

here the routes part ways is in the cost of a \emph{typical} circuit
and in the coefficient of the unavoidable quadratic. The CNF route inflates the vertex count before embedding, adding
roughly four to twelve literal-occurrence vertices per gate, and ties
every clause sharing a variable to the same few circuit inputs. Its
pathwidth therefore grows in proportion to $N$ already for ordinary
circuits, and the quadratic regime is generic, as the fitted
$51.1\,g^{2}$ expresses. The gadget
route instead spends a fixed handful of atoms per gate and pays routing
overhead only for the wiring the circuit actually contains. Ordinary
feed-forward circuits are dominated by short-range connections with few
crossings, so the typical atom cost grows nearly linearly over the
sampled window, and even the quadratic term that dense wiring would
activate carries a more than twenty times smaller coefficient ($2.29$ versus
$51.1$), because crossings are routed between compact gate gadgets
rather than inflated clause blocks. Both routes thus share the same
NP-hardness-imposed quadratic worst case, but the gadget route reduces
its prefactor by more than an order of magnitude and, for the structured
circuits of practical interest, replaces the quadratic behavior with
near-linear growth.

\subsection{Circuit Gate to KSG Mapping}
\label{subsec:gate-mapping}
The first validation layer concerns the translation of elementary circuit gates into
king-subgraph (KSG) gadgets. Each logical gate is represented by a specially designed
KSG gadget, and the four primitives AND, OR, XOR, and NOT constitute the single gate
library from which the compilation pipeline draws every gate it places, displayed in
the framework overview [Fig.~\ref{fig:overview}(b), left]. The design goal is that the
maximum-independent-set configurations of each gadget reproduce exactly the truth
table of the original Boolean gate, so that composing gadgets at the graph level
preserves the logical behavior of the circuit. The two-atom NOT gadget, whose single
king edge already enforces the required anti-correlation, is detailed together with
the other gadgets' full MWIS solution sets in the Supplemental Material
(Sec.~\ref{sm:not-gadget}).

For a given circuit primitive, the gadget search can be carried out by brute force
over small king-subgraph constructions, following the automated gadget-search strategy
introduced for encoding hard problems on Rydberg
lattices~\cite{Pan2025TriangularGadgetSearch} and adapted here to the king-subgraph
geometry: candidate gadgets with small integer vertex weights, restricted in our
library to the range $1$--$2$, are enumerated and tested for whether their optimal
independent sets realize the required gate behavior. This gives an explicit and auditable route to compact gate
gadgets rather than relying on heuristic design intuition. The complete
search formulation is detailed in the Supplemental Material
(Sec.~\ref{sm:search-formulation}). It includes the integer program
that fixes the vertex weights and the exhaustive polyplet enumeration
underlying the minimality certificates. The validation task is therefore twofold: verify
that each gadget reproduces the correct gate truth table, and confirm that it is
minimal or near-minimal within the explored search class. Across the resulting
four-gate library all ports carry weight $1$, so the analog implementation requires
only two local-detuning values and the ports look identical across all gadgets; this
gate-by-gate validation forms the most basic correctness check for the full
compilation pipeline.

\subsection{Circuit-SAT Demonstration: a Multigate Circuit Example}
\label{subsec:small-circuit}

Having validated the gate library at the level of individual primitives, we
now demonstrate the complete workflow of
Sec.~\ref{sec:method}---gadget composition, satisfiability decision, and
solution extraction---on a Circuit-SAT instance whose internal structure is
small enough to inspect by hand. The three-gate circuit of
Fig.~\ref{fig:small-circuit}(a) exercises all three two-input library
primitives: on the inputs $I_{1}, I_{2}, I_{3}$ it produces the single
output $O_{1}$, realizing the Boolean function
$C(I_{1}, I_{2}, I_{3}) = (I_{1}\wedge I_{2}) \oplus (I_{2}\vee I_{3})$.
Following the definition of Sec.~\ref{subsec:csat-problem}, the instance
asks whether there exists an input $x \in \{0,1\}^{3}$ with $C(x) = 1$,
together with the search problem of exhibiting a witness; both are settled
below by the output branching procedure of
Sec.~\ref{subsec:method-solving}.

The composition follows the wiring rules of
Sec.~\ref{subsec:method-solving}. The input $I_{2}$ is the only fanned out
signal: it drives both the AND gate and the OR gate, so a single fanout
gadget duplicates it, exposing two boundary atoms that carry the same
Boolean value in every MWIS configuration; the remaining gate-to-gate
links---the AND and OR outputs feeding the XOR gate---are realized by
shared port wiring, mediated by wire atoms where the placement keeps the
ports apart. No output-assignment preprocessing is applied: the output port
$O_{1}$ is retained as an explicit boundary atom, so the compiled graph
encodes the \emph{complete} truth table of $C$ rather than only its
satisfying rows. The result is the $30$-atom, $41$-edge weighted king
subgraph of Fig.~\ref{fig:small-circuit}(b), assembled from the $6$-atom
AND, $6$-atom OR, and $9$-atom XOR gadgets, the $5$-atom fanout gadget, and
four wire atoms. Each gadget keeps its verified template weights; the wire
atoms and the gadget ports engaged in inter-gadget connections are raised
from weight $1$ to $2$ by the wire rule of Sec.~\ref{sm:composition-theory}, which
leaves the optimal assignments unchanged, while the four primary ports
retain weight $1$, so the assembled graph keeps the binary weight
alphabet (Sec.~\ref{subsec:method-compilation}). For comparison, pushing the identical
three-gate circuit through the CNF chain of Sec.~\ref{subsec:csat-to-mis}
yields an intermediate MIS graph of $30$ vertices, already as many as the
entire gadget compilation, which the wire chain embedding then inflates to
an unweighted king subgraph of $1257$ atoms, roughly $42$ times the $30$
atoms compiled here.

The compiler fixes the reference weight of
Sec.~\ref{subsec:method-solving} without optimization,
$W^{\star} = 23$, and exact enumeration confirms both this value and the
encoding: the optimum is attained by exactly eight degenerate
independent sets, one per input assignment, whose port occupations
reproduce the circuit truth table row by row (Supplemental Material,
Sec.~\ref{sm:cir1-verification}). The eight optima contain $13$ or $14$
atoms while tying at the same weight, illustrating that the ground state
degeneracy is organized by weight rather than by cardinality; one
representative optimum is highlighted in Fig.~\ref{fig:small-circuit}(b),
the one encoding the input row $(I_{1},I_{2},I_{3})=(0,0,1)$, whose
occupied weight-$1$ output port reads $O_{1}=1$.
Deciding this Circuit-SAT instance then proceeds by
the output branching shown in the same panel: cutting the weight-$1$
output port and the two atoms it blockades leaves a $27$-atom branched
graph whose optimum---the highlighted set stripped of the $O_{1}$
atom---has weight $W_{\mathrm{br}} = 22$, so the branch criterion of
Eq.~\eqref{eq:branch-criterion} is met and the instance is certified
satisfiable. The witness comes from the same computation: the input port
occupations of the branched optimum spell out
$(I_{1},I_{2},I_{3}) = (0,0,1)$, for which indeed
$C(x) = (0 \wedge 0) \oplus (0 \vee 1) = 1$ (Supplemental Material,
Sec.~\ref{sm:cir1-branched}). This worked example is the elementary unit
of the broader compilation pipeline: the larger arithmetic blocks of
Sec.~\ref{subsec:arith-blocks} are obtained by iterating exactly the same
composition rules at scale. The analog quantum annealing dynamics of this
compiled instance [panels (c) and (d) of Fig.~\ref{fig:small-circuit}] are
analyzed in the following subsection.

\begin{figure*}[!tp]
\centering
\includegraphics[width=0.8\textwidth]{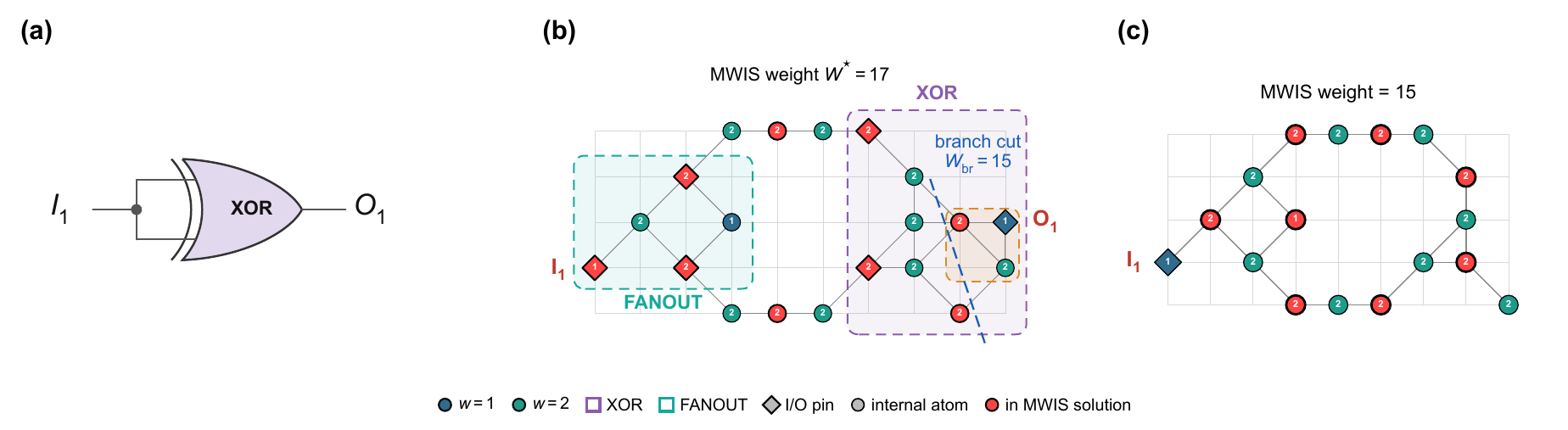}
\caption{Detecting unsatisfiability from the compiled king subgraph, for
the minimal unsatisfiable instance $O_{1} = I_{1} \oplus I_{1}$.
\textbf{(a)}~Logical circuit: the single input $I_{1}$ is fanned out into
both ports of an XOR gate, so $O_{1}$ is identically $0$ and the demand
$O_{1} = 1$ is unsatisfiable.
\textbf{(b)}~The compiled $20$-atom weighted KSG: dashed boxes outline the
FANOUT and XOR gadgets, node color encodes the on-site weight
$w\in\{1,2\}$, diamonds mark the pin atoms, and the ports
$I_{1}, O_{1}$ are labeled in red. Red fill marks one of the two
degenerate MWIS optima ($W^{\star} = 17$); every optimum reads
$O_{1} = 0$. The orange frame and blue dashed cut mark the output
branching that asserts $O_{1} = 1$: deleting the three framed atoms
leaves the branched graph of~(c).
\textbf{(c)}~The $17$-atom branched graph, with MWIS weight
$W_{\mathrm{br}} = 15$: since
$W_{\mathrm{br}} + w_{O} = 16 < 17 = W^{\star}$, the weight deficit
certifies unsatisfiability from a single optimization on the reduced
graph.}
\label{fig:unsat}
\end{figure*}

\subsection{Quantum Annealing Dynamics Simulation}
\label{subsec:annealing}
The practical relevance of the encoding depends not only on its static
correctness but also on whether realistic analog dynamics actually reach the
encoded optima. As an intermediate validation layer between graph level
verification and eventual hardware execution, we therefore simulate a
quantum annealing type protocol on the Rydberg Hamiltonian with tensor
network methods, adopting a standard neutral atom hardware model with
parameters in the range of existing Rydberg array platforms such as QuEra's
Aquila system~\cite{Wurtz2023Aquila}. The aim is not to claim a general
quantum advantage from the present data, but to verify that the compiled
Circuit-SAT-derived KSG instances display the expected solution-seeking
behavior under a physically motivated, hardware compatible schedule.

We run the protocol on the compiled multi-gate circuit instance of
Fig.~\ref{fig:small-circuit}(b), whose eight degenerate maximum weight
independent sets at $W^{\star}=23$ fix the target solution manifold before
any time evolution. The annealing pulse [Fig.~\ref{fig:small-circuit}(c)]
ramps the global Rabi drive $\Omega(t)/2\pi$ from zero to $2\,\mathrm{MHz}$
over the first $0.25\,\mu\mathrm{s}$, holds it flat, and ramps it back to
zero in the final $0.25\,\mu\mathrm{s}$, while the detuning $\Delta(t)/2\pi$
at the highest weight site is swept linearly from $-7.6\,\mathrm{MHz}$ to
$+7.6\,\mathrm{MHz}$ over the $3.5\,\mu\mathrm{s}$ plateau, with the
detunings on the remaining atoms scaled by their integer weights; the
instantaneous ground state thereby evolves from the all-empty state into the
MWIS-occupied configurations of the encoded graph. Of $1000$ projective
samples of the simulated final state, $988$ ($98.8\%$) satisfy every
pairwise blockade constraint, and their weight distribution
[Fig.~\ref{fig:small-circuit}(d), solid bars] is concentrated at the
optimum: $58.0\%$ of the shots attain $W^{\star}=23$, terminating exactly
in the eight-fold-degenerate ground state manifold that enumerates the
circuit truth table, with rapid decay at lower weights. The identical
protocol on the output branched $27$-atom graph [the branch cut of
Fig.~\ref{fig:small-circuit}(b)] performs equally well
[Fig.~\ref{fig:small-circuit}(d), hatched bars]: $986$ samples ($98.6\%$)
are valid and $54.6\%$ reach $W_{\mathrm{br}}=22$, the four-fold-degenerate
manifold of satisfying input rows, so the branched certificate instance is
sampled as efficiently as the instance it certifies. The concentration of
both distributions at the encoded optimum after a single
$4\,\mu\mathrm{s}$ sweep shows that the analog dynamics flow
preferentially towards the solution manifold---higher success
probabilities being attainable through longer annealing times or repeated
sampling---providing the dynamical validation layer anticipated above.

\subsection{Unsatisfiable Instances and Solution Self-Consistency}
\label{subsec:unsat}

The results presented so far concern satisfiable circuits, but the
negative answer carries as much operational weight as the positive one: in
the equivalence checking application of Sec.~\ref{subsec:csat-problem}, it
is precisely the \emph{unsatisfiability} of the miter that certifies two
designs equivalent. Since an optimizer always returns \emph{some}
configuration, we probe how the encoding behaves when no solution exists
on a minimal example, the single-gate instance
$O_{1} = I_{1} \oplus I_{1}$ [Fig.~\ref{fig:unsat}(a)]: because
$x \oplus x = 0$ for every Boolean $x$, the demand $O_{1} = 1$ is
unsatisfiable by construction. The same pipeline compiles it into the
$20$-atom weighted king subgraph of Fig.~\ref{fig:unsat}(b), small enough
that its full optimal set can be enumerated exactly (Supplemental
Material, Sec.~\ref{sm:unsat-enumeration}).

The enumeration shows how unsatisfiability manifests in the optimum. Both
degenerate maximum weight independent sets of the full graph, at the
compiler-fixed reference weight $W^{\star} = 17$
[Fig.~\ref{fig:unsat}(b)], are gate consistent and read
out $O_{1} = 0$, the correct value of $I_{1} \oplus I_{1}$; no optimum
places the output port at $1$. Certifying unsatisfiability this way,
however, would require enumerating the entire optimal set. Output
branching provides the economical test: deleting the $O_{1}$ port and the
atoms it blockades leaves the $17$-atom branched graph of
Fig.~\ref{fig:unsat}(c), whose single MWIS computation returns
$W_{\mathrm{br}} = 15$, and
$W_{\mathrm{br}} + 1 = 16 < 17 = W^{\star}$, so forcing the output to $1$
costs weight and the instance is certified unsatisfiable from a single
optimization on a smaller graph. This weight deficit is the quantitative
imprint of the inconsistent assumption $O_{1} = 1$; on analog hardware,
where individual samples may also fall below the optimum for purely
dynamical reasons, the same comparison applies sample by sample, and
persistent failure to reach the satisfiable reference weight is the
operational signature of an unsatisfiable instance.

\subsection{Arithmetic Blocks into KSG}
\label{subsec:arith-blocks}
\begin{figure*}[!t]
\centering
\includegraphics[width=0.68\textwidth]{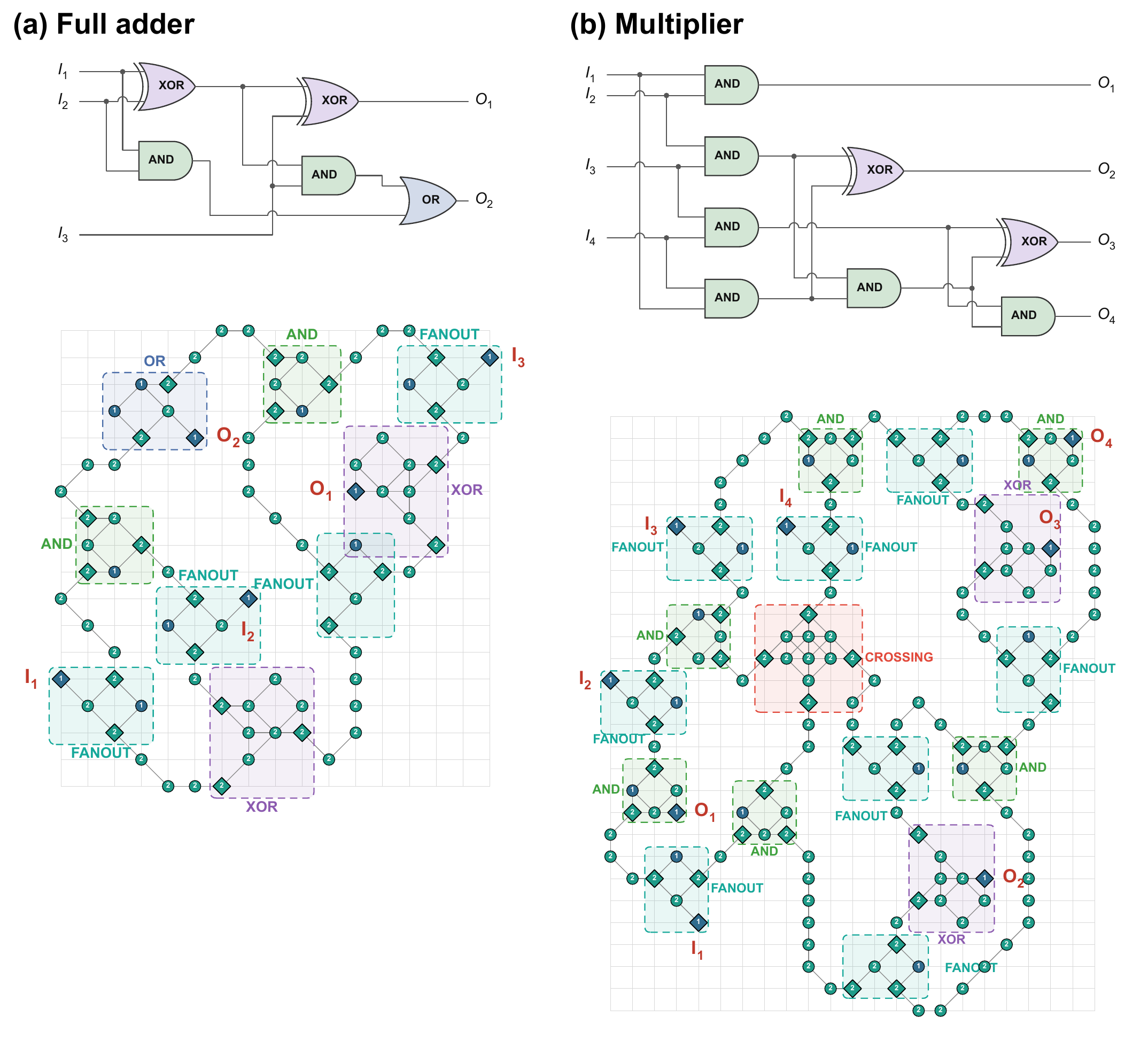}
\caption{Compilation of multi gate arithmetic blocks into weighted
king subgraphs by the automated pipeline.
\textbf{(a)}~One-bit full adder: logical circuit (top) and compiled
$85$-atom weighted KSG (bottom). Dashed boxes outline the constituent
XOR, AND, OR, and fanout gadgets; diamonds mark the gadget I/O pin
atoms, circles the internal and wire atoms; node color encodes the
on-site weight $w \in \{1,2\}$ printed on each atom; and the ports
$I_{1}$--$I_{3}$, $O_{1}$, $O_{2}$ appear as labeled boundary atoms.
The eight degenerate maximum weight independent sets (weight $W^{\star}=73$)
enumerate the full adder truth table.
\textbf{(b)}~Two-bit multiplier: logical circuit (top) and compiled
$165$-atom weighted KSG (bottom), with the same annotation and weight
alphabet; the router additionally inserts a crossing gadget (red box)
where two independent signals must cross, and the sixteen degenerate
maximum weight independent sets (weight $W^{\star}=142$) enumerate the complete
multiplication table.}
\label{fig:arith-blocks}
\end{figure*}
The preceding subsections established the Circuit-SAT program end to
end on hand-checkable instances. The larger circuits behind practical
problems add two requirements: the composed KSG must preserve logical
correctness at the block level, with no spurious configurations
introduced when many gadgets are connected, and the compilation must run
automatically through the placement and routing pipeline of
Sec.~\ref{subsec:method-compilation}. We test both on the one-bit full
adder (five gates) and the two-bit multiplier (eight gates)---the
standard units from which larger arithmetic circuits, and through them
the practically relevant Circuit-SAT instances of
Sec.~\ref{subsec:csat-problem}, are assembled---already rich in
connectivity, signal propagation, and fanout yet small enough for exact
correctness analysis.

A demanding first test is the full adder. With summand
bits $I_{1}, I_{2}$ and carry-in $I_{3}$
[Fig.~\ref{fig:arith-blocks}(a)], the block computes the sum and
carry-out
\begin{equation}
O_{1} \;=\; I_{1} \oplus I_{2} \oplus I_{3},
\qquad
O_{2} \;=\; \bigl( (I_{1} \oplus I_{2}) \wedge I_{3} \bigr)
            \;\vee\; (I_{1} \wedge I_{2}).
\label{eq:full-adder-carry}
\end{equation}
Five gates---two XOR, two AND, and one OR---realize this through the
intermediate signals $s = I_{1} \oplus I_{2}$,
$c_{1} = s \wedge I_{3}$, and $c_{2} = I_{1} \wedge I_{2}$. Because the
carry-out reuses the signals that form the sum, $I_{1}$, $I_{2}$,
$I_{3}$, and $s$ each drive two downstream gates---exercising a deeper
gate stack, four independent fanout points, and the atom-heavy XOR
gadget all at once. The automated pipeline compiles the circuit into the
$85$-atom weighted KSG of Fig.~\ref{fig:arith-blocks}(a): two $9$-atom
XOR, two $6$-atom AND, and one $6$-atom OR gadget, four $5$-atom fanout
gadgets, and $29$ wire atoms, with no crossing gadget required. The wire
rule keeps the composed block on the binary weight
alphabet of the isolated gadgets
(Sec.~\ref{subsec:method-compilation}). Exact enumeration (Supplemental Material,
Sec.~\ref{sm:adder-verification}) confirms the composition gate by gate:
the eight degenerate optima at $W^{\star}=73$, one per input assignment,
project port by port onto the rows of the full adder truth table, with
the correct intermediate signals on the internal gate ports and no
spurious projection at the optimum weight.

In the two-bit multiplier, the four inputs encode the operands
$A = (I_{3}\,I_{1})$ and $B = (I_{4}\,I_{2})$ and the four outputs
$O_{1}$ (least significant) through $O_{4}$ (most significant) encode
the product $A \times B$: four AND gates form the partial products and a
two-stage XOR/AND half-adder cascade sums them. It compiles into the
$165$-atom KSG of Fig.~\ref{fig:arith-blocks}(b), whose wiring is dense
enough that two inter-gadget signal lines cannot be routed in the plane
without intersecting; at that point the router inserts an $11$-atom
crossing gadget (red box), which lets the two independent signals pass
through each other without logical interference. The sixteen degenerate
optima (weight $W^{\star}=142$) reproduce the complete multiplication table
(Supplemental Material, Sec.~\ref{sm:multi-verification}). Both layouts
expose the circuit ports as
labeled boundary atoms, demonstrating that the gate level KSG library
composes correctly across arithmetic blocks of increasing size.

The atom counts of these blocks also quantify, on concrete circuits, the
saving established statistically in Sec.~\ref{sec:resources}. Pushed
through the CNF mediated chain of Sec.~\ref{subsec:csat-to-mis}, the
identical full adder compiles to an unweighted king subgraph of $2563$
atoms and the identical multiplier to one of $5060$ atoms: the Tseitin
encoding alone produces intermediate MIS graphs of $57$ and $98$
vertices, which the wire chain embedding then inflates by more than an
order of magnitude. The direct gadget compilations at $85$ and $165$
atoms are therefore smaller by factors of about $30$ in both cases---the
difference between instances that fit comfortably within present-day
neutral atom arrays of a few hundred to a few thousand
atoms~\cite{Ebadi2021Nature256,Scholl2021AFM2D,Manetsch2025_6100Atoms}
and ones that already strain them at the scale of a one-bit adder.

\section{Conclusion}
\label{sec:conclusion}
We presented CAMERA, a complete approach for solving Circuit-SAT on
neutral atom
hardware: a gate level encoding that turns a Boolean circuit directly
into an MWIS problem on a weighted king
subgraph, and an output branching protocol that extracts the answer. On
random benchmarks of three to eight gates the direct route uses an
average factor of $22.4$ fewer atoms than the conventional CNF mediated
reduction, rising to a factor of about $30$ on the arithmetic
blocks---the difference between instances that fit comfortably in
present-day arrays and ones that strain them at the scale of a one-bit
adder. Every layer is verified constructively: exact ground state
calculations confirm the $30$-atom three-gate instance, the $85$-atom
full adder, and the $165$-atom multiplier against their complete truth
tables; a closed system tensor network simulation of a hardware
compatible annealing schedule reaches the exact ground state manifold in
$58.0\%$ of $10^{3}$ projective samples on the $30$-atom instance
($54.6\%$ on its $27$-atom output branched counterpart); and a minimal
unsatisfiable instance is certified by the one-unit weight deficit of
its branched optimum. Throughout, the binary weight alphabet
$w \in \{1,2\}$ of the gadget library survives composition, so two local
detuning values suffice, and the VLSI inspired placement and routing
compiler assembles the blockade faithful layouts automatically.

A structural feature with broader significance is the dual role of the
compiled king subgraph: with all ports free, its maximum weight
independent sets enumerate the circuit's complete truth table, one
degenerate optimum per input assignment; branching the output port turns
this manifold into a satisfiability decider, and branching the input
ports into a forward evaluator that computes the Boolean function by
relaxing into the energetically preferred many-body configuration rather
than by propagating signals through a temporal gate sequence. Because
branching only deletes vertices---implemented by local detuning changes,
not by re-layout---one compiled array supports a family of tasks
selected at run time: pinning the inputs evaluates the circuit, pinning
the output inverts it, recovering an input assignment consistent with a
demanded output, pinning a mixed subset of ports poses a general
constraint query on the circuit, and sampling the free manifold draws
from the circuit's valid input--output rows. A compiled Rydberg array
thus becomes a reusable logic block for ground state computation,
reprogrammed by choosing which ports to constrain rather than by
redesigning the layout.

As analyzed in Sec.~\ref{sec:resources}, the resource advantage has a
structural origin: the gadgets encode each logic operation directly and
minimally, and the compiler pays routing overhead only for the wiring
the circuit actually contains, so the generically quadratic growth of
the CNF route is replaced by near-linear scaling for the structured
feed-forward circuits of practical interest. The most immediate
consequence of this reduced encoding expense is that compiled
Circuit-SAT instances now fit on today's hardware: the
$30$--$165$-atom instances verified here, requiring only two local
detuning values, can be executed on
current weighted-MIS platforms with local light shift
control~\cite{deOliveira2025WeightedMWIS}, closing the loop between the
theoretical construction and hardware native optimization. Beyond this
experiment, two directions extend naturally: scaling the automated
compilation to industry-standard benchmark families---the combinational
circuits and equivalence checking miters of formal hardware
verification~\cite{Li2005FormalVerification}---would provide a concrete
performance comparison at practically relevant circuit sizes and connect
the solver to established electronic design automation workflows; and
the correspondence between circuit structure and optimization hardness
deserves systematic study, since the gadget-based encoding introduces
structured connectivity that differs qualitatively from the random or
geometric instances of prior MIS hardness
analyses~\cite{Andrist2023Hardness,Cain2023FlatLandscapes}, leaving
open whether this structure makes the encoded problem harder or easier
for quantum annealing dynamics.

\section{ACKNOWLEDGMENTS}
We are deeply grateful to Jin-Guo Liu for many insightful discussions throughout this project and for valuable comments and suggestions on the manuscript. This work is supported by Science, Technology and Innovation Bureau of Shenzhen Municipality.

\bibliography{references}

\clearpage
\onecolumngrid

\begin{center}
\textbf{\large Supplemental Material for ``Encoding Circuit Satisfiability in Rydberg Atom Arrays''}
\end{center}

\setcounter{section}{0}
\setcounter{equation}{0}
\setcounter{figure}{0}
\setcounter{table}{0}
\setcounter{theorem}{0}
\renewcommand{\thesection}{S\Roman{section}}
\renewcommand{\theequation}{S\arabic{equation}}
\renewcommand{\thefigure}{S\arabic{figure}}
\renewcommand{\thetable}{S\arabic{table}}
\renewcommand{\thetheorem}{S\arabic{theorem}}

\section{Composition theory of the gadget encoding}
\label{sm:composition-theory}

This section develops the composition theory behind the two theorems of
the main text: the circuit realization theorem
(Theorem~\ref{thm:realization}) and the output branch SAT criterion
(Theorem~\ref{thm:branch-criterion}). The notation follows the main
text: $W^{\star}(G)$ is the optimal weight of Eq.~\eqref{eq:mwis} on a
weighted graph $(G,w)$; for a vertex set $S$ we write
$w(S)=\sum_{v\in S}w_{v}$ for its weight, the quantity $W(S)$ of
Eq.~\eqref{eq:mwis}, and the local weight functions $w_{i}$ are extended
to sets in the same way. Here $N[v]$ is the closed neighborhood of $v$.
For a graph $G$
and a vertex subset $U$, $G-U$ denotes the induced graph after deleting
$U$.

\subsection{Pin relations and gadget contracts}
\label{sm:pin-relations}

Let $G=(V,E)$ be a finite graph with a positive vertex weight function
$w$. Designate $k$ pairwise distinct vertices of $G$ as \emph{pins} and
record them in the ordered tuple $P=(p_{1},\ldots,p_{k})$; in the
compiled graphs of the main text the pins are the port atoms. For an
independent set $S$ of $G$, membership of a pin encodes its Boolean
value, and the resulting ordered assignment is the pin projection
\begin{equation}
b_{i}(S)=
\begin{cases}
1, & p_{i}\in S,\\
0, & p_{i}\notin S,
\end{cases}
\qquad
\pi_{P}(S)=\bigl(b_{1}(S),\ldots,b_{k}(S)\bigr).
\label{eq:pin-projection}
\end{equation}
Define the set of all maximum weight independent sets of $(G,w)$ by
\begin{equation}
\Opt(G,w)=\arg\max_{S\ \text{independent}}w(S),
\label{eq:generic-optimum-set}
\end{equation}
and the \emph{optimal pin relation} realized by $(G,w)$ as
\begin{equation}
T(G,P)=\{\pi_{P}(S):S\in\Opt(G,w)\}.
\label{eq:optimal-pin-relation}
\end{equation}
A \emph{gadget type} $g$ is specified by an intended Boolean relation
$T_{g}$; a \emph{gadget realization} is a weighted graph $(G_{g},w_{g})$
with pin tuple $P_{g}$, and the realization is correct for type $g$
exactly when it fulfills the contract
\begin{equation}
T(G_{g},P_{g})=T_{g}.
\label{eq:gadget-contract}
\end{equation}

The construction uses seven relation types. The four gate types encode
truth tables: for $P=(A,B,O)$, respectively $P=(A,O)$,
\begin{align}
T_{\mathrm{AND}}&=\{(0,0,0),(0,1,0),(1,0,0),(1,1,1)\},
\label{eq:and-relation}\\
T_{\mathrm{OR}}&=\{(0,0,0),(0,1,1),(1,0,1),(1,1,1)\},
\label{eq:or-relation}\\
T_{\mathrm{XOR}}&=\{(0,0,0),(0,1,1),(1,0,1),(1,1,0)\},
\label{eq:xor-relation}\\
T_{\mathrm{NOT}}&=\{(0,1),(1,0)\}.
\label{eq:not-relation}
\end{align}
The three routing types enforce equality constraints: for
$P_{\mathrm{CROSSING}}=(N,E,S,W)$,
$P_{\mathrm{WIRE}}=(I,O)$, and $P_{\mathrm{FANOUT}}=(I,O_{0},O_{1})$,
\begin{align}
T_{\mathrm{CROSSING}}&=\{(0,0,0,0),(0,1,0,1),(1,0,1,0),(1,1,1,1)\},
\label{eq:crossing-relation}\\
T_{\mathrm{WIRE}}&=\{(0,0),(1,1)\},
\label{eq:wire-relation}\\
T_{\mathrm{FANOUT}}&=\{(0,0,0),(1,1,1)\},
\label{eq:fanout-relation}
\end{align}
so a crossing imposes $n=s$ and $e=w$ independently, a wire imposes
point-to-point equality, and a fanout copies one signal to two outputs.
The fixed AND, OR, NOT, XOR, CROSSING, and FANOUT realizations were
obtained with the automated gadget search tool of
Ref.~\cite{Pan2025TriangularGadgetSearch} and are specified, together
with their contract verifications by exhaustive enumeration, in
Sec.~\ref{sm:gadget-design}.

The WIRE type admits a variable length family, the weighted copy gadget
construction of Ref.~\cite{Nguyen2023ArbitraryConnectivity}. For
$k\in\mathbb{Z}_{\ge 0}$, let $W_{k}$ be the path
$p_{I}-x_{1}-\cdots-x_{2k+1}-p_{O}$ with endpoint pins of weight $1$ and
interior vertices of weight $2$. If $\beta_{k}(a,b)$ denotes the best
independent set weight subject to $\pi_{(p_{I},p_{O})}=(a,b)$, a direct
count of the occupied interior vertices gives
\begin{equation}
\beta_{k}(0,0)=\beta_{k}(1,1)=2k+2,
\qquad
\beta_{k}(0,1)=\beta_{k}(1,0)=2k+1,
\label{eq:wire-branch-values}
\end{equation}
and hence
\begin{equation}
T\bigl(W_{k},(p_{I},p_{O})\bigr)=T_{\mathrm{WIRE}},
\qquad
(2k+2)-\max\{\beta_{k}(0,1),\beta_{k}(1,0)\}=1.
\label{eq:wire-path-contract}
\end{equation}
Every odd-interior path in this weighted family is therefore a
standalone WIRE gadget; Corollary~\ref{cor:wire-insertion} below shows
that inserting such chains into a realization preserves it, which is the
rule applied by the compiler.

\subsection{Circuit realizations and clean composition}
\label{sm:realizations}

Throughout, a circuit is a finite acyclic deterministic Boolean circuit
containing at least one component instance of the seven types above.
Every boundary pin is incident to a primitive component, and every
inter-component wire joins an output pin occurrence to an input pin
occurrence. Wiring is point to point: each inter-component pin
occurrence is incident to at most one wire, and all signal replication
is represented by explicit FANOUT components. For a circuit
$C:\{0,1\}^{n}\to\{0,1\}^{m}$, let
$B_{C}=(I_{1},\ldots,I_{n},O_{1},\ldots,O_{m})$ be its ordered tuple of
distinct boundary pin occurrences. Written in this order, the circuit
truth relation is
\begin{equation}
\mathcal{T}(C)=\left\{\bigl(x,F_{C}(x)\bigr):x\in\{0,1\}^{n}\right\}
\subseteq\{0,1\}^{n+m}.
\label{eq:circuit-truth-relation}
\end{equation}
When a wire is cut to form subcircuits, its source endpoint becomes an
output boundary pin of the upstream subcircuit and its target endpoint
an input boundary pin of the downstream subcircuit.

\begin{definition}[Weighted graph realization of a circuit]
\label{def:circuit-graph-realization}
Let $(G,w)$ be a finite weighted graph and let
$P_{C}=(p_{I_{1}},\ldots,p_{I_{n}},p_{O_{1}},\ldots,p_{O_{m}})$ be an
ordered tuple of pairwise distinct vertices corresponding position by
position to $B_{C}$. The triple $(G,w;P_{C})$ is a \emph{weighted graph
realization} of $C$ if and only if it satisfies both
\begin{equation}
T(G,P_{C})\subseteq\mathcal{T}(C)\quad\text{(soundness)},
\qquad
\mathcal{T}(C)\subseteq T(G,P_{C})\quad\text{(completeness)}.
\label{eq:circuit-graph-contract}
\end{equation}
\end{definition}

Soundness says that every MWIS projects to a valid row of the circuit
truth table; completeness says that every row is the pin projection of
at least one MWIS. Together they are equivalent to
$T(G,P_{C})=\mathcal{T}(C)$, the statement of
Theorem~\ref{thm:realization}. Neither condition asserts a bijection:
several microscopic MWIS may share one boundary projection.

\begin{corollary}[Primitive circuit realizations]
\label{cor:primitive-circuit-realizations}
Each of the six fixed gadget realizations and every WIRE realization
$W_{k}$ is a weighted graph realization of its primitive circuit
component: for the fixed gadgets this is the contract
Eq.~\eqref{eq:gadget-contract} verified in Sec.~\ref{sm:gadget-design},
and for $W_{k}$ it is Eq.~\eqref{eq:wire-path-contract}.
\end{corollary}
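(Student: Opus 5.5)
The corollary reduces to unpacking Definition~\ref{def:circuit-graph-realization} for a circuit consisting of a single primitive component. Take $C$ to be one instance of a type $g$. Its boundary tuple $B_{C}$ is exactly the pin tuple $P_{g}$ in the order fixed in Eqs.~\eqref{eq:and-relation}--\eqref{eq:fanout-relation}: inputs first, then outputs. For the CROSSING type we read $(N,E)$ as inputs and $(S,W)$ as outputs, and note that the stated $T_{\mathrm{CROSSING}}$ is precisely $\{(x,F(x))\}$ for $F(n,e)=(n,e)$. The first step is therefore to check, type by type, that the intended relation $T_{g}$ coincides with the truth relation $\mathcal{T}(C)$ of Eq.~\eqref{eq:circuit-truth-relation}. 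This holds because each listed $T_{g}$ contains exactly one row per input assignment, and that row has the correct output. For FANOUT and WIRE the outputs copy the input, and for the four gates the tables are literally the truth tables. With $T_{g}=\mathcal{T}(C)$ in hand, the two inclusions of Eq.~\eqref{eq:circuit-graph-contract} are jointly equivalent to $T(G_{g},P_{g})=T_{g}$, which is the contract Eq.~\eqref{eq:gadget-contract}.

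For the six fixed gadgets I would take that contract as the input, since it is established by exhaustive enumeration in Sec.~\ref{sm:gadget-design}. For those gadgets nothing further is needed.

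For $W_{k}$ the only real work is Eq.~\eqref{eq:wire-branch-values}. Fix the endpoint bits $(a,b)$ and maximise over the interior path $x_{1},\dots,x_{2k+1}$, where each interior vertex has weight $2$.
\begin{itemize}
\item If $a=b=0$, the interior path has $2k+1$ vertices, so it admits $k+1$ occupied vertices, giving weight $2k+2$.
\item If $a=1$, then $x_{1}$ is excluded. The interior then holds at most $k$ occupied vertices from $x_{2},\dots,x_{2k+1}$, which contribute $2k$. Adding the endpoint weight $1$ gives $2k+1$ when $b=0$.
\item If $a=b=1$, both $x_{1}$ and $x_{2k+1}$ are excluded. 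The remaining $2k-1$ vertices $x_{2},\dots,x_{2k}$ admit $k$ occupied vertices, such as $x_{2},x_{4},\dots,x_{2k}$, giving $2k$. Adding both endpoints gives $2k+2$.
\end{itemize}
The case $k=0$ is checked directly: it gives the path $p_{I}-x_{1}-p_{O}$ with values $2,1,1,2$.

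Since every independent set realises exactly one pair $(a,b)$, we have $W^{\star}(W_{k})=\max_{a,b}\beta_{k}(a,b)=2k+2$. The optima therefore project onto $\{(0,0),(1,1)\}=T_{\mathrm{WIRE}}$, and each of these two pairs is attained by some optimum. This gives both soundness and completeness for $W_{k}$.

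The only step needing care is the odd-path counting in the case $a=b=1$. It is what forces the odd interior length, and it is where an off-by-one error would break the result. Beyond that there is a bookkeeping point: the input/output ordering convention for CROSSING must match the circuit definition, which is why its inputs were fixed as $(N,E)$ and its outputs as $(S,W)$ above.
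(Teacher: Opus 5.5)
Your proposal is correct and takes the same route as the paper, whose justification is just a citation of the enumerated gadget contracts of Sec.~\ref{sm:gadget-design} and of the wire contract Eq.~\eqref{eq:wire-path-contract}, itself obtained by ``a direct count'' of occupied interior vertices. You additionally spell out two steps the paper leaves implicit, and both are done correctly: each $T_{g}$ coincides with the truth relation $\mathcal{T}(C)$ of the single-component circuit in inputs-first order (with $(N,E)$ as the CROSSING inputs), and the explicit path counts yield Eq.~\eqref{eq:wire-branch-values}.
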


For an ordered pin tuple $B$ and an ordered subtuple $A$, write
$B\setminus A$ for the tuple obtained by deleting the entries of $A$
while preserving the order of the remaining entries.

\begin{lemma}[Truth relation composition]
\label{lem:circuit-truth-composition}
Let $C$ be obtained by connecting the ordered distinct output tuple
$O=(o_{1},\ldots,o_{r})$ of $C_{1}$ to the ordered distinct input tuple
$I=(i_{1},\ldots,i_{r})$ of $C_{2}$, where $r\ge 0$. Let $B_{1},B_{2}$
be the local boundary pin orders and $B_{C}$ the external boundary pin
order after the connected coordinates become internal. Put
$E_{1}=B_{1}\setminus O$ and $E_{2}=B_{2}\setminus I$. For bit vectors
$x_{i}$ on $E_{i}$ and an interface vector $a\in\{0,1\}^{r}$, let
$(x_{1},a)_{B_{1}}$ denote the row in $B_{1}$ order carrying $x_{1}$ on
$E_{1}$ and $a$ on $O$, define $(a,x_{2})_{B_{2}}$ analogously, and let
$(x_{1},x_{2})_{B_{C}}$ denote the external row in $B_{C}$ order. Then
\begin{equation}
\mathcal{T}(C)=
\left\{(x_{1},x_{2})_{B_{C}}:
\exists\,a\in\{0,1\}^{r},\
(x_{1},a)_{B_{1}}\in\mathcal{T}(C_{1}),\
(a,x_{2})_{B_{2}}\in\mathcal{T}(C_{2})\right\}.
\label{eq:circuit-truth-composition}
\end{equation}
\end{lemma}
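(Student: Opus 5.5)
Both inclusions in Eq.~\eqref{eq:circuit-truth-composition} follow from unfolding the definition of $\mathcal{T}$ as the graph of the evaluation map $F$. The first step is to fix how $C$ is evaluated. Because $C$ is acyclic and every connection goes from an output of $C_{1}$ to an input of $C_{2}$, every gate of $C_{1}$ can precede every gate of $C_{2}$ in a topological order of $D_{C}$. Every external input of $C$ is an external input of $C_{1}$ or of $C_{2}$, that is, a coordinate of $E_{1}$ or $E_{2}$ of input type. So an external input vector $x$ of $C$ splits uniquely into $x^{\mathrm{in}}_{1}$, supported on the input coordinates of $E_{1}$, and $x^{\mathrm{in}}_{2}$, supported on the input coordinates of $E_{2}$.

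Evaluating $C$ in this topological order proceeds in two stages. First $C_{1}$ is evaluated on $x^{\mathrm{in}}_{1}$, giving $F_{C_{1}}(x^{\mathrm{in}}_{1})$. This vector splits into the external outputs $y_{1}$ on $E_{1}$ and the interface values $a$ on $O$. Next, the connecting wires are point-to-point equalities, so input $i_{j}$ of $C_{2}$ receives $a_{j}$. Then $C_{2}$ is evaluated on $(a,x^{\mathrm{in}}_{2})$, giving $y_{2}$ on the output coordinates of $E_{2}$. Hence $F_{C}(x)=(y_{1},y_{2})$, placed in $B_{C}$ order. The only bookkeeping needed here is to check that $B_{C}$ is precisely the disjoint union of $E_{1}$ and $E_{2}$ with matching input/output roles. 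This holds because exactly the coordinates of $O$ and $I$ became internal, and it is the step I expect to need the most care, since it is notational rather than mathematical.

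For the inclusion $\subseteq$, take a row $(x,F_{C}(x))$ of $\mathcal{T}(C)$ and set $a$ to the interface values computed above. Then $(x_{1},a)_{B_{1}}=(x^{\mathrm{in}}_{1},F_{C_{1}}(x^{\mathrm{in}}_{1}))$ is a row of $\mathcal{T}(C_{1})$ by construction. Likewise $(a,x_{2})_{B_{2}}$, with $x_{2}=(x^{\mathrm{in}}_{2},y_{2})$, lies in $\mathcal{T}(C_{2})$.

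For the inclusion $\supseteq$, suppose some $a$ witnesses the right-hand side. Membership of $(x_{1},a)_{B_{1}}$ in $\mathcal{T}(C_{1})$ forces the $E_{1}$-outputs and $a$ to equal $F_{C_{1}}$ evaluated on the $E_{1}$-inputs. Determinism then makes $a$ unique. With that $a$, membership of $(a,x_{2})_{B_{2}}$ in $\mathcal{T}(C_{2})$ forces the $E_{2}$-outputs to equal $F_{C_{2}}(a,x^{\mathrm{in}}_{2})$. By the two-stage evaluation, these values are exactly $F_{C}$ on the combined input, so $(x_{1},x_{2})_{B_{C}}\in\mathcal{T}(C)$.

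The case $r=0$ is the disjoint union of $C_{1}$ and $C_{2}$. The existential over $a$ is then trivial, and the identity reduces to $\mathcal{T}(C)=\mathcal{T}(C_{1})\times\mathcal{T}(C_{2})$ up to coordinate order.
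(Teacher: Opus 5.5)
Your proof is correct and takes the same route as the paper. The paper justifies the lemma by observing that every evaluation of $C$ induces two local rows sharing the interface vector $a$, and that conversely any such pair determines a row of $\mathcal{T}(C)$ once $a$ is hidden, with $r=0$ reducing to a Cartesian product; you make explicit the two-stage topological evaluation and the determinism step that the paper's sketch leaves implicit.
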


The displayed equality is relational composition along the connected
interface: the connected pins share one interface value, and that value
is existentially hidden once the interface becomes
internal~\cite{Hutton1993RubyInterpreter,Spivak2013WiringDiagrams}.
Every circuit evaluation represented in $\mathcal{T}(C)$ induces two
local rows with the same interface vector $a$; conversely, every such
pair determines a row of $\mathcal{T}(C)$ after $a$ is hidden. When
$r=0$ the formula reduces to the Cartesian product for parallel
composition.

\begin{definition}[Circuit-induced clean weighted pin amalgamation]
\label{def:pin-amalgamation}
Let $C$ be obtained from $C_{1}$ and $C_{2}$ by the connection of
Lemma~\ref{lem:circuit-truth-composition}, along the interface tuples
$O=(o_{1},\ldots,o_{r})$ and $I=(i_{1},\ldots,i_{r})$. For
$i\in\{1,2\}$, let $(G_{i},w_{i};P_{i})$ be a weighted graph
realization of $C_{i}$, with vertex-disjoint copies taken if necessary.
The composite triple $(G,w;P_{C})$ is constructed as follows.

\emph{Graph.} Put
$\mathbf{p}_{1}=(p_{o_{1}},\ldots,p_{o_{r}})$ and
$\mathbf{p}_{2}=(p_{i_{1}},\ldots,p_{i_{r}})$. On the disjoint union
$U=V(G_{1})\sqcup V(G_{2})$, let $\sim$ be the equivalence relation
generated only by $p_{o_{j}}\sim p_{i_{j}}$ for $1\le j\le r$, let
$q:U\to U/{\sim}$ be the quotient map, and put
$\bar{p}_{j}=q(p_{o_{j}})=q(p_{i_{j}})$ and
$\bar{\mathbf{p}}=(\bar{p}_{1},\ldots,\bar{p}_{r})$. Define
\begin{align}
V(G)&=U/{\sim},
\label{eq:amalgamated-vertices}\\
E(G)&=\bigl\{\{q(u),q(v)\}:\{u,v\}\in E(G_{1})\cup E(G_{2}),\
q(u)\ne q(v)\bigr\}.
\label{eq:amalgamated-edges}
\end{align}
The \emph{clean edge condition} comprises the two structural
requirements built into this construction: only the specified pin pairs
are identified, and the edge set contains exactly the images of the two
local edge sets, so no additional edge is introduced.

\emph{Weights.} For every $\xi\in V(G)$,
\begin{equation}
w(\xi)=\sum_{i=1}^{2}\ \sum_{\substack{u\in V(G_{i})\\ q(u)=\xi}}
w_{i}(u),
\label{eq:amalgamated-weight}
\end{equation}
so every nonshared vertex retains its local weight and each identified
pin carries the sum of its two local weights.

\emph{Pins.} The quotient map carries every local boundary pin
correspondence into the composite graph, and the external readout tuple
is $P_{C}=(q(p_{b}))_{b\in B_{C}}$, listed in the order of $B_{C}$; its
entries remain pairwise distinct. For $r=0$ the construction is the
weighted disjoint union.
\end{definition}

\subsection{Composition theorems}
\label{sm:composition-theorems}

\begin{theorem}[Optimal weight additivity]
\label{thm:optimal-weight-additivity}
Let $(G,w;P_{C})$ be the amalgamation of
Definition~\ref{def:pin-amalgamation}. Then
\[
W^{\star}(G)=W^{\star}(G_{1})+W^{\star}(G_{2}).
\]
\end{theorem}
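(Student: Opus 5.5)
The plan is to prove the two inequalities separately. Both rest on one bookkeeping identity: independent sets of the amalgamated graph $G$ correspond to pairs of local independent sets that agree on the interface, and weights add under this correspondence. Throughout I use that the equivalence relation $\sim$ of Definition~\ref{def:pin-amalgamation} is generated only by the cross pairs $p_{o_j}\sim p_{i_j}$. The pins in $\mathbf{p}_1$ are pairwise distinct, and so are those in $\mathbf{p}_2$. Hence every class of $\sim$ is either a singleton or exactly one pair $\{p_{o_j},p_{i_j}\}$. In particular $q$ is injective on $V(G_1)$ and on $V(G_2)$ separately.

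\emph{Upper bound.} Let $S$ be any independent set of $G$. Put $S_i=\{u\in V(G_i):q(u)\in S\}$ for $i=1,2$. Each $S_i$ is independent in $G_i$. Indeed, if $\{u,v\}\in E(G_i)$ with $u,v\in S_i$, then $q(u)\neq q(v)$ by injectivity of $q$ on $V(G_i)$. By Eq.~\eqref{eq:amalgamated-edges}, $\{q(u),q(v)\}$ is then an edge of $G$ with both ends in $S$, a contradiction. Summing Eq.~\eqref{eq:amalgamated-weight} over $\xi\in S$ counts every preimage exactly once, so $w(S)=w_1(S_1)+w_2(S_2)$. This is at most $W^{\star}(G_1)+W^{\star}(G_2)$.

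\emph{Lower bound.} Here I need optimal local sets whose interface bits coincide, and I would obtain them from the realization hypotheses. First, $\mathcal{T}(C)$ is nonempty, because any input $x$ yields the row $(x,F_C(x))$. Fix one such row. By Lemma~\ref{lem:circuit-truth-composition} there is an interface vector $a$ with $(x_1,a)_{B_1}\in\mathcal{T}(C_1)$ and $(a,x_2)_{B_2}\in\mathcal{T}(C_2)$. Completeness of the realizations $(G_i,w_i;P_i)$ then gives $S_1\in\Opt(G_1,w_1)$ with $\pi_{P_1}(S_1)=(x_1,a)$ and $S_2\in\Opt(G_2,w_2)$ with $\pi_{P_2}(S_2)=(a,x_2)$. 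Thus $p_{o_j}\in S_1$ if and only if $p_{i_j}\in S_2$, for every $j$. Set $S=q(S_1\cup S_2)$.

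By the interface agreement, $q^{-1}(S)\cap V(G_i)=S_i$. Therefore any edge of $G$ inside $S$ would pull back, via Eq.~\eqref{eq:amalgamated-edges}, to a local edge inside $S_1$ or inside $S_2$, which is impossible. So $S$ is independent in $G$. The same weight identity as before gives $w(S)=W^{\star}(G_1)+W^{\star}(G_2)$. The case $r=0$ is the trivial disjoint-union version of the same argument.

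The main obstacle is this lower bound. Optimal local sets chosen independently may disagree on an identified pin, and then their union is not a valid independent set of $G$: a shared pin would be "half occupied", and its summed weight would not split correctly. The step that removes this difficulty is routing the choice through a single global truth-table row using Lemma~\ref{lem:circuit-truth-composition} together with local completeness, which is exactly where the realization hypothesis, rather than mere graph structure, is used.
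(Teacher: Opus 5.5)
Your proof is correct, and it has the same skeleton as the paper's. Independent sets of $G$ correspond to pairs of local independent sets that agree on the interface, with additive weights, which gives the upper bound; the realization hypothesis then supplies a compatible pair of local optima for the lower bound.

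The differences are in execution. For the upper bound, the paper branches on the interface vector $a$, builds the bijection $\Phi_a:\mathcal{F}_1(a)\times\mathcal{F}_2(a)\to\mathcal{F}(a)$, and bounds each $B_a=W_1^{(a)}+W_2^{(a)}$. You need only the restriction map $S\mapsto(S_1,S_2)$. You also make explicit what the paper folds into the clean edge condition: $\sim$ is a matching, so $q$ is injective on each $V(G_i)$.

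For the lower bound, the paper takes an \emph{arbitrary} $\widehat S_1\in\Opt(G_1,w_1)$ and reads off its interface vector $a^\star$. It then completes on the $G_2$ side using only two facts: the pins of $I$ are free inputs of $C_2$, and $G_2$ is complete. You instead start from a global row of $\mathcal{T}(C)$, split it with Lemma~\ref{lem:circuit-truth-composition}, and invoke completeness on both sides.

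The paper's route uses less (no completeness of $G_1$, no composition lemma) and yields more. It shows that every optimum of $G_1$ glues to an optimum of $G$, which is the fact $\mathcal{A}=\mathcal{A}_1$ exploited in Theorem~\ref{thm:multipin-composition}, and its per-branch identity is reused verbatim there. Your route is symmetric in $G_1,G_2$ and does not depend on the interface of $C_2$ being unconstrained. Combined with your upper-bound argument, it isolates exactly what additivity requires: a single pair of local optima agreeing on the interface.
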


\begin{proof}
Put $W_{i}^{\star}=W^{\star}(G_{i})$. For $a\in\{0,1\}^{r}$, let
$\mathcal{F}_{i}(a)$ be the independent sets of $G_{i}$ with
$\pi_{\mathbf{p}_{i}}=a$, let $\mathcal{F}(a)$ be the independent sets
of $G$ with $\pi_{\bar{\mathbf{p}}}=a$, and let
$W_{i}^{(a)}$ and $B_{a}$ be the corresponding constrained optima, with
value $-\infty$ when a family is empty.

Fix $a$. There is a bijection
$\Phi_{a}:\mathcal{F}_{1}(a)\times\mathcal{F}_{2}(a)\to\mathcal{F}(a)$,
$\Phi_{a}(R_{1},R_{2})=q[R_{1}\cup R_{2}]$. Indeed, the common
interface vector means that for every $j$ either both local
representatives $p_{o_{j}},p_{i_{j}}$ are selected or neither is, and
the clean edge condition ensures that the quotient union is
independent: all local edges are respected and there is no cross graph
edge. Conversely every $R\in\mathcal{F}(a)$ restricts to
$R_{i}=q^{-1}(R)\cap V(G_{i})\in\mathcal{F}_{i}(a)$ with
$R=\Phi_{a}(R_{1},R_{2})$. By Eq.~\eqref{eq:amalgamated-weight},
$w(\Phi_{a}(R_{1},R_{2}))=w_{1}(R_{1})+w_{2}(R_{2})$, and maximizing
over the product family gives
$B_{a}=W_{1}^{(a)}+W_{2}^{(a)}$.

Choose $\widehat{S}_{1}\in\Opt(G_{1},w_{1})$ and put
$a^{\star}=\pi_{\mathbf{p}_{1}}(\widehat{S}_{1})$, so
$W_{1}^{(a^{\star})}=W_{1}^{\star}$. Every value of the input tuple $I$
extends to a complete input vector of $C_{2}$, and the resulting truth
table row is realized by an MWIS of $G_{2}$ by local completeness;
hence $W_{2}^{(a^{\star})}=W_{2}^{\star}$ and
$B_{a^{\star}}=W_{1}^{\star}+W_{2}^{\star}$. For every $a$ the reverse
inequality $B_{a}\le W_{1}^{\star}+W_{2}^{\star}$ follows from
$W_{i}^{(a)}\le W_{i}^{\star}$. Therefore
$W^{\star}(G)=\max_{a}B_{a}=W_{1}^{\star}+W_{2}^{\star}$, and every
composite optimum has this weight while every local optimum has weight
$W_{i}^{\star}$. For $r=0$ the argument reduces to additivity for a
weighted disjoint union.
\end{proof}

For any number $r\ge 0$ of simultaneous connections, the branching
variable is the entire interface vector rather than its coordinates
separately.

\begin{theorem}[Multi-pin branching composition]
\label{thm:multipin-composition}
Let $(G_{i},w_{i};P_{i})$ be a weighted graph realization of $C_{i}$
for $i\in\{1,2\}$, connect the tuple $O$ of $r\ge 0$ distinct outputs
of $C_{1}$ to the tuple $I$ of $r$ distinct inputs of $C_{2}$, and let
$(G,w;P_{C})$ be the amalgamation of
Definition~\ref{def:pin-amalgamation}. Then
\begin{equation}
T(G,P_{C})=\mathcal{T}(C),
\label{eq:multipin-composite-relation}
\end{equation}
so $(G,w;P_{C})$ is a weighted graph realization of $C$.
\end{theorem}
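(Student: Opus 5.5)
The plan is to reuse the machinery from the proof of Theorem~\ref{thm:optimal-weight-additivity}. For each interface vector $a\in\{0,1\}^{r}$ that proof supplies a bijection $\Phi_{a}:\mathcal{F}_{1}(a)\times\mathcal{F}_{2}(a)\to\mathcal{F}(a)$ with $w(\Phi_{a}(R_{1},R_{2}))=w_{1}(R_{1})+w_{2}(R_{2})$. It also shows that $W^{\star}(G)=W_{1}^{\star}+W_{2}^{\star}$. From these two facts I first derive a characterization of $\Opt(G,w)$: an independent set $R$ of $G$ is optimal if and only if $R=\Phi_{a}(R_{1},R_{2})$ with $R_{i}\in\Opt(G_{i},w_{i})$ for both $i$. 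Indeed, given $R\in\mathcal{F}(a)$ with restrictions $R_{i}$, we have $w(R)=w_{1}(R_{1})+w_{2}(R_{2})$ and $w_{i}(R_{i})\le W_{i}^{\star}$. So $w(R)=W^{\star}(G)$ forces both inequalities to be equalities, and the converse direction is immediate. This step is the crux. It turns the global optimum set into pairs of local optima that agree on the interface, with no slack that could let a suboptimal local piece be compensated elsewhere.

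Next I transfer this characterization to pin projections. The composite readout tuple $P_{C}=(q(p_{b}))_{b\in B_{C}}$ consists of the images of the non-interface pins of $E_{1}=B_{1}\setminus O$ and $E_{2}=B_{2}\setminus I$. The quotient map identifies only the interface pins, so membership of $q(p_{b})$ in $\Phi_{a}(R_{1},R_{2})$ coincides with membership of $p_{b}$ in the corresponding $R_{i}$. Hence $\pi_{P_{C}}(\Phi_{a}(R_{1},R_{2}))=(x_{1},x_{2})_{B_{C}}$, where $(x_{1},a)_{B_{1}}=\pi_{P_{1}}(R_{1})$ and $(a,x_{2})_{B_{2}}=\pi_{P_{2}}(R_{2})$. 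Combining this with the first step yields
\[
T(G,P_{C})=\bigl\{(x_{1},x_{2})_{B_{C}}:\exists\,a,\ (x_{1},a)_{B_{1}}\in T(G_{1},P_{1}),\ (a,x_{2})_{B_{2}}\in T(G_{2},P_{2})\bigr\}.
\]
One detail needs care here. The forward inclusion is clear. For the reverse inclusion, I take local optima realizing the two rows; they share the interface value $a$, so they lie in $\mathcal{F}_{1}(a)\times\mathcal{F}_{2}(a)$, and $\Phi_{a}$ glues them into a composite optimum by the first step.

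Finally I substitute the local realizations $T(G_{i},P_{i})=\mathcal{T}(C_{i})$, which is soundness together with completeness from Definition~\ref{def:circuit-graph-realization}. After this substitution the right-hand side is exactly the relational composition of Lemma~\ref{lem:circuit-truth-composition}, which gives Eq.~\eqref{eq:multipin-composite-relation}. The main obstacle is ensuring that the first step does not depend on the particular $a^{\star}$ chosen in the additivity proof. The global optimum must be attainable for every $a$ that occurs in the composed relation, not just for one. This holds because optimality is defined by the total weight $W_{1}^{\star}+W_{2}^{\star}$, and any $a$ witnessed by local optimal rows on both sides attains that total. Because of this, no assumption is needed on whether $a$ is reachable from $C_{1}$'s inputs beyond what $\mathcal{T}(C_{1})$ already encodes. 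The case $r=0$ is the Cartesian product and is covered by the same argument with $a$ empty.
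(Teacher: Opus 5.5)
Your proposal is correct and follows the paper's proof essentially step by step: the per-branch bijection with additive weights, the characterization of $\Opt(G,w)$ as gluings of local optima sharing an interface vector $a$, and then soundness and completeness via Lemma~\ref{lem:circuit-truth-composition} once $T(G_{i},P_{i})=\mathcal{T}(C_{i})$ is substituted. The only difference is that you import $W^{\star}(G)=W_{1}^{\star}+W_{2}^{\star}$ from Theorem~\ref{thm:optimal-weight-additivity}, which is where the hypothesis that $I$ consists of inputs of $C_{2}$ enters. The paper instead re-derives this equality inside the proof, by showing $\mathcal{A}_{2}=\{0,1\}^{r}$ and hence $\mathcal{A}=\mathcal{A}_{1}\neq\varnothing$.
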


\begin{proof}
For $a\in\{0,1\}^{r}$, define the constrained families
$\mathcal{I}_{i}^{(a)}=\{S\ \text{independent in}\
G_{i}:\pi_{\mathbf{p}_{i}}(S)=a\}$ and
$\mathcal{I}^{(a)}(G)=\{S\ \text{independent in}\
G:\pi_{\bar{\mathbf{p}}}(S)=a\}$, with branch values
\begin{equation}
W_{i}^{(a)}=\max\{w_{i}(S):S\in\mathcal{I}_{i}^{(a)}\},
\qquad
B_{a}=\max\{w(S):S\in\mathcal{I}^{(a)}(G)\},
\label{eq:interface-branch-value}
\end{equation}
equal to $-\infty$ on empty families, and put
$W_{i}^{\star}=W^{\star}(G_{i})$. Let
$\Opt_{i}^{(a)}=\{S\in\Opt(G_{i},w_{i}):\pi_{\mathbf{p}_{i}}(S)=a\}$
and $\mathcal{A}_{i}=\{a:\Opt_{i}^{(a)}\ne\varnothing\}$. Local circuit
realization identifies $\mathcal{A}_{i}$ with the set of complete
interface vectors occurring in $\mathcal{T}(C_{i})$; because $I$
consists of input boundary pins of $C_{2}$,
$\mathcal{A}_{2}=\{0,1\}^{r}$ and
$\mathcal{A}:=\mathcal{A}_{1}\cap\mathcal{A}_{2}=\mathcal{A}_{1}
\ne\varnothing$.

In a fixed branch $a$, cleanliness gives the bijection
\begin{equation}
\mathcal{I}^{(a)}(G)\cong
\mathcal{I}_{1}^{(a)}\times\mathcal{I}_{2}^{(a)},
\qquad
(S_{1},S_{2})\longmapsto q[S_{1}\cup S_{2}],
\label{eq:vector-branch-bijection}
\end{equation}
and summing Eq.~\eqref{eq:amalgamated-weight} over corresponding sets
gives
\begin{equation}
B_{a}=W_{1}^{(a)}+W_{2}^{(a)}.
\label{eq:vector-composite-branch-value}
\end{equation}
The equality $W_{i}^{(a)}=W_{i}^{\star}$ holds exactly for
$a\in\mathcal{A}_{i}$, so $B_{a}=W_{1}^{\star}+W_{2}^{\star}$ if and
only if $a\in\mathcal{A}$, and
\begin{equation}
W^{\star}(G)=W_{1}^{\star}+W_{2}^{\star},
\qquad
\Opt(G,w)=\bigsqcup_{a\in\mathcal{A}}
\bigl\{q[S_{1}\cup S_{2}]:(S_{1},S_{2})\in
\Opt_{1}^{(a)}\times\Opt_{2}^{(a)}\bigr\}.
\label{eq:multipin-optimizer-decomposition}
\end{equation}

For soundness, every composite MWIS in
Eq.~\eqref{eq:multipin-optimizer-decomposition} restricts to two local
MWIS whose truth table rows agree on the interface vector $a$, and
Lemma~\ref{lem:circuit-truth-composition} places their combined
external row in $\mathcal{T}(C)$. For completeness, the same lemma
supplies, for every row of $\mathcal{T}(C)$, two local rows and a
common interface vector $a$; local completeness realizes those rows by
MWIS in $\Opt_{i}^{(a)}$, and their amalgamation attains
$W_{1}^{\star}+W_{2}^{\star}$ and realizes the required external row.
When $r=0$ the proof reduces to additivity and Cartesian product truth
relations for a weighted disjoint union.
\end{proof}

It is essential that Theorem~\ref{thm:multipin-composition} branches on
the complete interface vector. For example, the relations $\{00,11\}$
and $\{01,10\}$ both support $0$ and $1$ in each coordinate separately,
yet share no common two-bit vector, so coordinatewise support cannot
replace equality of the complete vector.

\begin{lemma}[Topological circuit decomposition]
\label{lem:topological-circuit-decomposition}
If a circuit $C$ satisfying the standing assumptions has more than one
primitive component instance, it can be expressed as two proper
subcircuits $C_{1},C_{2}$ joined by $r\ge 0$ distinct output-input pin
pairs.
\end{lemma}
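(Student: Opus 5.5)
The plan is to cut the circuit along a topological order of its component DAG. Let $c_1,\ldots,c_N$ with $N\ge 2$ be the primitive component instances of $C$. Because $C$ is finite and acyclic, the directed graph with one vertex per instance and an arc $c\to c'$ whenever some wire runs from an output pin occurrence of $c$ to an input pin occurrence of $c'$ is a DAG. It therefore admits a topological ordering $c_1,\ldots,c_N$ in which every wire points from a lower to a higher index.

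I will take $C_2$ to consist of the last component $c_N$ alone and $C_1$ to consist of $c_1,\ldots,c_{N-1}$ together with all wires among them. Both parts are nonempty because $N\ge 2$, so both are proper. Each part contains at least one primitive instance, every pin of each part belongs to one of its primitive components, and the part inherits acyclicity, determinism, and the point-to-point wiring assumption from $C$. Hence $C_1$ and $C_2$ satisfy the standing assumptions.

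Next, consider the wires of $C$ that cross the cut. By the topological order, every such wire runs from an output pin occurrence of some $c_j$ with $j<N$ to an input pin occurrence of $c_N$. No wire leaves $c_N$ toward $C_1$, since $c_N$ is a sink of the DAG. Following the cutting convention stated before Definition~\ref{def:circuit-graph-realization}, each crossing wire turns its source into an output boundary pin of $C_1$ and its target into an input boundary pin of $C_2$.

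I will then check that these source pins are pairwise distinct and that the target pins are pairwise distinct. Point-to-point wiring says each pin occurrence is incident to at most one wire. So the $r$ crossing wires give $r$ distinct outputs $O=(o_1,\ldots,o_r)$ and $r$ distinct inputs $I=(i_1,\ldots,i_r)$, ordered consistently. Here $r=0$ is allowed, in which case the decomposition is a disjoint union. Reconnecting $O$ to $I$ restores exactly the wires of $C$, so $C$ is obtained from $C_1$ and $C_2$ as in Lemma~\ref{lem:circuit-truth-composition}. The boundary tuple $B_C$ is also recovered, because the external pins are precisely the non-interface boundary pins of the two parts.

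I expect the main obstacle to be bookkeeping rather than anything deep. Cutting must never produce a wire from $C_2$ back to $C_1$, which would require the reverse orientation. It also must never leave a pin of the interface shared by two wires. Choosing a sink as $C_2$ handles the first issue, and point-to-point wiring handles the second. This lemma then supports induction on the number of components. That induction, using Corollary~\ref{cor:primitive-circuit-realizations} as the base case and Theorem~\ref{thm:multipin-composition} together with Theorem~\ref{thm:optimal-weight-additivity} as the inductive step, will prove Theorem~\ref{thm:realization}.
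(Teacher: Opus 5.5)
Your proof is correct and follows the paper's argument: cut the component DAG along a topological ordering, so that every crossing wire runs from an output of the prefix to an input of the suffix, and point-to-point wiring keeps the interface pins pairwise distinct. The paper allows any cut $1\le k<N$ and treats the disconnected case ($r=0$) separately, whereas your choice $k=N-1$ (peeling off a sink) handles both cases uniformly; this is a harmless specialization of the same argument.
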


\begin{proof}
Let $H_{C}$ be the DAG whose vertices are primitive component
instances and whose arcs are inter-component wires. If its underlying
undirected graph is disconnected, a nonempty union of connected
components and its complement define proper subcircuits with $r=0$.
Otherwise choose a topological ordering
$g_{1},\ldots,g_{N}$~\cite{Kahn1962TopologicalSorting} and fix any
$1\le k<N$; let $C_{1}$ contain the prefix $\{g_{1},\ldots,g_{k}\}$ and
$C_{2}$ the remainder. Topological order forbids arcs from suffix to
prefix, so every crossing arc runs from an output of $C_{1}$ to an
input of $C_{2}$; cutting these $r\ge 1$ arcs creates matching boundary
pin pairs, and the point-to-point wiring assumption keeps the resulting
pin occurrences pairwise distinct.
\end{proof}

\begin{theorem}[Finite circuit relation composition]
\label{thm:relation-composition}
Every finite acyclic circuit satisfying the standing assumptions and
assembled from the seven primitive components admits a weighted graph
realization $(G_{C},w_{C};P_{C})$ obtained from the primitive gadget
realizations by finitely many clean weighted pin amalgamations,
\begin{equation}
T(G_{C},P_{C})=\mathcal{T}(C).
\label{eq:finite-circuit-realization}
\end{equation}
\end{theorem}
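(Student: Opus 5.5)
The proof will be a strong induction on the number $N$ of primitive component instances in $C$. The hypothesis is that every circuit satisfying the standing assumptions with fewer than $N$ components admits a weighted graph realization obtained from primitive gadget realizations by finitely many clean weighted pin amalgamations.

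For the base case $N=1$, $C$ is a single primitive component: a fixed AND, OR, NOT, XOR, CROSSING, or FANOUT gadget, or a WIRE realized by some $W_{k}$. Here Corollary~\ref{cor:primitive-circuit-realizations} gives $T(G,P)=\mathcal{T}(C)$ directly, with zero amalgamations.

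For the inductive step $N>1$, I will invoke Lemma~\ref{lem:topological-circuit-decomposition}. It splits $C$ into proper subcircuits $C_{1},C_{2}$, each with fewer than $N$ components, joined by $r\ge 0$ distinct output--input pin pairs that run from $C_{1}$ to $C_{2}$. I will then check that $C_{1}$ and $C_{2}$ again satisfy the standing assumptions:
\begin{itemize}
\item each is finite, acyclic, and deterministic, and contains at least one component;
\item cut wires become boundary pins incident to primitive components;
\item wiring stays point to point, and fanout remains explicit.
\end{itemize}
The induction hypothesis then yields realizations $(G_{i},w_{i};P_{i})$ of $C_{i}$, each built by finitely many amalgamations. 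Taking vertex-disjoint copies, I form the amalgamation of Definition~\ref{def:pin-amalgamation} along $O$ and $I$. Theorem~\ref{thm:multipin-composition} then gives $T(G,P_{C})=\mathcal{T}(C)$, and the total number of amalgamations stays finite. One point to confirm here is that the boundary order $B_{C}$ produced by the cut matches the order used in Lemma~\ref{lem:circuit-truth-composition}; the external pins are those of $B_{1}\setminus O$ and $B_{2}\setminus I$, reordered into $B_{C}$.

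The main obstacle is the hypothesis of Theorem~\ref{thm:multipin-composition} that $I$ consists of \emph{input} boundary pins of $C_{2}$, so that every interface vector extends to a truth-table row of $C_{2}$. I expect this to follow because $C_{2}$ is itself a deterministic circuit: its inputs are free variables and each value propagates to a unique evaluation. I will also need the clean edge condition to hold for the geometric layout, but in this abstract statement it is built into Definition~\ref{def:pin-amalgamation}. If the combinatorial amalgamation is taken as the definition, that requirement is automatic, and it only becomes a compiler obligation when the layout is realized physically.

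Finally, applying Theorem~\ref{thm:optimal-weight-additivity} along the same induction gives the additivity clause of Theorem~\ref{thm:realization} as a corollary.
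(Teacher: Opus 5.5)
Your proposal is correct and follows the paper's own proof: strong induction on the number of primitive components, with Corollary~\ref{cor:primitive-circuit-realizations} as the base case, and Lemma~\ref{lem:topological-circuit-decomposition} followed by Theorem~\ref{thm:multipin-composition} on the amalgamated induction-hypothesis realizations for the step. Your extra checks are points the paper leaves implicit, and they hold as you describe. These are that $C_{1},C_{2}$ inherit the standing assumptions, that the cut targets are genuine input boundary pins of $C_{2}$ (guaranteed by the topological-prefix cut), and that the boundary orders match.
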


\begin{proof}
Induct on the number $N$ of primitive component instances. For $N=1$
the claim is Corollary~\ref{cor:primitive-circuit-realizations}. For
$N>1$, Lemma~\ref{lem:topological-circuit-decomposition} expresses $C$
as two proper subcircuits joined along $r\ge 0$ output-input pairs;
both have realizations by the induction hypothesis, and
Theorem~\ref{thm:multipin-composition} applied to their amalgamation
produces a realization of $C$.
\end{proof}

\begin{corollary}[WIRE insertion]
\label{cor:wire-insertion}
Let $C^{\mathrm{route}}$ be obtained from $C$ by replacing any finite
collection of point-to-point connections by finite serial chains of
WIRE components. Then, in the original external boundary order,
\begin{equation}
\mathcal{T}(C^{\mathrm{route}})=\mathcal{T}(C),
\label{eq:wire-insertion-relation}
\end{equation}
and if every inserted WIRE is realized by a graph $W_{k}$ satisfying
Eq.~\eqref{eq:wire-path-contract}, with all endpoint identifications
clean amalgamations, the composed weighted graph is a weighted graph
realization of $C$.
\end{corollary}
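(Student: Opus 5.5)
The statement has two parts, and I will handle them separately. The first is the relational identity $\mathcal{T}(C^{\mathrm{route}})=\mathcal{T}(C)$. The second is that the physically composed graph realizes $C$.

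\emph{Relational part.} I will induct on the number of inserted WIRE components, so it suffices to treat a single insertion. Suppose a connection from output pin $o$ of some component to input pin $i$ of another is replaced by $o\to\mathrm{WIRE}\to i$. I view $C^{\mathrm{route}}$ as the composition of two pieces: the subcircuit $C$ with that one connection cut, and the one-component circuit WIRE. Applying Lemma~\ref{lem:circuit-truth-composition} twice, once per wire endpoint, the hidden interface variables $a$ (at $o$) and $a'$ (at $i$) must satisfy $(a,a')\in T_{\mathrm{WIRE}}=\{(0,0),(1,1)\}$, which forces $a=a'$. The existential over the pair $(a,a')$ therefore collapses to a single existential over $a$, and that is exactly the composition along the original connection.

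A cleaner route is to avoid cutting $C$ at all. Take the topological decomposition used in Lemma~\ref{lem:topological-circuit-decomposition}, and place the WIRE immediately after its source component in the topological order. Then both applications of the lemma are legitimate prefix/suffix cuts with point-to-point arcs. The external boundary order is unchanged, because the WIRE's pins become internal.

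\emph{Realization part.} $C^{\mathrm{route}}$ is again a finite acyclic circuit satisfying the standing assumptions. Its primitives are the original ones together with the WIRE components, and each WIRE is realized by some $W_k$. By Corollary~\ref{cor:primitive-circuit-realizations}, each $W_k$ is a realization, via Eq.~\eqref{eq:wire-path-contract}. Theorem~\ref{thm:relation-composition} then gives a graph obtained by clean amalgamations with $T(G,P)=\mathcal{T}(C^{\mathrm{route}})$, and combining this with the relational part yields $\mathcal{T}(C)$.

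\emph{Main obstacle.} The induction in Theorem~\ref{thm:relation-composition} builds the graph through a particular sequence of amalgamations, whereas the compiler glues the chains in directly. I will argue that the composed graph does not depend on the order of amalgamation. Quotienting by the union of the identifications and taking images of the edge sets is associative. The summed weights of Eq.~\eqref{eq:amalgamated-weight} are likewise order-independent. Hence any clean sequence of endpoint identifications produces the same weighted graph.

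The delicate point is cleanliness of the intermediate steps. The hypothesis says every endpoint identification is clean, and the global clean edge condition says no extra edges appear. Together these should imply that each intermediate amalgamation in the inductive decomposition is also clean. This is the step I would check most carefully.
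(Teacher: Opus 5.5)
Your proposal is correct and follows essentially the paper's route: the relation identity comes from Lemma~\ref{lem:circuit-truth-composition}, with $T_{\mathrm{WIRE}}$ collapsing the hidden interface variables, and the graph statement comes from repeated clean amalgamation (Theorem~\ref{thm:multipin-composition}, which you invoke in packaged form as Theorem~\ref{thm:relation-composition}); your topological placement of the WIRE and your order-independence argument spell out points the paper leaves implicit. The ``delicate'' intermediate-cleanliness step needs no further work: Definition~\ref{def:pin-amalgamation} builds cleanliness into the abstract quotient, so every intermediate amalgamation is clean by construction, and the only physical requirement is that the compiled graph equal the global quotient, which is exactly your hypothesis.
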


\begin{proof}
Each WIRE relation forces its two endpoint signals to coincide, so
applying Lemma~\ref{lem:circuit-truth-composition} successively along a
serial chain existentially hides the internal WIRE pins while
preserving equality of the outer endpoints, which proves
Eq.~\eqref{eq:wire-insertion-relation}; the graph statement follows
from repeated applications of
Theorem~\ref{thm:multipin-composition}.
\end{proof}

\begin{proof}[Proof of Theorem~\ref{thm:realization}]
The compiled graph $G_{C}$ is assembled from the primitive gadget
realizations by clean amalgamations, with routed connections realized
as WIRE chains. The relation equality
$T(G_{C},P_{C})=\mathcal{T}(C)$ is
Theorem~\ref{thm:relation-composition} combined with
Corollary~\ref{cor:wire-insertion}. The additivity of the optimal
weight over the constituent gadgets and wire chains follows by
applying Theorem~\ref{thm:optimal-weight-additivity} at every
amalgamation step of the induction.
\end{proof}

\subsection{The output branch criterion}
\label{sm:output-branch}

Let $(G_{C},w_{C};P_{C})$ be a compiled weighted graph realization with
output port atom $v_{O}$ of weight $w_{O}$, and let
$W_{\mathrm{br}}=W^{\star}(G_{C}-N[v_{O}])$ as in the main text. Write
$\mathcal{G}$ for the gadget instances placed by the compiler and
$\mathcal{W}$ for the wire chains it routes, so that the additivity
clause of Theorem~\ref{thm:realization} reads
$W^{\star}=\sum_{g\in\mathcal{G}}C_{g}+\sum_{c\in\mathcal{W}}(m_{c}+1)$,
with $C_{g}$ the gadget optimum and $m_{c}$ the number of interior atoms
of chain $c$, both tabulated in Table~\ref{tab:sm-gadget-summary}.

\begin{proof}[Proof of Theorem~\ref{thm:branch-criterion}]
The independent sets of $G_{C}$ containing $v_{O}$ are in bijection
with the independent sets of $G_{C}-N[v_{O}]$ through
$S_{\mathrm{br}}\mapsto S_{\mathrm{br}}\cup\{v_{O}\}$, so their best
possible weight is $w_{O}+W_{\mathrm{br}}$, the first argument of
Eq.~\eqref{eq:branch}. The equality
$W^{\star}-W_{\mathrm{br}}=w_{O}$ therefore holds exactly when some
global MWIS contains $v_{O}$. By Theorem~\ref{thm:realization} this is
equivalent to the existence of an input $x\in\{0,1\}^{n}$ with
$F_{C}(x)=1$, i.e.\ to satisfiability. In the equality case, for any
MWIS $S_{\mathrm{br}}$ of the branched graph the set
$S_{\mathrm{br}}\cup\{v_{O}\}$ is a full optimum, and by soundness its
port projection is a row of the truth table with output $1$, so the
input port occupations spell a satisfying assignment. Under a strict
deficit no global MWIS selects $v_{O}$, so no gate consistent
configuration attains $O=1$ and the instance is unsatisfiable.
\end{proof}

The input branching variant of the same identity, applied port by port
to a specified input assignment, underlies the forward evaluation
procedure of Sec.~\ref{subsec:method-forward}.

\begin{remark}[Exact certificate versus sampling evidence]
\label{rem:certificate}
Equation~\eqref{eq:branch-criterion} is a mathematical decision
certificate only when $W_{\mathrm{br}}$ is a certified optimal value;
$W^{\star}$ is exact by Theorem~\ref{thm:realization} and carries no such
requirement. A sampled state that decodes to an input $x$ with
$F_{C}(x)=1$ is a constructive SAT witness even if global optimality is
unknown. By contrast, failure to observe the equality in finitely many
analog samples supplies lower bounds and empirical evidence, not an
UNSAT certificate.
\end{remark}

\begin{center}
\noindent\rule{\columnwidth}{0.4pt}\\[1pt]
\noindent\textbf{Algorithm S1.}~Circuit-SAT decision and witness
extraction.\\[-2pt]
\noindent\rule{\columnwidth}{0.4pt}
\end{center}
\vspace{-6pt}
\begin{algorithmic}[1]
\Require circuit $C$ with designated output $O$
\State $(G_{C},w,v_{O},\mathcal{G},\mathcal{W})\gets
  \Call{CompileToWeightedKSG}{C}$
  \Comment{Sec.~\ref{subsec:method-compilation}}
\State $W^{\star}\gets\sum_{g\in\mathcal{G}}C_{g}
  +\sum_{c\in\mathcal{W}}(m_{c}+1)$
  \Comment{Theorem~\ref{thm:realization}; no optimization}
\State $(W_{\mathrm{br}},S_{\mathrm{br}})\gets
  \Call{ExactMWIS}{G_{C}-N[v_{O}],w}$
\If{$W^{\star}-W_{\mathrm{br}}=w_{O}$}
  \State \Return \textsc{sat}, $\Call{ReadInputPorts}{S_{\mathrm{br}}}$
\Else
  \State \Return \textsc{unsat}
\EndIf
\end{algorithmic}
\vspace{-4pt}
\noindent\rule{\columnwidth}{0.4pt}
\vspace{4pt}

\section{Design of the weighted KSG gate gadgets}
\label{sm:gadget-design}

This section documents the design, search procedure, and explicit verification
of the weighted king-subgraph (KSG) gate gadgets---AND, NOT, OR, and
XOR, of which the three two-input gates constitute the gate library shown
in Fig.~\ref{fig:overview}(b) of
the main text---together with the two auxiliary gadgets used by the
placement-and-routing compiler: the wire-crossing gadget, which lets two
routed signals cross without interacting, and the fan-out gadget,
which duplicates a signal for multiple downstream gates. For each gadget we
specify the atom coordinates on the king grid, the integer on-site weights,
and the complete set of degenerate maximum-weight independent sets (MWIS),
which realises exactly the truth table of the corresponding Boolean
primitive. The library obeys a uniform selection criterion: every atom
weight is $1$ or $2$, with all ports at weight $1$, so the analog
implementation requires only two local-detuning values and ports look
identical across all six gadgets. Each gadget is either certified minimal
by exhaustive enumeration or verified minimal within its search class, as
detailed below.

\subsection{Search formulation}
\label{sm:search-formulation}

A gate gadget is an induced subgraph $G=(V,E)$ of the king grid: vertices
occupy integer grid sites, and two vertices are adjacent whenever their
Chebyshev distance is one, so that horizontal, vertical, and diagonal
nearest neighbours are connected. A subset $P=(p_{1},\dots,p_{k})\subset V$
of \emph{pins} carries the Boolean interface; for the two-input gates below,
$p_{1},p_{2}$ store the inputs $x_{1},x_{2}$ and $p_{3}$ stores the output
$y$. The pins must be pairwise non-adjacent, since truth-table rows in which
several pins are simultaneously $1$ must be realisable as independent sets.

Following the blockade physics of Sec.~\ref{subsec:rydberg-arrays}, the
admissible configurations of the gadget are the maximal independent sets
of $G$. Each vertex $v$ carries an integer weight $w_{v}\ge 1$ (mapped to a
local detuning in the analog implementation), and a configuration $S$ scores
$w(S)=\sum_{v\in S}w_{v}$. Writing $\pi(S)\in\{0,1\}^{k}$ for the bit pattern
that $S$ induces on the pins, the gadget is a valid encoding of a truth
table $T\subset\{0,1\}^{k}$ when the ground-state manifold reads out exactly
the allowed rows:
\begin{equation}
\{\,\pi(S) \;:\; S \text{ maximal independent},\; w(S)=C\,\} = T,
\qquad
C=\max_{S}\,w(S).
\label{eq:sm-gadget-condition}
\end{equation}
The weight search is an integer program in the spirit of the automated
gadget-search strategy of Ref.~\cite{Pan2025TriangularGadgetSearch}, adapted
here to the king-subgraph geometry: for each candidate graph and pin
placement, the maximal independent sets are grouped by their pin patterns,
one representative configuration is selected for each allowed row of $T$,
and the solver searches for integer weights such that the representatives
tie at a common score $C$ while every other maximal independent set scores
at most $C-1$,
\begin{equation}
w(S_{r})=C \quad \forall\, r\in T,
\qquad
w(S)\le C-1 \quad \text{for every other maximal independent set } S.
\label{eq:sm-ilp}
\end{equation}
Two further ingredients complete the formulation. First, the physical
implementation imposes a geometric \emph{port constraint}: each input and
output port should be king-adjacent to at most two auxiliary atoms, so that
ports remain accessible for inter-gadget wiring without exceeding the
nearest-neighbour budget of the Rydberg layout. Second, candidate graphs
are enumerated at two levels of rigour. Exploratory searches enumerate
connected induced subgraphs of small king grids in order of increasing
vertex count, so a successful gadget is accompanied by the negative results
at all smaller sizes within the explored class. The optimality certificates
quoted below rest instead on exhaustive sweeps with no grid box: layouts
are enumerated as \emph{polyplets}---king-connected cell sets normalised by
translation, grown size by size---so every king-subgraph layout of a given
size is covered, and the per-size layout counts ($110$, $638$, $3832$,
$23592$, $147941$, and $940982$ for sizes $4$--$9$) reproduce the known
census of king-connected polyominoes, independently confirming that the
enumeration is complete. Nonexistence claims are established with the
weight solver in its most permissive form (integer weights unbounded above,
auxiliary weights allowed to vanish, no port-degree bound), so they hold a
fortiori under the library conventions. Each reported gadget is finally
verified independently of the solver by brute-force enumeration of all
maximal independent sets, confirming
condition~\eqref{eq:sm-gadget-condition}.

\subsection{AND gadget}
\label{sm:and-gadget}

The AND gadget uses six atoms on a $3\times3$ king grid: pins
\begin{equation}
x_{1}=(1,1),\qquad x_{2}=(3,2),\qquad y=(1,3),
\end{equation}
and three auxiliary atoms $a=(2,1)$, $b=(1,2)$, $c=(2,3)$, with the king
edges $(x_{1},a)$, $(x_{1},b)$, $(x_{2},a)$, $(x_{2},c)$, $(y,b)$, $(y,c)$,
$(a,b)$, $(b,c)$. The solver returns the weights
\begin{equation}
(w_{x_{1}},w_{x_{2}},w_{y},w_{a},w_{b},w_{c})=(1,1,1,1,2,2).
\end{equation}
With these weights the maximum-weight value is $C=3$, attained by exactly
four maximal independent sets, one per row of the AND truth table
[Fig.~\ref{fig:sm-and}]:
\begin{equation}
000:\{a,c\},\qquad
100:\{x_{1},c\},\qquad
010:\{x_{2},b\},\qquad
111:\{x_{1},x_{2},y\}.
\end{equation}
The strongest invalid pin pattern, $001:\{y,a\}$, scores only $2$ and is
therefore excluded from the ground-state manifold. The optimal pin patterns
are exactly $\{000,100,010,111\}$, i.e.\ $y=x_{1}\wedge x_{2}$.

We note that the weighting is essential: treated as an \emph{unweighted}
maximum independent set problem, the same graph has the unique maximum
independent set $\{x_{1},x_{2},y\}$ and records only the $111$ row. The full
AND relation emerges only under the weighted rule of
Eq.~\eqref{eq:sm-gadget-condition}, which compares the weights of maximal
independent sets.

The gadget is certified optimal on all counts by the exhaustive polyplet
sweep of Sec.~\ref{sm:search-formulation}. No king-connected layout of four
or five cells admits AND even under the most permissive weight rules, so
six atoms is a global minimum; and no six-atom AND gadget fits a bounding
box smaller than $3\times3$, so the footprint is minimal as well. Within
the $3\times3$ box the minimum total weight is $8$, attained uniquely (up
to the eight lattice symmetries) by the gadget above, whose port
king-degrees are all within the wiring budget of two. The only strictly
lighter six-atom gadget has total weight $7$ but requires a $4\times4$
bounding box and a degree-$3$ output port, violating that budget; under
the port constraint the gadget above is the unique weight-optimal AND
gadget.

\begin{figure}[t]
\centering
\includegraphics[width=0.98\linewidth]{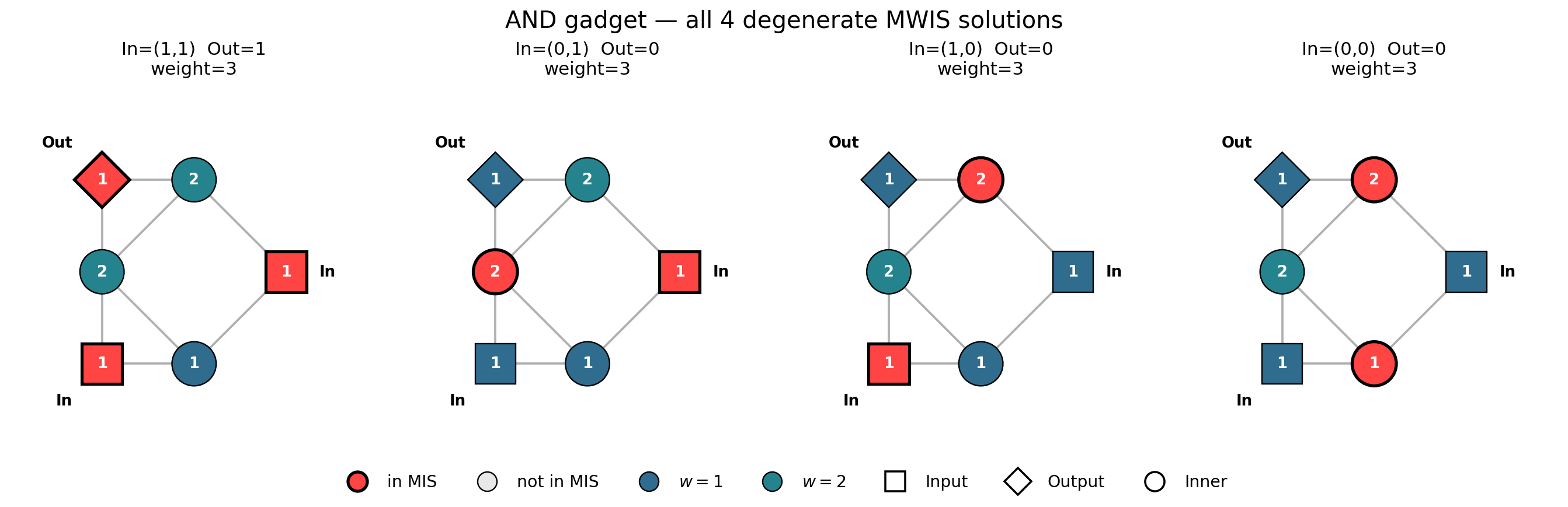}
\caption{The six-atom AND gadget and its four degenerate MWIS solutions, one
per row of the AND truth table. Squares are the input ports
($x_{1},x_{2}$), the diamond is the output port ($y$), and circles are inner
auxiliary atoms; the number inside each vertex is its integer weight. Red
fill marks atoms in the selected independent set; atoms not in the set are
coloured by their weight (legend). All four
rows $000$, $100$, $010$, $111$ tie at the maximum weight $C=3$, so the
ground-state pin patterns are exactly the AND truth table. Six atoms, the
$3\times3$ footprint, and the total weight are all certified minimal.}
\label{fig:sm-and}
\end{figure}

\FloatBarrier
\subsection{NOT gadget}
\label{sm:not-gadget}

The NOT gate ($y=\neg x$) is the one primitive that requires no auxiliary
atom at all: the gadget is the single king edge $K_{2}$, two atoms
\begin{equation}
x=(1,1),\qquad y=(2,1),
\end{equation}
placed king-adjacent, with equal unit weights
\begin{equation}
(w_{x},w_{y})=(1,1).
\end{equation}
The maximal independent sets of $K_{2}$ are exactly $\{x\}$ and $\{y\}$,
with pin patterns [Fig.~\ref{fig:sm-not}]
\begin{equation}
10:\{x\}\ (\text{weight }1),
\qquad
01:\{y\}\ (\text{weight }1),
\end{equation}
both tying at $C=1$, so the ground-state pin patterns are exactly
$\{10,01\}$, i.e.\ $y=\neg x$. The two invalid rows never even enter the
state space: $11$ is not an independent set, since the edge forbids it, and
$00$---the empty set---is not maximal, since either atom can always be
added. The maximal-independent-set structure of $K_{2}$ therefore \emph{is}
the NOT truth table, the same perfect property as the OR gadget below, with
nothing left for the weights to suppress.

Three remarks are in order. First, the pins must be adjacent, which
inverts the usual pin rule: the other gate gadgets keep their pins pairwise
non-adjacent because some truth-table row switches all pins on
simultaneously, but NOT has no row with both pins on, and the edge itself
is the entire mechanism---it enforces the anti-correlation. Placed
non-adjacent, the two atoms would have the single maximal independent set
$\{x,y\}$, reading the wrong pattern $11$. Second, the weights must be
equal: a scan over $(w_{x},w_{y})\in\{1,2,3\}^{2}$ confirms that the two
rows tie exactly when $w_{x}=w_{y}$, and unit weights suffice, so NOT costs
no weight-$2$ atom. Third, the gadget is trivially optimal in every metric
of Table~\ref{tab:sm-gadget-summary}: two atoms is the minimum conceivable
for a two-pin relation, the bounding box is $2\times1$, all weights are
$1$, and each port has zero auxiliary neighbours, so the port constraint is
satisfied vacuously. Since the encoded relation $\{01,10\}$ is symmetric in
$x$ and $y$ (it is $x\oplus y=1$), either atom can serve as the input: the
gadget is its own mirror image and doubles as an inverting wire segment.

\begin{figure}[t]
\centering
\includegraphics[width=0.40\linewidth]{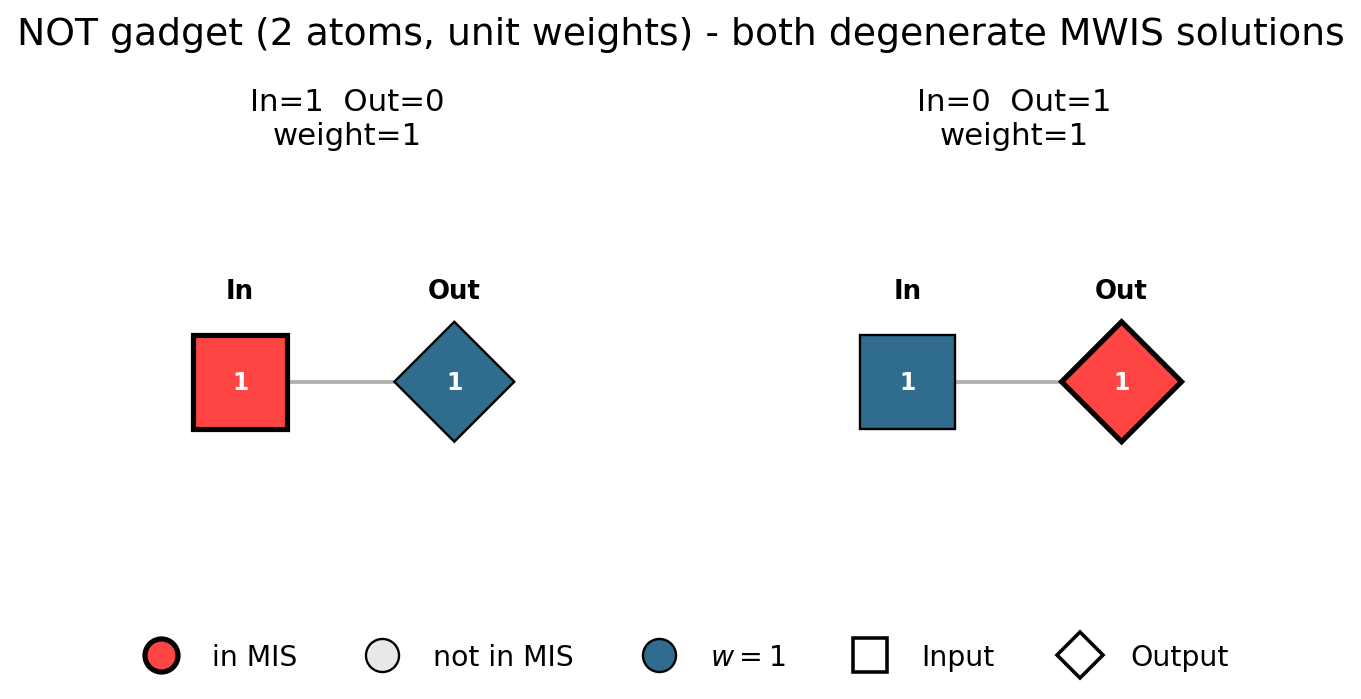}
\caption{The two-atom NOT gadget and its two degenerate MWIS solutions,
one per row of the NOT truth table, in the same convention as
Fig.~\ref{fig:sm-and}. The square is the input port ($x$), the diamond is
the output port ($y$), and the number inside each vertex is its integer
weight. Both rows $10$ and $01$ tie at the maximum weight $C=1$, while the
invalid rows $11$ (not independent) and $00$ (not maximal) never appear in
the state space at all, so the gadget needs no auxiliary atom and no
weight above $1$.}
\label{fig:sm-not}
\end{figure}

\FloatBarrier
\subsection{OR gadget}
\label{sm:or-gadget}

The OR gadget likewise uses six atoms, on a $4\times3$ window of the king
grid: pins
\begin{equation}
x_{1}=(2,3),\qquad x_{2}=(3,1),\qquad y=(4,3),
\end{equation}
and auxiliary atoms $a=(1,2)$, $b=(2,1)$, $c=(3,2)$, with the king edges
$(x_{1},a)$, $(x_{1},c)$, $(x_{2},b)$, $(x_{2},c)$, $(y,c)$, $(a,b)$,
$(b,c)$. The solver returns the weights
\begin{equation}
(w_{x_{1}},w_{x_{2}},w_{y},w_{a},w_{b},w_{c})=(1,1,1,1,1,2).
\end{equation}
The maximum-weight value is $C=3$, attained by exactly four maximal
independent sets [Fig.~\ref{fig:sm-or}]:
\begin{equation}
000:\{a,c\},\qquad
101:\{x_{1},y,b\},\qquad
011:\{x_{2},y,a\},\qquad
111:\{x_{1},x_{2},y\},
\end{equation}
so the ground-state pin patterns are exactly $\{000,101,011,111\}$, i.e.\
$y=x_{1}\vee x_{2}$. The gadget has a distinctive structural property: the
four optimal configurations are the \emph{only} maximal independent sets of
the graph, so there is no spurious pattern to suppress at all---the
maximal-independent-set structure of the graph \emph{is} the OR truth
table.

The exhaustive polyplet sweep certifies this gadget in all three respects.
No king-connected layout of four or five cells admits OR even under the
most permissive weight rules, so six atoms is a global minimum. At six
atoms no all-unit-weight gadget exists, and no gadget is lighter than this
one: the maximum weight $2$ and the total weight $7$ are both globally
minimal. The layout requires a $4\times3$ bounding box---searches confined
to the $3\times3$ grid instead find a six-atom OR variant whose off-state
representative is a single weight-$3$ auxiliary atom, with weights
$(1,1,1,3,1,1)$---and its port king-degrees are $(2,2,1)$, within the
wiring budget.

\begin{figure}[t]
\centering
\includegraphics[width=0.98\linewidth]{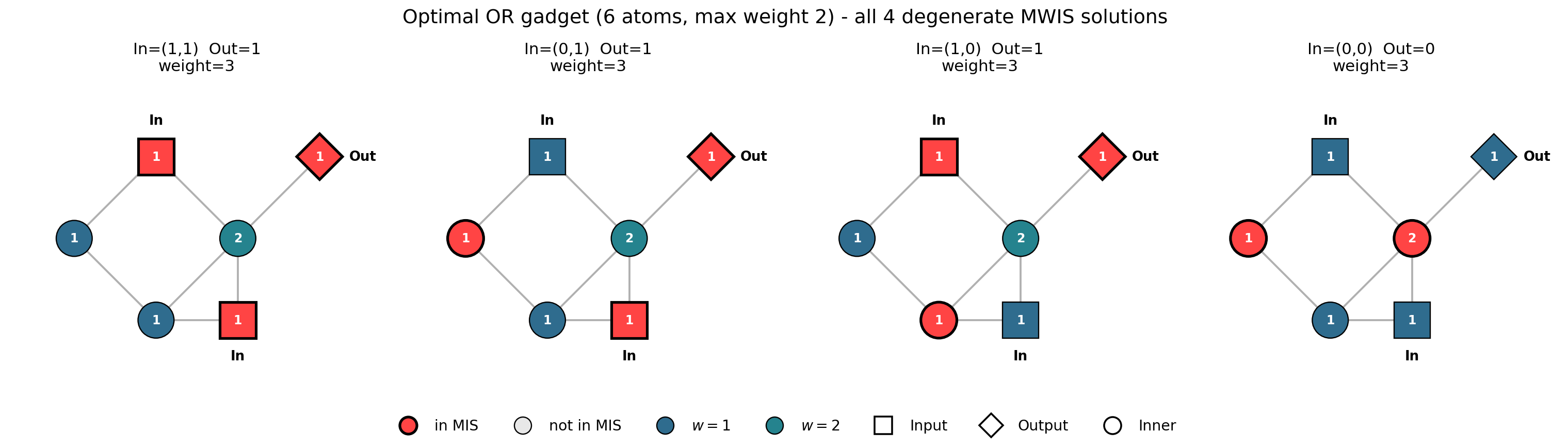}
\caption{The six-atom OR gadget and its four degenerate MWIS solutions, one
per row of the OR truth table, in the same convention as
Fig.~\ref{fig:sm-and}. All four rows $000$, $101$, $011$, $111$ tie at the
maximum weight $C=3$, no atom exceeds weight $2$, and the four optimal
configurations are the only maximal independent sets of the graph. The atom
count, the maximum weight, and the total weight are all certified minimal.}
\label{fig:sm-or}
\end{figure}

\FloatBarrier
\subsection{XOR gadget}
\label{sm:xor-gadget}

XOR is the most constrained primitive of the library: no king-connected
layout of up to eight atoms admits it at all, as certified below. The
gadget adopted here uses nine atoms on a $4\times5$ window: pins
\begin{equation}
x_{1}=(4,2),\qquad x_{2}=(4,5),\qquad y=(1,3),
\end{equation}
and six auxiliary atoms
\begin{equation}
a=(2,3),\quad b=(3,4),\quad c=(1,2),\quad d=(3,2),\quad
e=(2,1),\quad f=(3,3),
\end{equation}
with edges given by king adjacency on these nine sites. The port
king-neighbourhoods among the auxiliary atoms are
$x_{1}:\{d,f\}$, $x_{2}:\{b\}$, and $y:\{a,c\}$, all within the port budget
of two. The solver returns the weights
\begin{equation}
(w_{x_{1}},w_{x_{2}},w_{y},w_{a},w_{b},w_{c},w_{d},w_{e},w_{f})
=(1,1,1,2,2,2,2,2,2).
\end{equation}
The maximum-weight value is $C=6$, attained by exactly four maximal
independent sets [Fig.~\ref{fig:sm-xor}]:
\begin{equation}
000:\{b,c,d\},\qquad
101:\{x_{1},y,b,e\},\qquad
011:\{x_{2},y,e,f\},\qquad
110:\{x_{1},x_{2},a,e\},
\end{equation}
so the ground-state pin patterns are exactly $\{000,101,011,110\}$, i.e.\
$y=x_{1}\oplus x_{2}$. Every other maximal independent set scores at most
$5$, leaving a unit spectral gap (in weight units) between the encoded
truth table and all spurious patterns.

The exhaustive polyplet sweep of Sec.~\ref{sm:search-formulation} certifies
this gadget in atoms and in weight. For sizes $4$--$8$, every layout and
every pin assignment was solved in the most permissive setting; no layout
of at most eight atoms admits XOR at all---at size $8$ alone this amounts
to roughly $3.8$ million infeasible integer programs---so nine atoms is an
unconditional global minimum. All $940982$ nine-cell layouts were then
swept for a total weight of at most $14$ under the library conventions
(port king-degree $\le2$, integer weights $1$--$3$); none exists, so the
total weight $15$ of the assignment above is weight-minimal, and the
maximum weight $2$ is the smallest possible, since an all-unit assignment
would have total weight $9$.

This certified gadget descends from an earlier verified eleven-atom
candidate obeying the same port constraint---pins as above and eight
auxiliary atoms at $(3,4)$, $(1,1)$, $(2,3)$, $(2,5)$, $(1,2)$, $(3,2)$,
$(2,1)$, $(3,3)$, with weights $(1,1,1,3,1,2,1,3,2,3,2)$ and $C=8$---by
deleting two auxiliary atoms, relocating a third, and re-solving all
weights. That predecessor is retained only for reference: all compiled
instances of
Secs.~\ref{sm:cir1-verification}--\ref{sm:multi-verification} are compiled
with the nine-atom gadget above.

\begin{figure}[t]
\centering
\includegraphics[width=0.98\linewidth]{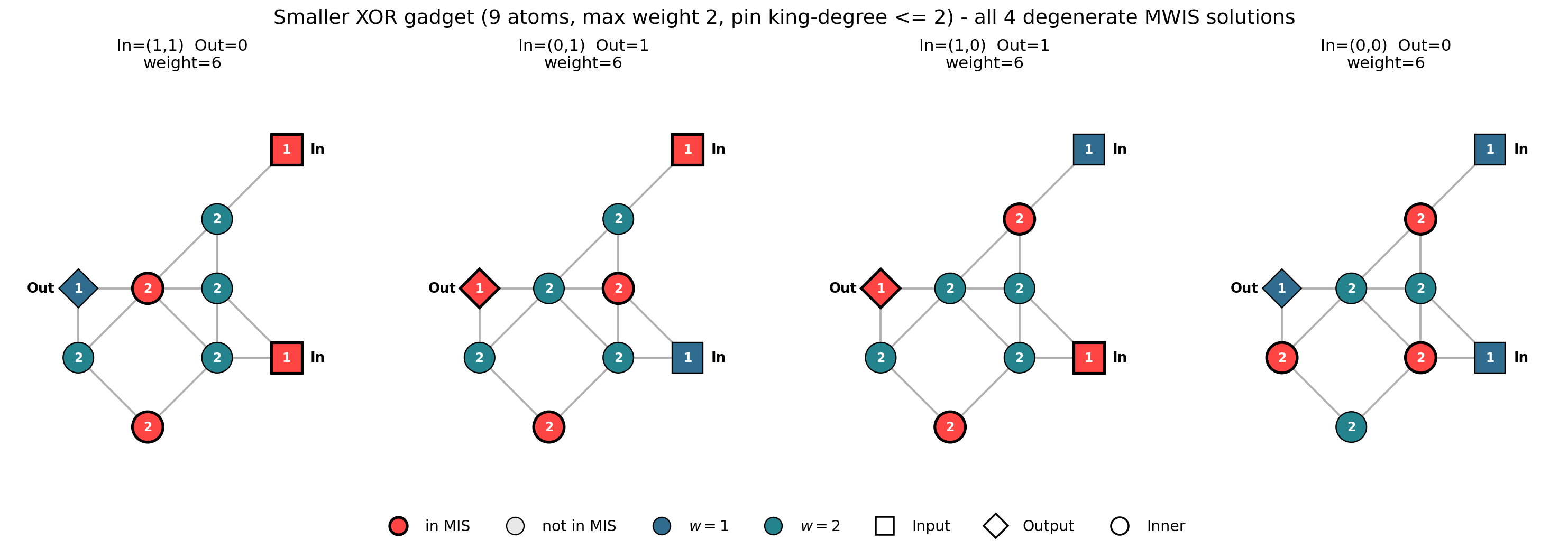}
\caption{The nine-atom XOR gadget and its four degenerate MWIS solutions,
one per row of the XOR truth table, in the same convention as
Fig.~\ref{fig:sm-and}. All four rows $000$, $101$, $011$, $110$ tie at the
maximum weight $C=6$, while the strongest invalid patterns reach only $5$,
so the ground-state pin patterns are exactly the XOR truth table. Every
input and output port is king-adjacent to at most two auxiliary atoms,
respecting the nearest-neighbour budget of the Rydberg implementation, and
the gadget is certified minimal in both atom count and weight.}
\label{fig:sm-xor}
\end{figure}

\FloatBarrier
\subsection{Crossing gadget}
\label{sm:crossing-gadget}

Routing a circuit on a planar king grid inevitably produces wire crossings,
so the compiler requires a four-pin gadget through which two logical signals
pass without interacting. The pins sit clockwise in space---$p_{1}$ top,
$p_{2}$ right, $p_{3}$ bottom, $p_{4}$ left---and the encoded relation is
\begin{equation}
p_{1}=p_{3}
\quad\wedge\quad
p_{2}=p_{4},
\label{eq:sm-crossing-relation}
\end{equation}
i.e.\ the four-bit truth table $T=\{0000,\,1010,\,0101,\,1111\}$ over
$(p_{1},p_{2},p_{3},p_{4})$: the vertical channel copies $p_{1}$ to $p_{3}$,
the horizontal channel copies $p_{2}$ to $p_{4}$, and the two channels are
mutually unconstrained. The same search formulation as for the gate gadgets
applies, with the four-row truth table supplied directly as the constraint.

The gadget uses $11$ atoms on a $5\times5$ king grid: pins
\begin{equation}
p_{1}=(3,5),\qquad p_{2}=(5,3),\qquad p_{3}=(3,1),\qquad p_{4}=(1,3),
\end{equation}
placed at king distance two so that they are pairwise non-adjacent (necessary
for the all-pins row $1111$), and seven auxiliary atoms: four on-axis atoms
$a=(3,4)$, $b=(4,3)$, $c=(3,2)$, $d=(2,3)$ between the centre and each pin,
the central atom $e=(3,3)$, and two adjacent corner atoms $f=(2,4)$,
$g=(4,4)$, with edges given by king adjacency on these eleven sites. The
solver returns the flat weight vector
\begin{equation}
(w_{p_{1}},w_{p_{2}},w_{p_{3}},w_{p_{4}},
 w_{a},w_{b},w_{c},w_{d},w_{e},w_{f},w_{g})
=(1,1,1,1,2,2,2,2,2,2,2),
\end{equation}
i.e.\ every pin has weight $1$ and every auxiliary atom weight $2$. The
maximum-weight value is $C=6$, attained by exactly four maximal
independent sets, one per row of the crossing relation
[Fig.~\ref{fig:sm-crossing}]:
\begin{equation}
0000:\{c,f,g\},\qquad
1010:\{p_{1},p_{3},b,d\},\qquad
0101:\{p_{2},p_{4},a,c\},\qquad
1111:\{p_{1},p_{2},p_{3},p_{4},e\}.
\end{equation}
Every other maximal independent set scores at most $5$, so the four
crossing rows are the unique ground-state pin patterns with a unit gap.
Geometrically, the two channels are realised by complementary on-axis
pairs: when $p_{1},p_{3}$ are occupied (row $1010$) they block the corner
atoms and the on-axis pair $a,c$, and the surviving pair $b,d$ locks
$p_{2},p_{4}$ into the off state; the opposite row $0101$ uses $a,c$
instead of $b,d$. The all-zero row is carried by the triple $\{c,f,g\}$,
and the all-one row picks up the central atom $e$, which is not
king-adjacent to any pin.

The first verified crossing gadget used all nine sites of the central
$3\times3$ block as auxiliary atoms ($13$ atoms, weights
$(2,1,1,1,3,2,2,2,2,1,1,2,2)$, $C=7$); it is retained only for reference,
and the crossing inserted by the router in the compiled multiplier of
Sec.~\ref{sm:multi-verification} is the eleven-atom gadget of this
subsection. Enumerating auxiliary subsets of
increasing size---first within the central block, then over all $21$
non-pin sites of the $5\times5$ grid---shows that no candidate with at most
ten atoms admits any valid weight assignment for this diamond pin layout,
so eleven atoms is minimal, and the accepted seven-atom auxiliary sets are
exactly the four rotations of the shape used here. Relaxing the pin layout
does not help: an exhaustive sweep over all pin quadruples on grids up to
$5\times5$ finds truth-table solutions with as few as seven atoms, but in
all of them the two equal pin pairs sit on the same side---parallel
pass-throughs rather than geometric crossings---and genuine crossings first
appear at ten atoms, where they require a maximum weight of $3$. The
eleven-atom gadget is therefore the lightest genuine crossing and the only
one compatible with the two-value weight alphabet of the library. Its
single deviation from the port budget is the pin $p_{1}$, which is
king-adjacent to the three auxiliary atoms $a$, $f$, $g$.

\begin{figure}[t]
\centering
\includegraphics[width=0.98\linewidth]{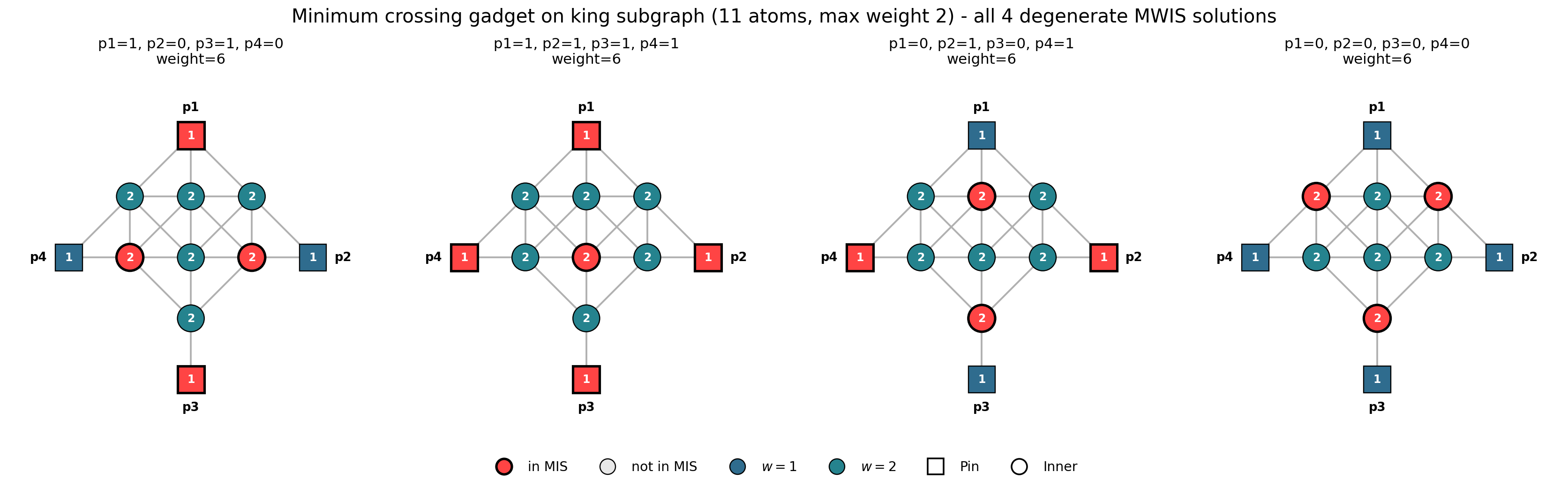}
\caption{The eleven-atom crossing gadget and its four degenerate MWIS
solutions, one per row of the crossing relation
$p_{1}=p_{3}\wedge p_{2}=p_{4}$. Squares are the four pins ($p_{1}$ top,
$p_{2}$ right, $p_{3}$ bottom, $p_{4}$ left, clockwise) and circles are the
seven auxiliary atoms, coloured by weight as in Fig.~\ref{fig:sm-and}. All
four rows $0000$, $1010$, $0101$, $1111$ tie at the maximum weight $C=6$,
so the two logical channels cross without interacting; every pin has weight
$1$ and every auxiliary atom weight $2$.}
\label{fig:sm-crossing}
\end{figure}

\FloatBarrier
\subsection{Fan-out gadget}
\label{sm:fanout-gadget}

The composition rules of Sec.~\ref{subsec:method-solving} require an explicit
fan-out gadget whenever one signal feeds several downstream gates. The fan-out
primitive is an equality constraint on three pins,
\begin{equation}
T=\{111,\,000\},
\qquad\text{i.e.}\qquad
p_{1}=p_{2}=p_{3},
\label{eq:sm-fanout-relation}
\end{equation}
so that one incoming logical value is presented identically on two further
attachment points.

The smallest realisation has four atoms: three pairwise non-adjacent pins
alone can never form a connected graph, and the first success appears at
four atoms as the star $K_{1,3}$---three pins with one central auxiliary
atom king-adjacent to all three, weighted
$(w_{p_{1}},w_{p_{2}},w_{p_{3}},w_{a})=(1,1,1,3)$, whose only maximal
independent sets are $111:\{p_{1},p_{2},p_{3}\}$ and $000:\{a\}$, both of
weight $3$. This star is a predecessor retained only for reference: every
fan-out point of the compiled instances of
Secs.~\ref{sm:cir1-verification}--\ref{sm:multi-verification} uses the
five-atom gadget below. At four
atoms, however, the weight-$3$ centre is unavoidable: the $000$ row must be
represented by the lone auxiliary atom, so
$w_{a}=C=w_{p_{1}}+w_{p_{2}}+w_{p_{3}}\ge3$, and an exhaustive four-atom
sweep confirms that no gadget with maximum weight $\le2$ exists. Splitting
the off-state representative over two auxiliary atoms removes the heavy
atom at the cost of a single extra site.

The five-atom gadget selected for the library has pins
\begin{equation}
p_{1}=(1,1),\qquad p_{2}=(3,1),\qquad p_{3}=(3,3),
\end{equation}
and two auxiliary atoms: $a=(2,2)$, king-adjacent to all three pins, and
$b=(4,2)$, king-adjacent to $p_{2}$ and $p_{3}$, with no further edges. The
solver returns the weights
\begin{equation}
(w_{p_{1}},w_{p_{2}},w_{p_{3}},w_{a},w_{b})=(1,1,1,2,1).
\end{equation}
This graph has exactly three maximal independent sets
[Fig.~\ref{fig:sm-fanout}],
\begin{equation}
111:\{p_{1},p_{2},p_{3}\}\ (\text{weight }3),
\qquad
000:\{a,b\}\ (\text{weight }3),
\qquad
100:\{p_{1},b\}\ (\text{weight }2),
\end{equation}
so the ground-state pin patterns are exactly $\{111,000\}$: the off-state
weight is now carried by the pair $\{a,b\}$ at $2+1=3$ instead of a single
weight-$3$ atom, and the one spurious maximal independent set sits a unit
below the ground-state manifold. Since the four-atom argument above is
independent of any grid bound, five atoms is the certified minimum under
the two-value weight alphabet. Chaining copies of the gadget routes one
signal to arbitrarily many attachment points.

\begin{figure}[t]
\centering
\includegraphics[width=0.65\linewidth]{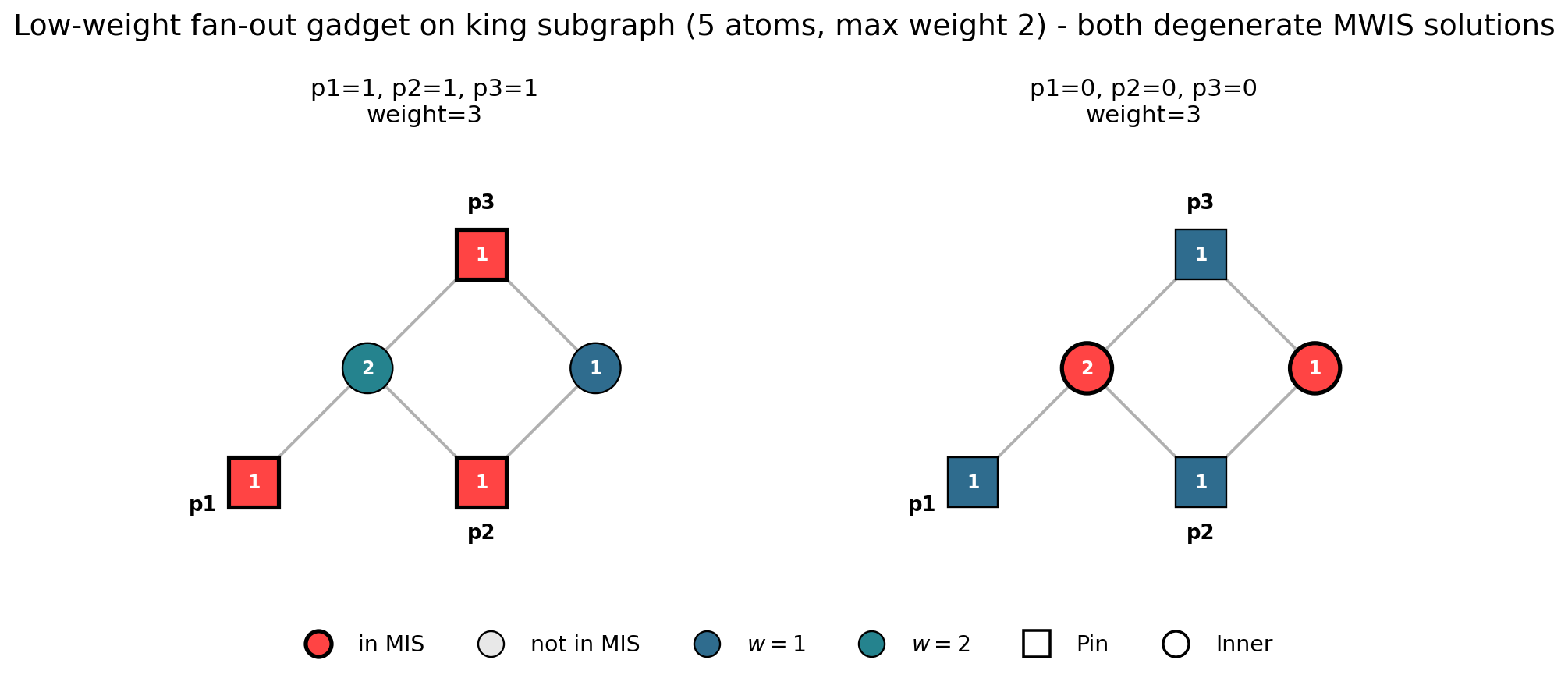}
\caption{The five-atom fan-out gadget and its two degenerate MWIS
solutions, one per row of the equality relation $p_{1}=p_{2}=p_{3}$.
Squares are the three pins and circles are the two auxiliary atoms,
coloured by weight as in Fig.~\ref{fig:sm-and}; the number inside each
vertex is its integer weight. Both rows $111$ and $000$ tie at the maximum
weight $C=3$, while the only other maximal independent set reaches weight
$2$, so the gadget enforces exact equality with every atom weight at most
$2$; a four-atom fan-out gadget provably requires a weight-$3$ atom.}
\label{fig:sm-fanout}
\end{figure}

\FloatBarrier
\subsection{Summary of the gadget library}
\label{sm:gadget-summary}

Table~\ref{tab:sm-gadget-summary} collects the verified parameters of the
six rigid gadgets, together with the variable-length WIRE chain that
carries signals between them. Every atom weight in the library is $1$ or
$2$, with all ports at weight $1$, so the analog implementation requires
only two local-detuning values and ports look identical across the whole
library; altogether the six rigid gadgets use $39$ atoms with total
weight $56$. Port
king-degrees respect the wiring budget of two everywhere except a single
crossing pin of degree three. The ground-state degeneracy of each gadget
equals the number of rows of the encoded truth table, a property verified
by enumeration for every entry of the library; the composition rules of
Sec.~\ref{sm:realizations} require only the weaker equality of optimal
pin-pattern sets. The NOT and OR
gadgets have the stronger property that their optimal rows exhaust the
maximal independent sets of their graphs, and the AND and fan-out gadgets
each admit only a single weight-$2$ spectator set below the ground-state
manifold.

\begin{table}[h]
\centering
\caption{Verified parameters of the weighted KSG gadget library: the four
gate gadgets, the two auxiliary (crossing and fan-out) gadgets, and the
variable-length WIRE chain. $C$ is
the common ground-state weight of the degenerate MWIS manifold; the gap is
the minimum weight separation to the strongest invalid maximal independent
set. The NOT and OR gadgets admit no invalid maximal independent set at
all, so no gap entry applies. The WIRE row is the chain $W_{k}$ of
Eq.~\eqref{eq:wire-path-contract}, a path of $m=2k+1$ interior atoms of
weight $2$ joining two weight-$1$ endpoint pins; the endpoint pins are
identified with the pins they connect, so a chain adds $m$ atoms to the
composite. Its footprint is set by the router rather than by a rigid
template.}
\label{tab:sm-gadget-summary}
\begin{ruledtabular}
\begin{tabular}{lcccccc}
Gadget & Atoms & Weights & $C$ & Gap & Footprint & Ground-state pin patterns \\
\hline
AND & $6$ & $(1,1,1,1,2,2)$ & $3$ & $1$ & $3\times3$ & $000,\,100,\,010,\,111$ \\
NOT & $2$ & $(1,1)$ & $1$ & --- & $2\times1$ & $10,\,01$ \\
OR  & $6$ & $(1,1,1,1,1,2)$ & $3$ & --- & $4\times3$ & $000,\,101,\,011,\,111$ \\
XOR & $9$ & $(1,1,1,2,2,2,2,2,2)$ & $6$ & $1$ & $4\times5$ & $000,\,101,\,011,\,110$ \\
Crossing & $11$ & $(1,1,1,1,2,2,2,2,2,2,2)$ & $6$ & $1$ & $5\times5$ & $0000,\,1010,\,0101,\,1111$ \\
Fan-out & $5$ & $(1,1,1,2,1)$ & $3$ & $1$ & $4\times3$ & $000,\,111$ \\
WIRE & $m+2$ & $(1,2,\ldots,2,1)$ & $m+1$ & $1$ & --- & $00,\,11$ \\
\end{tabular}
\end{ruledtabular}
\end{table}

All compiled instances verified in this Supplemental Material---the
three-gate circuit of Sec.~\ref{sm:cir1-verification}, the full adder of
Sec.~\ref{sm:adder-verification}, and the two-bit multiplier of
Sec.~\ref{sm:multi-verification}---are compiled with the library gadgets of
this section, and all of them inherit its binary weight alphabet
$w\in\{1,2\}$. The predecessor gadgets noted in the preceding
subsections---the $3\times3$ OR variant, the eleven-atom XOR, the
thirteen-atom crossing, and the four-atom $K_{1,3}$ fan-out star---encode the
same truth tables on the same port interfaces and are retained only for
reference.

\FloatBarrier
\section{Compilation pipeline: layout algorithms}
\label{sm:compilation-details}

This section collects the pseudocode of the two-stage placement and
routing compiler described in Sec.~\ref{subsec:method-compilation}.
Algorithm~S2 fixes connectivity and relative topology on the coarse
routing grid, using the router as the evaluation oracle of a two-phase
search. Algorithm~S3 expands each archived abstract layout into a
validated atom level realization and assigns the weights that make the
emitted graph $(G_{C},w)$ satisfy the hypotheses of
Theorem~\ref{thm:realization}.

\begin{center}
\noindent\rule{\columnwidth}{0.4pt}\\[1pt]
\noindent\textbf{Algorithm S2.}~Abstract layout optimization.\\[-2pt]
\noindent\rule{\columnwidth}{0.4pt}
\end{center}
\vspace{-6pt}
\begin{algorithmic}[1]
\footnotesize
\Require circuit $C$, grid $\Gamma$, seeds $\mathcal{S}$
\State normalize fanout, contract external I/O, form two-terminal nets
\Statex \textit{Phase I: collect complete-valid layouts}
\ForAll{$s\in\mathcal{S}$ in parallel}
  \State $\mathcal{L}\gets\Call{Evaluate}{\textproc{InitialPlacement}(C,\Gamma,s)}$
  \For{each annealing step at temperature $T$}
    \State $\mathcal{L}'\gets\Call{Evaluate}{\textproc{ProposeCoordinates}(\mathcal{L})}$
    \State archive $\mathcal{L}'$ if complete-valid and unseen
    \State $\mathcal{L}\gets\mathcal{L}'$ w.p.\
      $P_{\mathrm{acc}}(\mathcal{L}\!\to\!\mathcal{L}';T)$
  \EndFor
\EndFor
\State merge archives, deduplicating by non-crossing anchor coordinates
\Statex \textit{Phase II: deterministic complete-valid descent}
\ForAll{archived candidates $\mathcal{L}$}
  \State alternate route-aware axis compaction and eight-direction
    descent until $\mathbf{q}_{\mathrm{abs}}(\mathcal{L})$ is stationary
\EndFor
\State \Return complete-valid results ranked by
  $(c,L_{\mathrm{tot}},A_{\mathrm{bbox}})$
\end{algorithmic}
\vspace{-4pt}
\noindent\rule{\columnwidth}{0.4pt}
\vspace{4pt}

\begin{center}
\noindent\rule{\columnwidth}{0.4pt}\\[1pt]
\noindent\textbf{Algorithm S3.}~Real layout construction and
optimization.\\[-2pt]
\noindent\rule{\columnwidth}{0.4pt}
\end{center}
\vspace{-6pt}
\begin{algorithmic}[1]
\footnotesize
\Require archived complete-valid abstract layout $\mathcal{L}$
\Statex \textit{Stage 1: construction and validation}
\State expand the grid and stamp all oriented rigid templates
\State reconnect every split net by anchored A$^{\star}$; repair
  even-parity routes by local splice or parity-aware rerouting
\State $\mathcal{R}\gets$ the realization, admitted only if every
  physical validator passes
\Statex \textit{Stage 2: atoms-first greedy descent}
\Repeat
  \State generate wire-cleanup, axis-shrink, and gadget moves
  \State reconnect affected nets; keep the best revalidated improvement
    in $\mathbf{q}_{\mathrm{real}}$
\Until{no improving move remains}
\Statex \textit{Stage 3: weight assignment}
\State assign template and wire weights and \Return $(G_{C},w)$
\end{algorithmic}
\vspace{-4pt}
\noindent\rule{\columnwidth}{0.4pt}
\vspace{4pt}

\section{Ground-state verification of the compiled three-gate circuit}
\label{sm:cir1-verification}

This section provides the complete ground-state verification of the compiled
three-gate Circuit-SAT instance of Fig.~\ref{fig:small-circuit} in the main
text. The circuit [Fig.~\ref{fig:small-circuit}(a)] combines all three
library primitives and computes
\begin{equation}
O_{1} \;=\; (I_{1}\wedge I_{2})\,\oplus\,(I_{2}\vee I_{3}),
\label{eq:sm-cir1-function}
\end{equation}
with the input $I_{2}$ fanned out to both the AND and the OR gate. Its truth
table contains all eight input assignments: $O_{1}=1$ for
$(I_{1},I_{2},I_{3})\in\{001,010,011,101\}$ and $O_{1}=0$ for
$\{000,100,110,111\}$.

The compiled king-subgraph [Fig.~\ref{fig:small-circuit}(b)] contains $30$
atoms and $41$ king edges. It is assembled from the gate library of
Sec.~\ref{sm:gadget-design}---the $6$-atom AND gadget, the $6$-atom OR
gadget, and the $9$-atom XOR gadget---together with the $5$-atom fan-out
gadget of Sec.~\ref{sm:fanout-gadget}, which duplicates $I_{2}$,
and four wire atoms that mediate the routed inter-gadget connections. Upon
composition each gadget keeps its verified template weights; the wire atoms
and the gadget ports engaged in inter-gadget connections are raised from
weight $1$ to $2$ by the MWIS-transparent wire rule of
Sec.~\ref{sm:composition-theory}, while the four primary ports retain weight $1$,
so the assembled graph keeps the binary weight alphabet $w\in\{1,2\}$ of
the library.

The verification is exact: every independent set of the $30$-atom graph is
enumerated by brute force. The maximum weight is $W^{\star}=23$, attained by exactly
eight independent sets---one per input assignment of the circuit
[Fig.~\ref{fig:sm-cir1-mwis} and Table~\ref{tab:sm-cir1-truth}]. Reading the
occupations of the four port atoms ($I_{1},I_{2},I_{3},O_{1}$) in each
optimum reproduces Eq.~\eqref{eq:sm-cir1-function} row by row. The eight
optima contain $13$ or $14$ atoms while tying at the same weight
$W^{\star}=23$, which illustrates that the ground-state degeneracy is organised by
weight rather than by cardinality. Moreover, the exclusion of wrong outputs
is uniform: for every one of the eight \emph{inconsistent} port patterns
(the correct inputs with the output bit flipped), the best independent set
reaches exactly weight $22$, one unit below the ground-state manifold. The
compiled graph therefore realises the dual reading described in the main
text with a uniform unit gap: with free inputs its ground states enumerate
the complete truth table, and with the inputs pinned the energetically
preferred occupation of the output port evaluates the circuit, any incorrect
output costing exactly one weight unit.

\begin{table}[h]
\centering
\caption{Exact ground-state readout of the compiled $30$-atom three-gate
circuit. For each input assignment, the table lists the circuit output
$O_{1}$ of Eq.~\eqref{eq:sm-cir1-function}, the size of the MWIS optimum
whose port atoms read $(I_{1},I_{2},I_{3},O_{1})$, its weight, and the best
weight attainable when the output bit is flipped. All eight truth-table rows
tie at $W^{\star}=23$ and all eight inconsistent patterns stop at $22$.}
\label{tab:sm-cir1-truth}
\begin{ruledtabular}
\begin{tabular}{ccccc}
$(I_{1},I_{2},I_{3})$ & $O_{1}$ & Atoms in optimum & Weight & Output flipped \\
\hline
$(0,0,0)$ & $0$ & $13$ & $23$ & $22$ \\
$(0,0,1)$ & $1$ & $14$ & $23$ & $22$ \\
$(0,1,0)$ & $1$ & $13$ & $23$ & $22$ \\
$(0,1,1)$ & $1$ & $13$ & $23$ & $22$ \\
$(1,0,0)$ & $0$ & $13$ & $23$ & $22$ \\
$(1,0,1)$ & $1$ & $14$ & $23$ & $22$ \\
$(1,1,0)$ & $0$ & $13$ & $23$ & $22$ \\
$(1,1,1)$ & $0$ & $13$ & $23$ & $22$ \\
\end{tabular}
\end{ruledtabular}
\end{table}

\begin{figure}[t]
\centering
\includegraphics[width=0.98\linewidth]{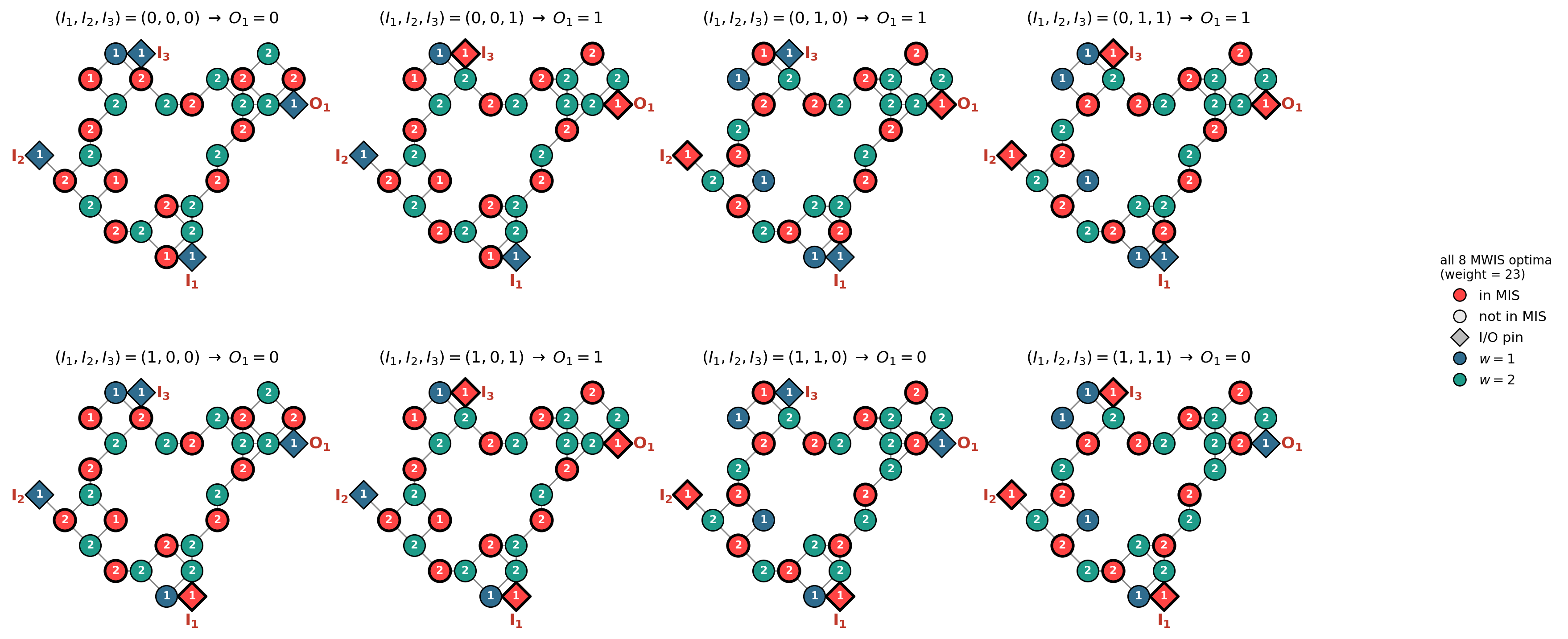}
\caption{All eight degenerate MWIS solutions of the compiled $30$-atom
three-gate circuit of Fig.~\ref{fig:small-circuit}, one per input assignment,
ordered by the input bits $(I_{1},I_{2},I_{3})$. Diamonds mark the I/O pin
atoms, with the four primary ports ($I_{1},I_{2},I_{3},O_{1}$) labelled in
red; circles are internal atoms;
the number inside each atom is its integer weight. Red fill marks atoms in
the selected independent set, and the remaining atoms are coloured by weight
(legend). Every panel attains the maximum weight $W^{\star}=23$, and the port
readout of the eight panels reproduces the complete truth table of
$O_{1}=(I_{1}\wedge I_{2})\oplus(I_{2}\vee I_{3})$.}
\label{fig:sm-cir1-mwis}
\end{figure}

\FloatBarrier
\section{Output-branched certificate for the compiled three-gate circuit}
\label{sm:cir1-branched}

This section applies the output-branching decision procedure of
Sec.~\ref{sec:method} [Eq.~\eqref{eq:branch-criterion}] to the compiled
three-gate circuit of Sec.~\ref{sm:cir1-verification}, providing the
satisfiable counterpart of the unsatisfiable-instance analysis of
Sec.~\ref{sm:unsat-branched}. Asserting $O_{1}=1$ and deleting the output
port atom together with the atoms it blockades removes three atoms of total
weight $5$ from the $30$-atom graph---the weight-$1$ output port and its two
weight-$2$ king-neighbours---leaving a branched graph of $27$ atoms and $34$
king edges.

Exact enumeration of the branched graph yields a maximum weight
$W_{\mathrm{br}}=22$, attained by exactly four degenerate optima of $12$ or
$13$ atoms each [Fig.~\ref{fig:sm-cir1-branched}]. Because the deleted
output port atom carries weight one, the best weight attainable in the full
graph with $O_{1}=1$ is $W_{\mathrm{br}}+1=23=C$, which meets the reference
weight exactly: the branch criterion is satisfied, and the instance is
certified satisfiable from this single optimization, without enumerating the
ground-state manifold of the full graph. This is the complementary outcome
to the unsatisfiable instance of Sec.~\ref{sm:unsat-branched}, where the
branched optimum falls one unit short of the reference.

The branched ground-state manifold moreover carries the complete witness
information: reading the occupations of the three remaining input ports in
the four optima returns exactly the four satisfying assignments
$(I_{1},I_{2},I_{3})\in\{001,010,011,101\}$ of
Eq.~\eqref{eq:sm-cir1-function}---precisely the truth-table rows with
$O_{1}=1$ in Table~\ref{tab:sm-cir1-truth}. Branching thus not only decides
satisfiability but enumerates the full solution set of the satisfiable
instance, at the cost of a single MWIS computation on a strictly smaller
graph.

\begin{figure}[t]
\centering
\includegraphics[width=0.72\linewidth]{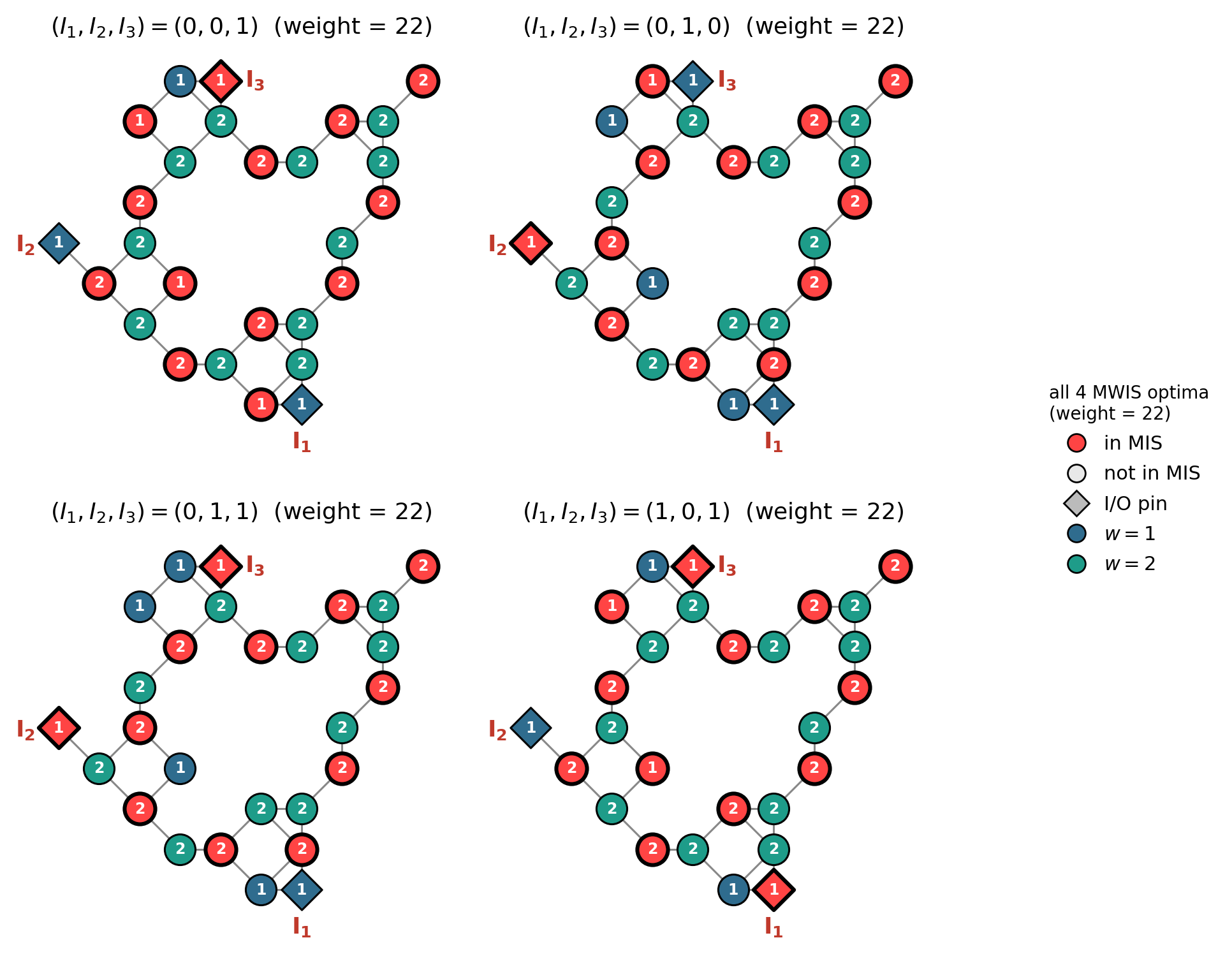}
\caption{The complete set of four degenerate MWIS optima of the
output-branched $27$-atom graph, obtained from the compiled $30$-atom
three-gate circuit of Fig.~\ref{fig:small-circuit} by asserting $O_{1}=1$
and deleting the weight-$1$ output port atom together with the two
weight-$2$ atoms it blockades. Drawing conventions as in
Fig.~\ref{fig:sm-cir1-mwis}; panels are ordered by the input bits
$(I_{1},I_{2},I_{3})$. All four optima tie at the branched maximum weight
$W_{\mathrm{br}}=22$; since the deleted weight-one output port adds exactly
one unit, $W_{\mathrm{br}}+1=23=C$ certifies satisfiability, and the input
readouts of the four panels enumerate exactly the four satisfying
assignments of $O_{1}=(I_{1}\wedge I_{2})\oplus(I_{2}\vee I_{3})$.}
\label{fig:sm-cir1-branched}
\end{figure}

\FloatBarrier
\section{Ground-state verification of the compiled full adder}
\label{sm:adder-verification}

This section extends the ground-state verification of
Sec.~\ref{sm:cir1-verification} to a complete arithmetic block, the one-bit
full adder of Fig.~\ref{fig:arith-blocks}(a,c) in the main
text. The circuit takes the summand bits $I_{1},I_{2}$ and the carry-in
$I_{3}$ and computes the sum and carry-out
\begin{equation}
O_{1} \;=\; I_{1}\oplus I_{2}\oplus I_{3},
\qquad
O_{2} \;=\; (I_{1}\wedge I_{2})\,\vee\,\bigl((I_{1}\oplus I_{2})\wedge
I_{3}\bigr),
\label{eq:sm-adder-function}
\end{equation}
through five gates---two XOR, two AND, and one OR---with four fanned-out
signals: each summand bit feeds both first-stage gates, and the carry-in
and the intermediate sum $I_{1}\oplus I_{2}$ each feed both second-stage
gates.

The automated pipeline compiles this circuit into a weighted king-subgraph of
$85$ atoms and $109$ king edges on a $17\times18$ grid window. The graph
decomposes into the five gate gadgets of the library of
Sec.~\ref{sm:gadget-design}---two $9$-atom XOR, two $6$-atom AND, and one
$6$-atom OR---four $5$-atom fan-out gadgets of
Sec.~\ref{sm:fanout-gadget} realising the fan-outs, and $29$ wire atoms
routing the inter-gadget connections; no crossing gadgets are required. Upon
composition each gadget keeps its verified template weights, and the
MWIS-transparent wire rule of Sec.~\ref{sm:composition-theory} raises the wire atoms
and the inter-gadget ports to weight $2$, so the assembled graph keeps the
binary weight alphabet $w\in\{1,2\}$ of the library.
The five primary ports of the circuit sit on the gadget boundary exactly as
in the three-gate instance: the three inputs terminate on the degree-one
input pins of their fan-out gadgets---$I_{1}$ and $I_{2}$ on the summand
fan-outs, $I_{3}$ on the carry-in fan-out---and the two outputs sit on the
output ports of their gates, $O_{1}$ on the second XOR (sum) and $O_{2}$ on
the OR (carry-out); all five primary ports retain weight $1$ and are marked
as diamonds in Fig.~\ref{fig:sm-adder-mwis}.

This port identification is then confirmed independently by the ground-state
readout. An exact enumeration of all independent sets of the $85$-atom graph
(by the column-transfer dynamic program over the king grid) yields a maximum
weight of $W^{\star}=73$, attained by exactly eight independent sets of $39$--$40$
atoms---one per input assignment. Reading the occupations of the five port
atoms in each optimum reproduces Eq.~\eqref{eq:sm-adder-function} row by row
[Table~\ref{tab:sm-adder-truth} and Fig.~\ref{fig:sm-adder-mwis}]. The
exclusion of wrong outputs is again gapped: resolving the maximum independent
set weight over all $32$ port patterns $(I_{1},I_{2},I_{3},O_{1},O_{2})$, the
eight truth-table rows tie at $73$, and every inconsistent pattern stops at
least one weight unit below. Flipping a single output bit costs exactly one
unit ($72$) for every input row, and flipping both output bits costs two
units ($71$) for the rows $(0,0,0)$ and $(1,1,1)$ and one unit ($72$)
otherwise, so the unit gap of the isolated gate gadgets survives intact in
this $85$-atom five-gate composition.

\begin{table}[h]
\centering
\caption{Exact ground-state readout of the compiled $85$-atom full adder.
For each input assignment, the table lists the sum and carry-out of
Eq.~\eqref{eq:sm-adder-function}, the size of the MWIS optimum whose port
atoms read $(I_{1},I_{2},I_{3},O_{1},O_{2})$, its weight, and the best
weights attainable with one or both output bits flipped. All eight
truth-table rows tie at $W^{\star}=73$ and every inconsistent pattern stops at
least one unit below.}
\label{tab:sm-adder-truth}
\begin{ruledtabular}
\begin{tabular}{cccccc}
$(I_{1},I_{2},I_{3})$ & $O_{1}$ (sum) & $O_{2}$ (carry) & Atoms in optimum &
Weight & One/both outputs flipped \\
\hline
$(0,0,0)$ & $0$ & $0$ & $40$ & $73$ & $72$ / $71$ \\
$(0,0,1)$ & $1$ & $0$ & $40$ & $73$ & $72$ / $72$ \\
$(0,1,0)$ & $1$ & $0$ & $39$ & $73$ & $72$ / $72$ \\
$(0,1,1)$ & $0$ & $1$ & $39$ & $73$ & $72$ / $72$ \\
$(1,0,0)$ & $1$ & $0$ & $39$ & $73$ & $72$ / $72$ \\
$(1,0,1)$ & $0$ & $1$ & $39$ & $73$ & $72$ / $72$ \\
$(1,1,0)$ & $0$ & $1$ & $40$ & $73$ & $72$ / $72$ \\
$(1,1,1)$ & $1$ & $1$ & $40$ & $73$ & $72$ / $71$ \\
\end{tabular}
\end{ruledtabular}
\end{table}

\begin{figure}[t]
\centering
\includegraphics[width=0.98\linewidth]{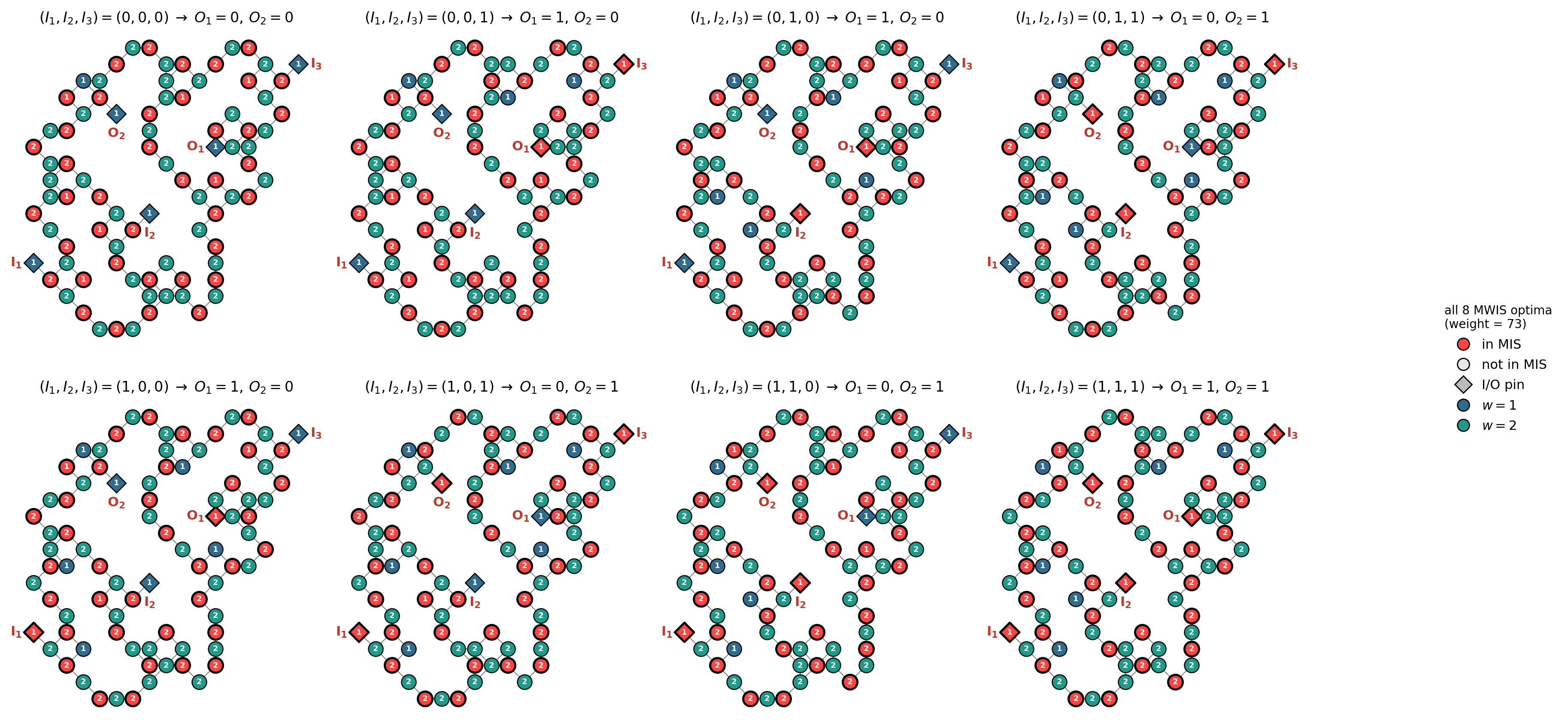}
\caption{All eight degenerate MWIS solutions of the compiled $85$-atom full
adder of Fig.~\ref{fig:arith-blocks}(a,c), one per input assignment, ordered
by the input bits $(I_{1},I_{2},I_{3})$. Diamonds mark the five primary port
atoms ($I_{1},I_{2},I_{3}$ inputs; $O_{1}$ sum; $O_{2}$ carry-out); circles
are internal atoms; the number inside each atom is its integer weight. Red
fill marks atoms in the selected independent set, and the remaining atoms
are coloured by weight (legend). Every panel attains the maximum weight
$W^{\star}=73$, and the port readout of the eight panels reproduces the complete
truth table of the full adder, Eq.~\eqref{eq:sm-adder-function}.}
\label{fig:sm-adder-mwis}
\end{figure}

\FloatBarrier
\section{Ground-state verification of the compiled two-bit multiplier}
\label{sm:multi-verification}

The largest compiled instance of the paper is the two-bit multiplier of
Fig.~\ref{fig:arith-blocks}(b,d) in the main text. The four inputs encode
the two-bit operands $A=(I_{3}\,I_{1})$ and $B=(I_{4}\,I_{2})$, and the four
outputs $O_{1}$ (least significant) through $O_{4}$ (most significant)
encode the product $A\times B$. The circuit uses eight gates: four AND
gates form the partial products
\begin{equation}
O_{1}=I_{1}\wedge I_{2},\qquad
\mathrm{PP}_{10}=I_{3}\wedge I_{2},\qquad
\mathrm{PP}_{01}=I_{1}\wedge I_{4},\qquad
\mathrm{PP}_{11}=I_{3}\wedge I_{4},
\end{equation}
and a two-stage half-adder cascade combines them,
\begin{equation}
O_{2}=\mathrm{PP}_{10}\oplus\mathrm{PP}_{01},\qquad
C_{1}=\mathrm{PP}_{10}\wedge\mathrm{PP}_{01},\qquad
O_{3}=\mathrm{PP}_{11}\oplus C_{1},\qquad
O_{4}=\mathrm{PP}_{11}\wedge C_{1}.
\label{eq:sm-multi-function}
\end{equation}
The automated pipeline compiles this circuit into a weighted king-subgraph
of $165$ atoms and $218$ king edges on a $23\times28$ grid window. The graph
decomposes into the eight gate gadgets of the library of
Sec.~\ref{sm:gadget-design}---two $9$-atom XOR and six $6$-atom AND---eight
$5$-atom fan-out gadgets of Sec.~\ref{sm:fanout-gadget}, a single
$11$-atom crossing gadget of Sec.~\ref{sm:crossing-gadget} inserted by the
router where two independent signals must cross, and $60$ wire atoms routing
the $22$ inter-gadget connections. Upon composition each gadget keeps its
verified template weights, and the MWIS-transparent wire rule of
Sec.~\ref{sm:composition-theory} raises the wire atoms and the inter-gadget ports
to weight $2$, so the assembled graph keeps the binary weight alphabet
$w\in\{1,2\}$ of the library.
The eight primary ports of the circuit sit on the gadget boundary exactly
as in the previous instances: the four inputs terminate on the degree-one
input pins of the four operand fan-out gadgets ($I_{1}$ and $I_{3}$ on the
fan-outs of the first operand, $I_{2}$ and $I_{4}$ on those of the second),
and the four product bits sit on the output ports of their gates ($O_{1}$
on the first partial-product AND, $O_{2}$ and $O_{3}$ on the two XOR gates,
and $O_{4}$ on the final AND); all eight primary ports retain weight $1$
and are marked as diamonds in Fig.~\ref{fig:sm-multi-mwis}.

This port identification is then confirmed independently by the
ground-state readout. The exact enumeration (by the column-transfer dynamic
program over the king grid) yields a maximum
weight of $W^{\star}=142$, attained by exactly sixteen independent sets of
$74$--$78$ atoms---one per input assignment.

The readout of the sixteen optima reproduces the
complete two-bit multiplication table
[Table~\ref{tab:sm-multi-truth} and Fig.~\ref{fig:sm-multi-mwis}],
including the row $A=B=3$ with product $9=1001_{2}$, which exercises the
full carry cascade. The exclusion of wrong outputs is again gapped:
resolving the maximum independent-set weight over all $2^{8}=256$ port
patterns $(I_{1},\dots,I_{4},O_{1},\dots,O_{4})$, the sixteen correct rows
tie at $142$ and every one of the $240$ patterns with at least one wrong
output bit stops at least one weight unit below, peaking between $138$ and
$141$. Misreading any subset of the product bits therefore always costs at
least one weight unit, so the gap of the gadget library survives in the
largest composition of the paper.

\begin{table}[h]
\centering
\caption{Exact ground-state readout of the compiled $165$-atom two-bit
multiplier. For each input assignment, the table lists the operands
$A=(I_{3}\,I_{1})$ and $B=(I_{4}\,I_{2})$, the product bits
$(O_{1},O_{2},O_{3},O_{4})$ read from the port atoms (least significant
first), the decimal product, and the size of the MWIS optimum. All sixteen
rows tie at $W^{\star}=142$, and every pattern with any wrong output bit peaks at
least one weight unit below.}
\label{tab:sm-multi-truth}
\begin{ruledtabular}
\begin{tabular}{ccccccc}
$(I_{1},I_{2},I_{3},I_{4})$ & $A$ & $B$ & $(O_{1},O_{2},O_{3},O_{4})$ &
$A\times B$ & Atoms in optimum & Weight \\
\hline
$(0,0,0,0)$ & $0$ & $0$ & $(0,0,0,0)$ & $0$ & $78$ & $142$ \\
$(0,0,0,1)$ & $0$ & $2$ & $(0,0,0,0)$ & $0$ & $77$ & $142$ \\
$(0,0,1,0)$ & $2$ & $0$ & $(0,0,0,0)$ & $0$ & $77$ & $142$ \\
$(0,0,1,1)$ & $2$ & $2$ & $(0,0,1,0)$ & $4$ & $76$ & $142$ \\
$(0,1,0,0)$ & $0$ & $1$ & $(0,0,0,0)$ & $0$ & $77$ & $142$ \\
$(0,1,0,1)$ & $0$ & $3$ & $(0,0,0,0)$ & $0$ & $76$ & $142$ \\
$(0,1,1,0)$ & $2$ & $1$ & $(0,1,0,0)$ & $2$ & $76$ & $142$ \\
$(0,1,1,1)$ & $2$ & $3$ & $(0,1,1,0)$ & $6$ & $75$ & $142$ \\
$(1,0,0,0)$ & $1$ & $0$ & $(0,0,0,0)$ & $0$ & $77$ & $142$ \\
$(1,0,0,1)$ & $1$ & $2$ & $(0,1,0,0)$ & $2$ & $76$ & $142$ \\
$(1,0,1,0)$ & $3$ & $0$ & $(0,0,0,0)$ & $0$ & $76$ & $142$ \\
$(1,0,1,1)$ & $3$ & $2$ & $(0,1,1,0)$ & $6$ & $75$ & $142$ \\
$(1,1,0,0)$ & $1$ & $1$ & $(1,0,0,0)$ & $1$ & $77$ & $142$ \\
$(1,1,0,1)$ & $1$ & $3$ & $(1,1,0,0)$ & $3$ & $76$ & $142$ \\
$(1,1,1,0)$ & $3$ & $1$ & $(1,1,0,0)$ & $3$ & $76$ & $142$ \\
$(1,1,1,1)$ & $3$ & $3$ & $(1,0,0,1)$ & $9$ & $74$ & $142$ \\
\end{tabular}
\end{ruledtabular}
\end{table}

\begin{figure}[t]
\centering
\includegraphics[width=0.92\linewidth]{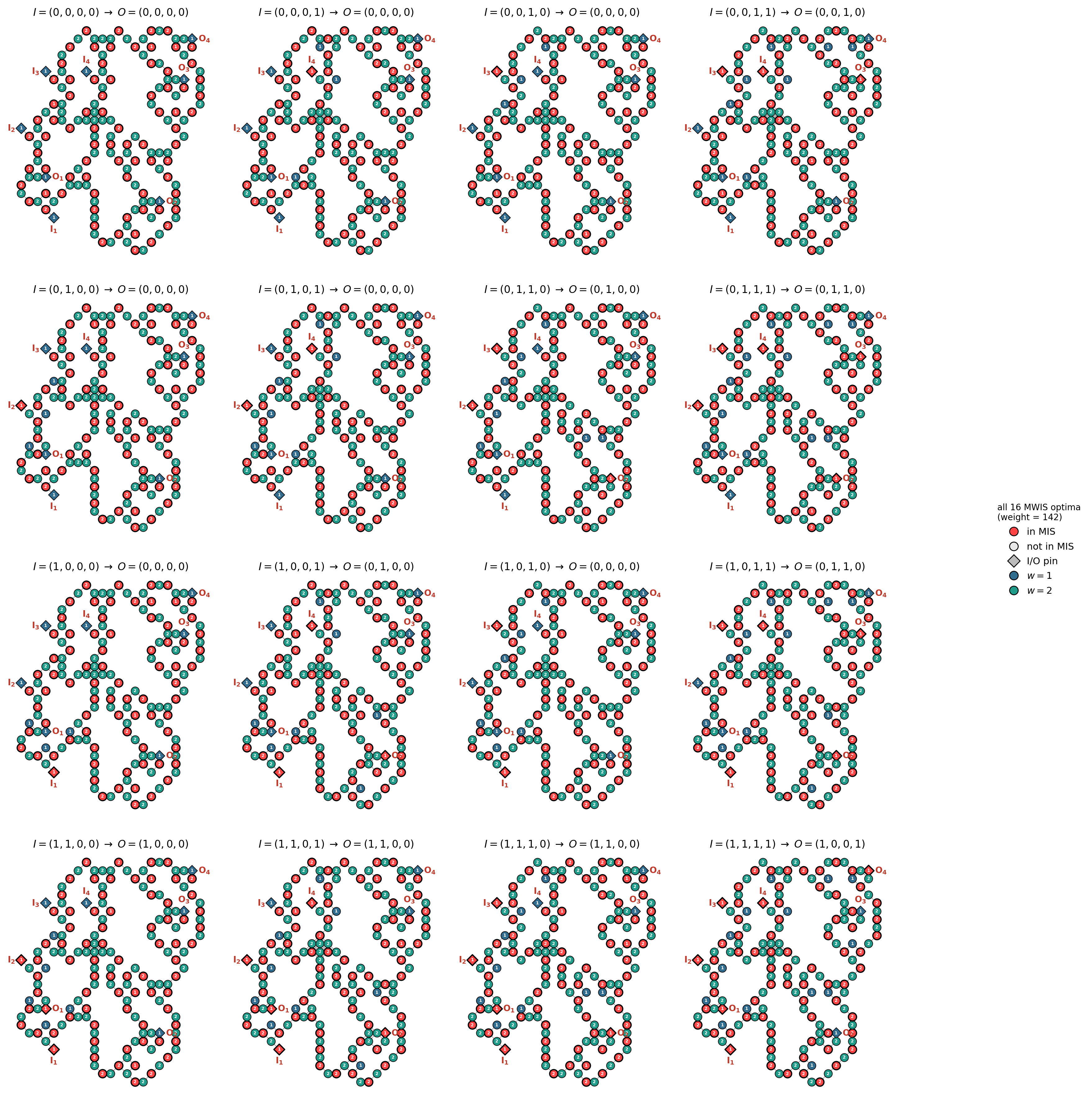}
\caption{All sixteen degenerate MWIS solutions of the compiled $165$-atom
two-bit multiplier of Fig.~\ref{fig:arith-blocks}(b,d), one per input
assignment, ordered by the input bits $(I_{1},I_{2},I_{3},I_{4})$. Diamonds
mark the eight primary port atoms (inputs $I_{1}$--$I_{4}$; product bits
$O_{1}$--$O_{4}$, least significant first); circles are internal atoms; the
number inside each atom is its integer weight. Red fill marks atoms in the
selected independent set, and the remaining atoms are coloured by weight
(legend). Every panel attains the maximum weight $W^{\star}=142$, and the port
readout of the sixteen panels reproduces the complete two-bit
multiplication table, Eq.~\eqref{eq:sm-multi-function}.}
\label{fig:sm-multi-mwis}
\end{figure}

\FloatBarrier
\section{Complete enumeration of optima for the unsatisfiable instance}
\label{sm:unsat-enumeration}

This section reports the full ground-state manifolds that underlie the
unsatisfiability analysis of Sec.~\ref{subsec:unsat} and
Fig.~\ref{fig:unsat} in the main text, for the minimal unsatisfiable
instance $O_{1}=I_{1}\oplus I_{1}$. Figure~\ref{fig:unsat}(c) shows only a
single representative optimum of the unconstrained and of the
output-branched graph; here we enumerate \emph{every} degenerate
maximum-weight independent set of both, so that the main-text claims---that
every optimum reads $O_{1}=0$, and that the branched optimum reaches only
$W_{\mathrm{br}}=15$---can be checked configuration by configuration.

\subsection{Unconstrained graph}
\label{sm:unsat-full}

Exact enumeration of the $20$-atom compiled king-subgraph
[Fig.~\ref{fig:unsat}(b)] yields a maximum weight $W^{\star}=17$, attained by
exactly two degenerate maximum-weight independent sets, one per input value
$I_{1}=0,1$, each occupying nine atoms [Fig.~\ref{fig:sm-unsat-full}]. Both
optima are gate-consistent: the fanout and XOR gadgets each sit in a valid
truth-table row, and the input and output pin chains each carry a single
well-defined Boolean value. In both optima the output port reads $O_{1}=0$,
the correct value of $I_{1}\oplus I_{1}$, and neither places the output port
in the independent set. The absence of any weight-$17$ optimum with
$O_{1}=1$ is precisely the graph-level signature of unsatisfiability
discussed in the main text.

\begin{figure}[t]
\centering
\includegraphics[width=0.98\linewidth]{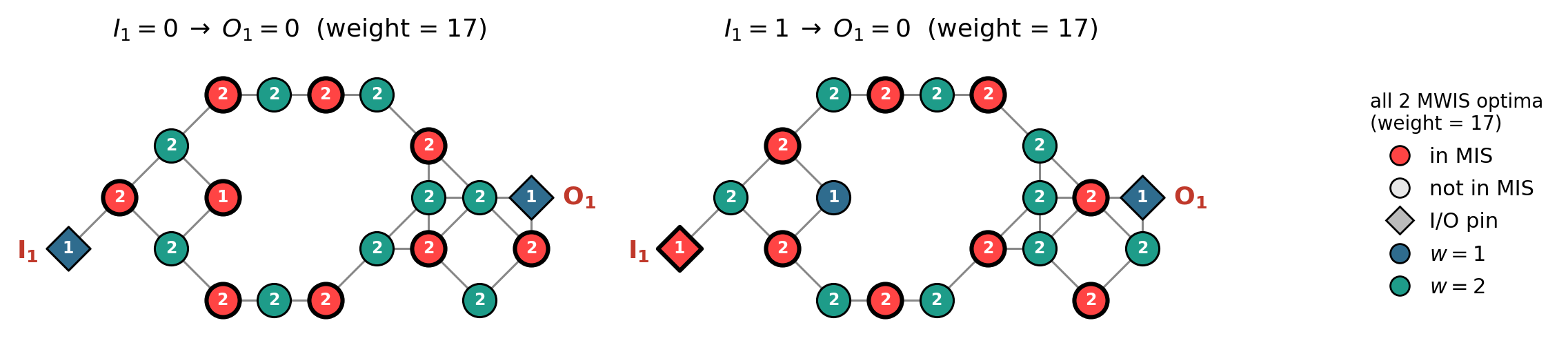}
\caption{The complete set of two degenerate MWIS optima of the unconstrained
$20$-atom king-subgraph compiled from the unsatisfiable instance
$O_{1}=I_{1}\oplus I_{1}$ [Fig.~\ref{fig:unsat}(b)], one per input value
$I_{1}=0,1$. The number inside each
atom is its integer weight; red fill marks atoms in the selected independent
set, and the remaining atoms are coloured by weight (legend). Both optima tie
at the maximum weight $W^{\star}=17$, are gate-consistent, and read the
output port $O_{1}=0$; no
weight-$17$ configuration places the output port at $1$.}
\label{fig:sm-unsat-full}
\end{figure}

\FloatBarrier
\subsection{Output-branched graph}
\label{sm:unsat-branched}

Asserting $O_{1}=1$ and deleting the output port atom together with the atoms
it blockades leaves the $17$-atom branched graph [Fig.~\ref{fig:unsat}(c),
bottom]
(three atoms of total weight $5$ removed: the weight-$1$ output port and its
two weight-$2$ king-neighbours). Its maximum weight is
$W_{\mathrm{br}}=15$, attained by exactly fourteen degenerate optima of
eight atoms each, seven for each input value
[Fig.~\ref{fig:sm-unsat-branched}]---the much larger branched degeneracy
reflects the freedom left in the gadget interiors once the output region is
removed, and is irrelevant to the certificate, which uses only the branched
maximum weight. Because the deleted
output port atom carries weight one, an attainable $O_{1}=1$ would require
$W_{\mathrm{br}}+1=W^{\star}$; here $W_{\mathrm{br}}+1=16<17=W^{\star}$, so all
fourteen branched optima fall one unit short of the reference, and the
instance is
certified unsatisfiable from this single optimization---without ever
enumerating the full optimal set of the unconstrained graph.

\begin{figure}[t]
\centering
\includegraphics[width=0.98\linewidth]{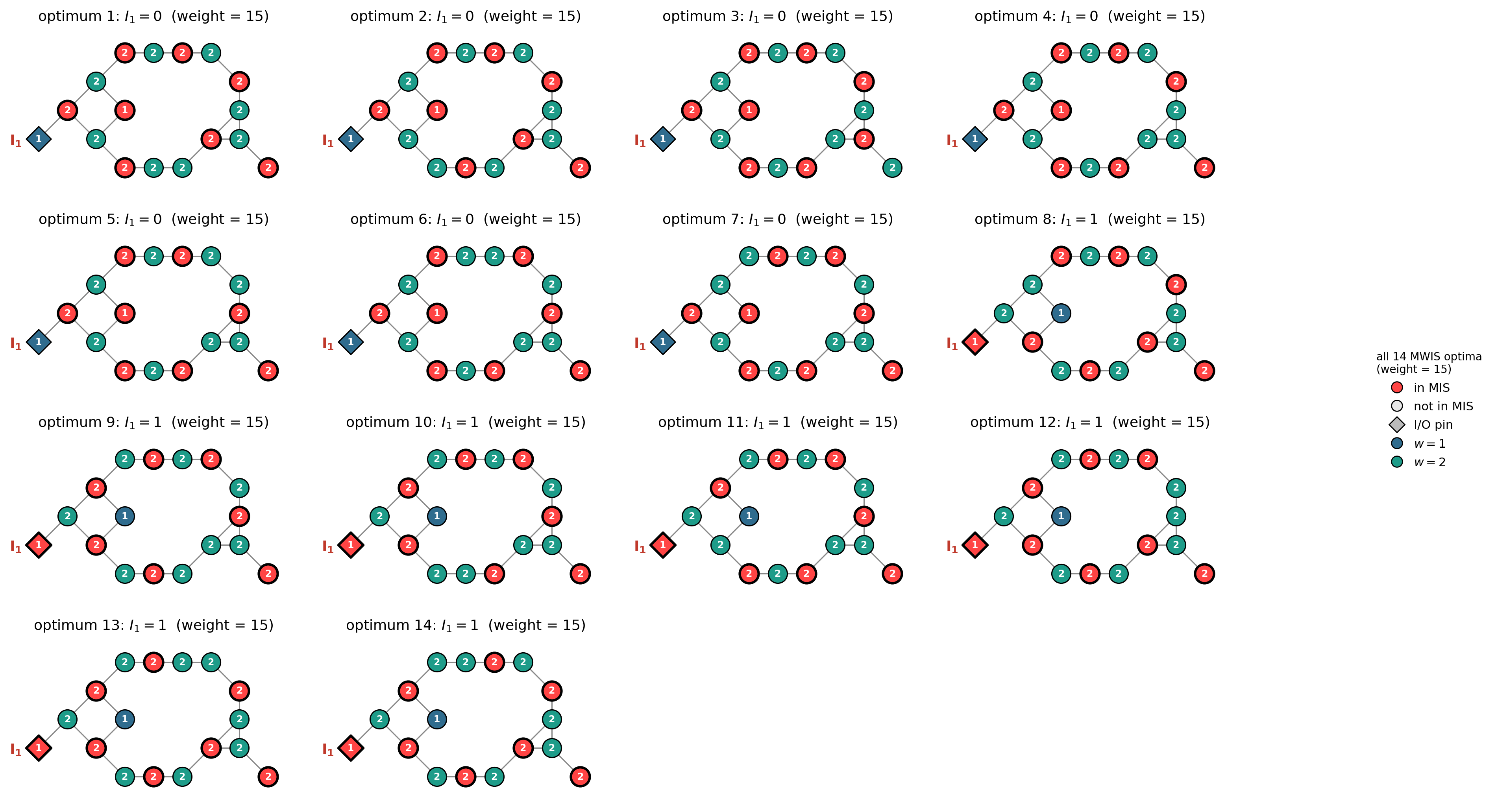}
\caption{The complete set of fourteen degenerate MWIS optima of the
output-branched $17$-atom graph [Fig.~\ref{fig:unsat}(c), bottom], obtained
from the
full graph by asserting $O_{1}=1$ and deleting the weight-$1$ output port
atom together
with the two weight-$2$ atoms it blockades. Drawing conventions as in
Fig.~\ref{fig:sm-unsat-full}; panels are ordered by the input bit, with
seven optima for each of $I_{1}=0$ and $I_{1}=1$. All fourteen optima tie at
the branched maximum
weight $W_{\mathrm{br}}=15$ and occupy eight atoms each; since the
deleted weight-one output port would add at most one unit, the best weight
attainable with $O_{1}=1$ is $16<17=W^{\star}$, the one-unit deficit that
certifies unsatisfiability.}
\label{fig:sm-unsat-branched}
\end{figure}

\FloatBarrier

\end{document}